\documentclass[11pt]{article}

\usepackage[T1]{fontenc}
\usepackage[utf8]{inputenc}
\IfFileExists{lmodern.sty}{\usepackage{lmodern}\usepackage{microtype}}{}
\usepackage[a4paper,margin=1.05in]{geometry}
\usepackage{mathtools,amssymb,amsthm}

\usepackage[dvipsnames]{xcolor}
\usepackage{comment}
\usepackage{enumitem}
\usepackage{hyperref}
\usepackage[capitalise]{cleveref}

\definecolor{linkblue}{HTML}{1F4E79}
\hypersetup{colorlinks=true,linkcolor=linkblue,urlcolor=linkblue,citecolor=linkblue,
  pdftitle={Perfect non-local quantum computation is impossible}}

\newtheorem{theorem}{Theorem}[section]
\newtheorem{lemma}[theorem]{Lemma}
\newtheorem{proposition}[theorem]{Proposition}
\newtheorem{corollary}[theorem]{Corollary}
\newtheorem{definition}[theorem]{Definition}
\newtheorem{fact}[theorem]{Fact}
\theoremstyle{remark}
\newtheorem{remark}[theorem]{Remark}
\crefname{fact}{Fact}{Facts}

\usepackage{thm-restate}
\makeatletter
\@for\thm@env:={lemma,proposition,corollary,definition,fact,remark}\do{%
  \edef\next{\noexpand\AddToHook{env/\thm@env/begin}%
    {\noexpand\crefalias{theorem}{\thm@env}}}\next}
\makeatother

\usepackage[color=Dandelion!50]{todonotes}

\allowdisplaybreaks
\usepackage{tikz}
\usetikzlibrary{positioning, calc, backgrounds, fit}

\title{Perfect non-local quantum computation is impossible}
\date{30 September 2026}

\usepackage{authblk}
\author{Marten Folkertsma}
\author{Dmitry Grinko}
\author{Gina Muuss}
\author{Florian Speelman}
\affil{QuSoft, University of Amsterdam}

\begin{document}
\maketitle

\begin{abstract}
Non-local quantum computation (NLQC) asks two parties to apply a joint operation to their quantum inputs using an entangled resource state and one round of simultaneous quantum communication.
NLQC has applications across quantum information, including attacks on quantum position verification, communication complexity, and quantum gravity.

Every bipartite unitary has an approximate NLQC protocol with finite entanglement, and exact protocols are known for several structured families.
However, it has remained open whether every fixed unitary admits an exact protocol with finite-dimensional resources.
We show that, even for two qubits, Haar-almost every unitary does not admit such a protocol, even when the resource state and local operations are tailored to the target.

The proof first shows that smooth deformations of an exact protocol can only change its target by local unitaries. For a fixed architecture, semialgebraic geometry limits the exact protocols to finitely many connected families, so each architecture reaches only finitely many local-unitary orbits. Taking the union over all architectures, the exactly implementable unitaries form countably many local-unitary orbits. 
We also show that some natural targets do not admit a protocol: the controlled phase $\mathrm{diag}(1,1,1,e^{i\theta})$ has no exact protocol with finite-dimensional resources whenever $e^{i\theta}$ is transcendental, for instance for $\theta = 1$.

Our results also apply to localizable quantum measurements, whose outcome is obtained from local measurements on inputs and a shared entangled resource state without communication.
A rank-one projective measurement in a Haar-random basis is almost surely not localizable with a finite-dimensional resource state and finitely many local outcomes.
\end{abstract}

\newcommand{\ket}[1]{\lvert #1 \rangle}
\newcommand{\bra}[1]{\langle #1 \rvert}
\newcommand{\braket}[2]{\langle #1 \mid #2 \rangle}
\newcommand{\Tr}{\operatorname{Tr}}
\newcommand{\loc}{\mathcal{L}}
\newcommand{\R}{\mathbb{R}}
\newcommand{\ran}{\operatorname{ran}}
\newcommand{\rank}{\operatorname{rank}}
\newcommand{\SWAP}{\mathsf{SWAP}}
\newcommand{\encA}{V_A}
\newcommand{\encB}{V_B}

\newcommand{\decA}{W_A}
\newcommand{\decB}{W_B}

\newcommand{\resvec}{\eta}
\newcommand{\gabvec}{\gamma}

\newcommand{\regA}{A}
\newcommand{\regB}{B}

\newcommand{\resA}{R_A}
\newcommand{\resB}{R_B}

\newcommand{\keptA}{K_A}
\newcommand{\keptB}{K_B}

\newcommand{\msgAPr}{M_{A_{\pre}}}%
\newcommand{\msgBPr}{M_{B_{\pre}}}%

\newcommand{\msgAPo}{M_{A_{\post}}}%
\newcommand{\msgBPo}{M_{B_{\post}}}%

\newcommand{\outA}{A'}
\newcommand{\outB}{B'}

\newcommand{\envA}{G_A}
\newcommand{\envB}{G_B}
\newcommand{\gab}{G}

\newcommand{\strategies}{\mathcal{Z}}
\newcommand{\arch}{\mathbf{\nu}}
\newcommand{\compo}[2]{\mathcal{C}_{#1,#2}}

\newcommand{\append}[1]{E_{#1}}

\newcommand{\Xex}{X_{\mathrm{ex}}}
\newcommand{\pre}{\mathrm{pre}}
\newcommand{\post}{\mathrm{post}}
\newcommand{\Bad}{\mathrm{UnitaryNLQC}_{2n}}
\newcommand{\Orb}{\mathfrak{o}}
\newcommand{\dd}{\frac{d}{dt}}
\newcommand{\upath}{z}
\newcommand{\spath}{s}
\ifdefined\portoff\else\newlength{\portoff}\fi
\setlength{\portoff}{2mm}

\ifdefined\rowsep\else\newlength{\rowsep}\fi
\setlength{\rowsep}{26mm}

\ifdefined\stagesep\else\newlength{\stagesep}\fi
\setlength{\stagesep}{46mm}

\ifdefined\inoutgap\else\newlength{\inoutgap}\fi
\setlength{\inoutgap}{20mm}

\ifdefined\etashift\else\newlength{\etashift}\fi
\setlength{\etashift}{18mm}

\ifdefined\discardstub\else\newlength{\discardstub}\fi
\setlength{\discardstub}{8mm}

\expandafter\ifx\csname ifgarbage\endcsname\relax
	\expandafter\newif\csname ifgarbage\endcsname
\fi
\expandafter\ifx\csname iflabsplit\endcsname\relax
	\expandafter\newif\csname iflabsplit\endcsname
\fi
\expandafter\ifx\csname ifencdec\endcsname\relax
	\expandafter\newif\csname ifencdec\endcsname
\fi
\expandafter\ifx\csname ifchannels\endcsname\relax
	\expandafter\newif\csname ifchannels\endcsname
\fi
\newcommand{\kb}[1]{\ket{#1}\bra{#1}}
\newcommand{\D}{D}
\newcommand{\Loc}{\mathfrak u_{\mathrm{loc}}}
\newcommand{\enc}{\mathsf{Enc}}
\newcommand{\dec}{\mathsf{Dec}}

\newcommand{\pvm}{\Phi}
\newcommand{\OrbM}{\Orb_{\mathrm{m}}}
\newcommand{\BadM}{\mathrm{PVMNLQC}_{2n}}
\newcommand{\strategiesM}{\strategies^{\mathrm{m}}}

\section{Introduction}
Suppose two parties, Alice and Bob, jointly want to perform a quantum operation on a bipartite quantum system, for instance a computation or a measurement whose outcome they both would like to learn.
They could do this by bringing their systems together, or by using several rounds of communication, but what if they are only allowed an entangled resource state and \emph{one round of simultaneous communication}?

This setting is called \emph{non-local quantum computation} (NLQC), and a central question in this topic is: given a channel $\mathcal{N}_{AB}$, what are the resource requirements to implement this channel by an NLQC protocol?
The obstacle here is specifically quantum:
if the inputs were classical, the parties could simply send each other copies of their inputs, after which both could compute the target function; for quantum inputs, the no-cloning theorem rules this out.

\paragraph{Background and applications.}
Besides its foundational interest, NLQC has applications across quantum information.
Protocols for NLQC give attacks on quantum position verification~\cite{KentMunroSpiller2011,BuhrmanEtAl2014}, the resources needed for NLQC are connected to communication complexity~\cite{BuhrmanEtAl2013GardenHose,girish2025magic} and information-theoretic cryptography~\cite{allerstorfer2024relating,asadi2025conditional}, and NLQC protocols appear in the study of holography in quantum gravity~\cite{May2019QuantumTasks,ApelEtAl2024}.
We refer to the recent survey by May~\cite{may2026survey} for an in-depth overview.

The first investigations into protocols of this form came from fundamental physics, through the question of `instantaneous measurement'~\cite{LandauPeierls1931,AharonovAlbert1981,AharonovAlbertVaidman1986,PopescuVaidman1994,Vaidman2003}.
Clark et al.\ constructed protocols for instantaneous non-local measurements with finite average entanglement consumption, although their general exact construction requires an infinite amount of entanglement to be shared initially; they suggest that an exact protocol with finite initial entanglement is unlikely to exist for general measurements~\cite{ClarkEtAl2010}.
These fundamental measurement questions are still being studied in terms of localizable measurements~\cite{PauwelsEtAl2025Classification}.

Many of the results on NLQC were obtained in the context of quantum position verification (QPV)~\cite{KentMunroSpiller2011,BuhrmanEtAl2014}, where the task that attackers have to perform under timing constraints naturally corresponds to NLQC.
There, entanglement lower bounds for NLQC give security guarantees when applied to the task and success criterion tested by the verifiers.
For the types of NLQC tasks that make appealing QPV protocols, such as $f$-routing or $f$-measure, a rich complexity theory is emerging, with reductions between many such tasks~\cite{bluhm2026complexity,bluhm2026equivalence} and connections to information-theoretic cryptography~\cite{allerstorfer2024relating,asadi2025conditional,kawachi2021communication,girish2026private,girish2026comparing}.

\paragraph{Approximate and exact protocols.}
Building on Vaidman's scheme for instantaneous non-local measurements~\cite{Vaidman2003}, Buhrman et al.\ showed that any target operation can be approximately implemented in NLQC, using an amount of entanglement that is doubly exponential in the number of qubits~\cite{BuhrmanEtAl2014}.
Beigi and K\"onig~\cite{BeigiKonig2011}\footnote{Buhrman et al.\ and Beigi and K\"onig used the term \emph{instantaneous non-local quantum computation} (INQC), and independent work by Yu and collaborators speaks of `fast protocols'~\cite{Yu2011,YuGriffithsCohen2012}. In this work, we use the term NLQC.} reduced this to exponential, using port-based teleportation~\cite{IshizakaHiroshima2008}.
With later bounds on port-based teleportation~\cite{ChristandlEtAl2021}, their protocol approximates any bipartite unitary on $n$ qubits per party to diamond-norm error at most $\varepsilon$ using $O(n\,2^{4n}/\varepsilon)$ ebits.
For fixed $n$, optimizing the teleportation resource further gives an
$O(\varepsilon^{-1/2})$ entanglement upper bound as $\varepsilon \to 0$,
with a constant depending on $n$~\cite{ChristandlEtAl2021}.
For controlled unitaries~\cite{Yu2011,YuGriffithsCohen2012} and
two-qubit unitaries~\cite{GonzalesChitambar2020}, protocols using
$O(\log(1/\varepsilon))$ ebits achieve diamond-norm error at most
$\varepsilon$ at fixed input dimensions.
They improve on the inverse-polynomial dependence on $\varepsilon$
of the general port-based teleportation construction.

Together, these protocols show that general operations can be implemented approximately, given enough resources, but for many specific operations even exact protocols are known.
For instance, every unitary that can be written exactly as a Clifford+T circuit can be implemented exactly in NLQC, with a pre-shared resource state consisting of a number of EPR pairs exponential in the T-depth of the circuit~\cite{Speelman2016}.
Exact protocols are also known for controlled-phase gates whose phase is a rational multiple of $2\pi$~\cite{YuGriffithsCohen2012}.
For particular QPV tasks, exact protocols are known as well~\cite{ChakrabortyLeverrier2015}, including examples beyond the Clifford hierarchy~\cite{OlivoEtAl2020,bluhm2026equivalence}.

This leaves a basic question:
\begin{quote}
    \emph{Can every fixed bipartite unitary be implemented exactly by an NLQC protocol using only finite-dimensional resources?}
\end{quote}
This question has remained open even for two-qubit unitaries (see, e.g., the discussion in~\cite[Sec.~VI]{GonzalesChitambar2020}).
In this work, we show that exact implementation with finite-dimensional resources is impossible for Haar-almost every bipartite unitary, even for two qubits.

\paragraph{Classical or quantum communication and memory.}
When studying this type of communication-constrained protocol, one has to choose whether the parties' messages and local memories are classical or quantum.

The most restrictive setting is that of \emph{localizable measurements}, whose entanglement cost and structure are studied in recent work~\cite{PauwelsEtAl2025Classification,PauwelsEtAl2026Pauli,AkibueMiyazaki2026Localization}.
In that setting, each party performs a measurement on its input and its share of the resource, resulting in a fully classical outcome for each party.
Combining these classical outcomes then has to reproduce the outcome distribution of a global measurement on the input state; this is closely related to the NLQC model, but without communication, and with only classical memory after the local measurements.

Protocols that retain quantum memory across a simultaneous exchange of classical messages are described by local operations and simultaneous classical communication (LOSCC)~\cite{GeorgeEtAl2025}, called local operations and broadcast communication (LOBC) in~\cite{GonzalesChitambar2020}.
Our model is the more general one, which additionally allows quantum messages, corresponding to local operations and simultaneous quantum communication (LOSQC)~\cite{GeorgeEtAl2025}.\footnote{For the question of which exact operations can be implemented with finite-dimensional resources, classical and quantum simultaneous messages are equivalent: the quantum messages can be teleported using additional finite entanglement, with the teleportation outcomes exchanged simultaneously. The models can differ when deriving quantitative resource bounds.}
Since we show impossibility in the most general of these models, our results also hold with classical messages, both for exact unitary implementation and for rank-one projective measurements whose outcome both parties must learn, and they apply to localizable measurements as well (\cref{cor:localizable}).

\subsection{Our results}
For fixed input and output dimensions, an \emph{architecture} specifies the dimensions of the resource registers, the exchanged messages, the retained local workspaces, and the discarded environment registers in a purified protocol.\footnote{For the finite-orbit conclusion, it suffices to fix only the Schmidt rank of the shared resource state and the dimensions of the exchanged messages; the workspaces and discarded registers can then be compressed (\cref{rem:compression}).}
Throughout, a protocol with \emph{finite-dimensional resources} is one in which all of these registers are finite-dimensional, and our impossibility results are stated for such protocols.
Two unitaries are \emph{local-unitary equivalent} if one can be obtained from the other by applying independent local unitaries on Alice's and Bob's systems before and after it.

Our main result is the following structural restriction on exact NLQC.

\begin{theorem}[Informal version of \cref{cor:archOnlyFiniteOrbits,thm:badHaarNull}]
	For each finite architecture, only finitely many unitaries can be implemented exactly, up to local-unitary equivalence.
	Consequently, Haar-almost every bipartite unitary, even on two qubits, has no exact NLQC protocol with finite-dimensional resources.
\end{theorem}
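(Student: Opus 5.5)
The plan follows the three-step outline sketched in the abstract.

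\textbf{Parametrize.} Fix an architecture $\arch$. A purified protocol consists of a resource vector $\resvec$, Alice's and Bob's first-round isometries $\encA,\encB$, and their second-round isometries $\decA,\decB$. These are finitely many complex matrices with real and imaginary parts as coordinates. Exactness says that the composed isometry equals $U\otimes\ket{\gabvec}$ for some target unitary $U$ on $\regA\regB$ and some fixed garbage vector $\gabvec$ on the discarded registers. I will phrase this as: the protocol's isometry maps $\ket{\psi}_{\regA\regB}$ to $(U\ket{\psi})\otimes\ket{\gabvec}$ for all inputs. This is a polynomial system in the protocol data together with $(U,\gabvec)$. Hence the set $\Xex\subseteq\strategies\times U(2^{2n})$ of exact protocols with their targets is real semialgebraic, and the target map $\Xex\to U(2^{2n})$ is polynomial.

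\textbf{Rigidity.} This is the key step and the main obstacle. Take a smooth path $t\mapsto z(t)$ in $\Xex$ with targets $U(t)$. I want to show that $U(t)=(a(t)\otimes b(t))\,U(0)\,(c(t)\otimes d(t))$ for local unitaries $a,b,c,d$. Equivalently, $\dot U U^{\dagger}$ must lie in $\Loc + U\Loc U^{\dagger}$, where $\Loc=\mathfrak u(A)\oplus\mathfrak u(B)$.

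I would differentiate the exactness identity. The derivative of each local isometry, say $\dot\encA$, is a local perturbation on Alice's side. Using exactness, I push it through the rest of the circuit, in the spirit of a no-signalling or decoupling argument.
\begin{itemize}
\item A first-round perturbation on Alice's input and resource acts before any communication. Combined with the fact that the output garbage is decoupled from the input, it should contribute a term of the form $U X U^{\dagger}$ with $X$ local on $A$, plus terms absorbed into $\dot\gabvec$.
\item Second-round perturbations act on $U\ket{\psi}\otimes\ket{\gabvec}$ and contribute a local term on the output directly.
\item Perturbing $\resvec$ splits as $\dot\resvec=X_{\resA}\resvec+Y_{\resB}\resvec$, which is possible when $\resvec$ has full Schmidt rank (restrict to that stratum). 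This reduces to the two previous cases.
\end{itemize}
The delicate points are making the "pushing through" precise and handling the components that are orthogonal to the exact-protocol manifold. I would use the fact that exactness forces $\langle\gabvec|\cdot|\gabvec\rangle$-type compressions to be unitary. I would then integrate the resulting ODE: the equation $\dot U=(\text{local})U+U(\text{local})$ has solutions staying in the orbit $\Orb(U(0))=\loc U(0)\loc$.

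\textbf{Finiteness and measure.} By semialgebraic geometry (Hardt triviality or the finiteness of connected components), $\Xex$ has finitely many connected components. Each component is semialgebraically path connected by piecewise smooth (even piecewise Nash) paths. Applying the rigidity step along each piece shows that each component maps into a single local-unitary orbit. This gives finitely many orbits per architecture, which is \cref{cor:archOnlyFiniteOrbits}.

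Architectures are countable, so the exactly implementable unitaries lie in a countable union of orbits $\loc U\loc$. Each orbit is the image of the compact group $\loc\times\loc$ under a smooth map. For two qubits that group has dimension $2\cdot 2\cdot 3+1=13<16=\dim U(4)$, and in general $2\cdot 2(2^{2n}\!\cdot\!\tfrac12)$-type counting likewise gives dimension below $\dim U(2^{2n})$. So each orbit is a lower-dimensional submanifold image and is Haar-null. A countable union of null sets is null, which gives \cref{thm:badHaarNull}.
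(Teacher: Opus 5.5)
\paragraph{Assessment.} Your overall plan matches the paper's. You carry $U$ and $\gabvec$ as variables in a semialgebraic parametrization, show the local-unitary orbit is invariant along smooth paths, use finiteness of connected components, and take a countable union of Haar-null orbits. The finiteness and measure-counting parts are essentially fine.

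\paragraph{The gap.} The step you yourself call the main obstacle is not carried out. The claim $\dot U\in\Loc\,U+U\,\Loc$ is only asserted (``should contribute''), and the term-by-term pushing-through cannot be done by applying the exactness identity directly. For instance, the $\dot\encA$ term feeds the vector $(\dot\encA\otimes\encB)(\psi\otimes\resvec)$ into the second round. This vector is in general not of the form $(\encA\otimes\encB)(\psi'\otimes\resvec)$, so exactness says nothing directly about how $\decA\otimes\decB$, followed by projection onto $\gabvec$, acts on it. Your remark that $\gabvec$-compressions are unitary points in the right direction, but what is needed is a full-space identity.

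\paragraph{The missing idea.} It is the encoder/decoder factorization. Set $\enc=\Xex(\encA\otimes\encB)\append{\resvec}$ and $\dec=(\decA\otimes\decB)^\dagger\append{\gabvec}$. Applying $(\decA\otimes\decB)^\dagger$ to the exactness identity gives $\enc=\dec\,U$. Hence $U=\dec^\dagger\enc$, and $\dec^\dagger=U\enc^\dagger$ as operators on the \emph{whole} post-exchange space. Both maps are isometries; for $\dec$ this holds because $\ran\append{\gabvec}\subseteq\ran(\decA\otimes\decB)$, as $U$ is onto.

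Differentiating $U=\dec^\dagger\enc$ then gives exactly $\dot U=-bU+Ua$, with explicit continuous generators $a=\enc^\dagger\dot\enc$ and $b=\dec^\dagger\dot\dec$. Because these identities hold at every point of the path, no individual variation needs to preserve exactness. In $a$ only $\encA^\dagger\dot\encA$ and $\encB^\dagger\dot\encB$ appear, compressed through $\resvec$, so the parts of $\dot\encA,\dot\encB$ leaving the ranges of the isometries drop out, and \cref{lem:compression_state} puts $a$ in $\Loc$. The same works for $b$, compressed through $\gabvec$, after replacing $\decA\decA^\dagger$ and $\decB\decB^\dagger$ by the identity using the range inclusion.

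\paragraph{Smaller points.}
\begin{enumerate}
\item The $\dot\resvec$ term needs no Schmidt-rank hypothesis. Since $(\encA\otimes\encB)^\dagger(\encA\otimes\encB)=I$, it contributes only $\braket{\resvec}{\dot\resvec}\,I$ to $a$. Your restriction to a full-rank stratum is therefore unnecessary. As stated, it would also leave non-full-rank protocols uncovered unless you add a reduction, e.g.\ compressing the resource registers to their supports, which changes the architecture.
\item Your parametrization takes for granted that exactness means $F=\append{\gabvec}U$ with an input-independent $\gabvec$. This needs the short argument of \cref{lem:rigidity}: a pure marginal forces a product output, and linearity on a basis and its uniform superposition forces the garbage vector to be input-independent.
\item Chaining the smooth pieces of a semialgebraic path uses closedness of the orbits together with continuity of $U(t)$ at the finitely many breakpoints.
\end{enumerate}
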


We also obtain the following results.
\begin{itemize}
	\item \emph{Explicit examples} (\cref{thm:explicit-gate}).
	      For the controlled phase $C_\theta = \mathrm{diag}(1,1,1,e^{i\theta})$, an exact protocol with finite-dimensional resources can only exist if $e^{i\theta}$ is algebraic.
	      Therefore, the gate $C_1$, and every unitary that is local-unitary equivalent to it, has no such protocol; in contrast, $C_\theta$ does have an exact protocol with finite-dimensional resources whenever $\theta$ is a rational multiple of $2\pi$~\cite{YuGriffithsCohen2012}.
	\item \emph{Measurements} (\cref{thm:measHaarNull,cor:localizable}).
	      The same structure holds for rank-one projective measurements whose outcome both parties must learn: a measurement in a Haar-random basis almost surely has no exact protocol with finite-dimensional resources, and in particular is almost surely not exactly localizable.
\end{itemize}

\paragraph{Measurements and localizability.}
Our proof can also be adapted to rank-one projective measurements (PVMs), i.e., measurements in a full orthonormal basis of the joint input space.
In the corresponding NLQC task, \emph{both} parties must output the same outcome after the round of communication, distributed according to the Born rule for every input state.
(If only one party had to learn the outcome, the other could simply send over its input, and every measurement would be possible.)
For each finite architecture, the rank-one PVMs that admit an exact protocol form finitely many orbits under a common local change of basis (\cref{thm:measHaarNull}).
Localizable measurements are a special case: an exact localization becomes an exact one-round protocol once the parties exchange their local outcomes and both compute the combined outcome.
This gives \cref{cor:localizable}, where the localization may use any finite-dimensional shared state and any local measurements with finitely many outcomes.

\paragraph{Proof ingredients.}
As a first step of the proof, we purify a general NLQC protocol, and explicitly consider the discarded environment registers.
If the protocol is an exact implementation of a unitary, the discarded environment registers are not allowed to depend on the input, since otherwise some superpositions of inputs would become entangled with the environment, disturbing the output: for every NLQC strategy implementing a unitary exactly, the environment therefore ends up in a fixed pure state.
This extra rigidity gives enough structure to study what happens to the implemented unitary when the resource state and Alice's and Bob's actions vary.
Differentiating along a smooth path of exact strategies gives
\begin{equation*}
	\dot U = -b\, U + U a ,
\end{equation*}
where $a$ and $b$ are local generators, i.e., of the form $a_A \otimes I + I \otimes a_B$ with $a_A$ and $a_B$ anti-Hermitian.
Integrating this identity shows that a differentiable path along exact protocols can only change its target by local unitary pre- and post-processing, so the target stays within a single local-unitary orbit.
For a fixed architecture, the strategies form a semialgebraic set, which has finitely many connected components, and any two strategies in the same component are joined by a piecewise differentiable path; hence each component reaches only a single orbit.
Each such orbit has Haar measure zero whenever each party holds at least one qubit.
Taking a countable union over all finite architectures at fixed input dimensions then shows that the unitaries that are exactly implementable by finite-dimensional NLQC protocols have Haar measure zero.
For measurements, the same argument goes through, except that the discarded state may depend on the outcome, and the generator at the output becomes diagonal in the outcome basis, since each party holds its own copy of the outcome.
For the explicit examples, we additionally use the fact that the polynomial equations defining the strategies have integer coefficients, so that local-unitary-invariant polynomials with rational coefficients can only take algebraic values on exactly implementable unitaries.

\subsection{Related work}\label{sec:related}
Given the many possible applications of NLQC, significant effort has been expended to find limits to the power of NLQC, that is, to find \emph{lower bounds} on the resource usage of NLQC protocols.
Because of the importance of the study of NLQC to the security of QPV, many of these lower bounds were phrased in terms of the security of QPV protocols.\footnote{There is a difference here in how success is usually defined: in the study of NLQC as a primitive, it is natural to consider the diamond norm, i.e., NLQC has to have an error of at most $\varepsilon$ on the worst-case input. In contrast, for the purposes of cryptographic security, the lower bounds are usually for \emph{average-case} error over a defined distribution of inputs.
	For exact implementation, average-case and worst-case correctness coincide when the input distribution has full support. At nonzero error, however, care is needed when translating between these notions.}
Such results establish security against resource-bounded attacks for several families of QPV protocols, including settings with noise and loss~\cite{BeigiKonig2011,TFKW13,bluhm2022single,asadi2025linear,may2026loss}.
Entanglement and rank lower bounds have also been obtained directly for unitary implementation and $f$-routing~\cite{asadi2025rank,CleveMay2026,Bogner2026}.
Using the geometry of Banach spaces, exponential lower bounds on the required resource dimension have been shown for restricted classes of strategies~\cite{JungeEtAl2022,MorenoCuadradoEtAl2026}.

The closest related work to ours is the proof of impossibility of perfect cheating for single-qubit position verification by Miller and Alnawakhtha~\cite{MillerAlnawakhtha2026}. We therefore compare the two results in more detail.
They study a QPV protocol, where from one side a qubit is sent, and from the other side an analog register specifying its measurement basis.
Using tools from real algebraic geometry, they show that strategies manipulating a finite number of qubits cannot perfectly perform this task.
Their proof technique is similar in structure to ours: rigidity forces the measurement basis to remain constant along smooth paths of solutions, and algebraic component finiteness then bounds the number of bases supported by a fixed finite-dimensional processor.
This is an important advance in the study of QPV and NLQC. However, their task involves an infinitely precise analog classical input.
Such a classical register would require an infinite-dimensional Hilbert space to represent its distinct values as perfectly distinguishable states.
Their result therefore left open whether every fixed unitary on finite-dimensional quantum inputs admits its own exact NLQC protocol with finite-dimensional resources.
In our setting, all inputs are finite-dimensional quantum systems, and the target is fixed and fully known when the parties choose their resource state and local operations; nevertheless, exact implementation with finite-dimensional resources is impossible for Haar-almost every target, even when each party's input is a single qubit.

Techniques of this type have also been used elsewhere in quantum information.
For example, infinitesimal generators underlie the Eastin--Knill theorem on transversal encoded gates~\cite{EastinKnill2009} and its approximate versions~\cite{FaistEtAl2020}, and dimension arguments for the semialgebraic sets of unitaries reachable by a fixed circuit architecture were used to show linear growth of quantum circuit complexity~\cite{HaferkampEtAl2022}.

\subsection{Organization of the paper}
We present preliminaries in \Cref{sec:prelims} on the mathematical background required for the proof. 
\Cref{sec:model} defines the NLQC model and the set of strategies that we study, and states our main theorem, \cref{thm:badHaarNull}. \Cref{sec:proof} contains the main proof, starting with a proof overview in \cref{sec:technicalsum}.
In 
\Cref{sec:explicit} we extend the proof to show finite-dimensional resource impossibility of the controlled-phase example. Finally, \cref{sec:measurement} treats rank-one projective measurements and localizable measurements, with some technical parts of the proof deferred to \cref{app:measurement}.
We conclude with a discussion and open questions in \cref{sec:discussion}.

\section{Preliminaries}\label{sec:prelims}
This section fixes notation and collects in one place the standard results the proof relies on.
All Hilbert spaces are finite-dimensional.
Tensor factors are labeled by registers, and different orderings of the factors are identified through the reordering unitaries; $I_{H}$ denotes the identity on register $H$ wherever it sits.
For a vector $v$ in a Hilbert space $E$, $\append{v} : \mathbb{C} \to E$, $\append{v} \psi = \psi \otimes v$ is the insertion map.
It is linear in $v$, satisfies $\append{v}^\dagger \append{v'} = \langle v | v' \rangle I_H$, and is an isometry when $v$ is a unit vector. 

The domain $H$ is always clear from the context.
We define the Haar measure as in \cite{mele2024introduction}:
\begin{definition}[Haar measure]
	The Haar measure on the unitary group $\mathrm{U}(d)$ is the unique probability measure $\mu$ that is both left and right invariant over the group $\mathrm{U}(d)$, i.e., for all integrable functions $f$ and for all $V \in \mathrm{U}(d)$, we have:
	\begin{align}
		\int_{\mathrm{U}(d)} f\left(U\right)d\mu(U)=\int_{\mathrm{U}(d)} f\left(UV\right)d\mu(U)=\int_{\mathrm{U}(d)} f\left(VU\right)d\mu(U).
	\end{align}
\end{definition}
\noindent A set $S \subseteq \mathrm{U}(d)$ is \emph{Haar-null} if it has Haar measure zero, i.e., $\mu(S) = 0$.

We will need the following elementary lemma that allows us to pull a state that is first inserted and then projected onto into an anti-Hermitian operator:
\begin{lemma}\label{lem:compression_state}
	Given Hilbert spaces $H_{A_l}, H_{A_s},H_{B_l}, H_{B_s}$, an operator $c$ acting on $H_{A_l}\otimes H_{A_s}$, and a state $\ket{\eta} \in H_{A_s} \otimes H_{B_s}$, then the following holds:
	\begin{equation}
		I_{{A_l}}\otimes\bra{\eta} \otimes I_{B_l} (c \otimes I_{B_s} \otimes I_{B_l})  I_{A_l}\otimes\ket{\eta} \otimes I_{B_l} = \hat{c} \otimes I_{B_l},
	\end{equation}
	where $\hat{c}$ acts only on $H_{A_l}$ and if $c$ is anti-Hermitian then so is $\hat{c}$ and the entries of $\hat{c}$ are polynomials in the entries of $c$ and $\ket{\eta}$.
\end{lemma}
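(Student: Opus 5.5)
We will prove \cref{lem:compression_state} by an explicit computation in a product basis. Fix orthonormal bases $\{\ket{i}\}$ of $H_{A_s}$ and $\{\ket{j}\}$ of $H_{B_s}$, and expand $\ket{\eta} = \sum_{i,j} \eta_{ij}\ket{i}\ket{j}$. Since $c \otimes I_{B_s}\otimes I_{B_l}$ acts trivially on $H_{B_s}\otimes H_{B_l}$, we can write $c = \sum_{i,i'} c_{ii'} \otimes \ket{i}\bra{i'}$. Here the blocks $c_{ii'}$ are the operators $(I_{A_l}\otimes\bra{i})\,c\,(I_{A_l}\otimes\ket{i'})$ on $H_{A_l}$.

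Substituting both expansions into the left-hand side and contracting, the $H_{A_s}$ factors give $\braket{i}{k}\braket{i'}{k'}$-type deltas. The $H_{B_s}$ factors give $\braket{j}{j'} = \delta_{jj'}$, because the operator is the identity there. The result is
\begin{equation*}
	\hat c = \sum_{i,i'} \Bigl(\sum_j \overline{\eta_{ij}}\,\eta_{i'j}\Bigr)\, c_{ii'} = \Tr_{A_s}\bigl[c\,(I_{A_l}\otimes \rho_{A_s}^{T})\bigr]\text{-type expression},
\end{equation*}
tensored with $I_{B_l}$. Here $\rho_{A_s} = \Tr_{B_s}\kb{\eta}$ is the reduced state, though we only need the explicit sum. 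The identity factor $I_{B_l}$ passes through untouched since nothing acts on $B_l$. This establishes the claimed form $\hat c\otimes I_{B_l}$ with $\hat c$ acting on $H_{A_l}$ only.

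Polynomiality is then immediate. Each matrix entry of $\hat c$ is $\sum_{i,i',j}\overline{\eta_{ij}}\eta_{i'j}(c_{ii'})_{kl}$, a polynomial in the entries of $c$ and in the real and imaginary parts of the entries of $\ket{\eta}$. This is the sense of ``polynomial'' relevant for the later semialgebraic arguments.

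For anti-Hermiticity we avoid coordinates. Write $E = I_{A_l}\otimes\ket{\eta}\otimes I_{B_l}$, so the left-hand side is $E^\dagger (c\otimes I) E$. Then
\begin{equation*}
	\bigl(E^\dagger (c\otimes I)E\bigr)^\dagger = E^\dagger (c^\dagger\otimes I) E = -E^\dagger (c\otimes I) E.
\end{equation*}
Hence $\hat c\otimes I_{B_l}$ is anti-Hermitian, and therefore so is $\hat c$, since $(\hat c\otimes I)^\dagger = \hat c^\dagger\otimes I$ and tensoring with $I$ is injective. We expect no real obstacle. The only care needed is the bookkeeping of register orderings, which the paper's convention of identifying reorderings handles. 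Note also that $\ket{\eta}$ need not be normalized for the argument.
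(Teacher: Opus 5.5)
Your proof is correct and follows the paper's argument: the anti-Hermiticity step via $(E^\dagger(c\otimes I)E)^\dagger = E^\dagger(c^\dagger\otimes I)E = -E^\dagger(c\otimes I)E$ is exactly the paper's. Your explicit formula $\hat c=\sum_{i,i'}\bigl(\sum_j\overline{\eta_{ij}}\,\eta_{i'j}\bigr)c_{ii'}$ makes rigorous the paper's one-line claims that $\hat c$ acts only on $H_{A_l}$ and depends polynomially (in real and imaginary parts) on $c$ and $\eta$. One small slip in your aside, which your proof does not depend on: that coefficient equals $(\rho_{A_s})_{i'i}$, so the closed form is $\Tr_{A_s}[c\,(I_{A_l}\otimes\rho_{A_s})]$ with no transpose.
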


\begin{proof}
	$\hat{c}$ only acts on $H_{A_l}$ due to the fact that all action on $H_{B_s}$ is projected away by $\ket{\eta}$. Additionally, if $c$ is anti-Hermitian, then $ (I_{{A_l}}\otimes\bra{\eta} \otimes I_{B_l} (c \otimes I_{B_s} \otimes I_{B_l})  I_{A_l}\otimes\ket{\eta} \otimes I_{B_l})^\dagger = I_{{A_l}}\otimes\bra{\eta} \otimes I_{B_l} (c^\dagger \otimes I_{B_s} \otimes I_{B_l})  I_{A_l}\otimes\ket{\eta} \otimes I_{B_l} = -  I_{{A_l}}\otimes\bra{\eta} \otimes I_{B_l} (c \otimes I_{B_s} \otimes I_{B_l})  I_{A_l}\otimes\ket{\eta} \otimes I_{B_l} = -\hat{c} \otimes I_{B_l}$. The relation between $c$ $\hat{c}$ and $\ket{\eta}$ is given by $I\otimes \bra{\eta} (c \otimes I)I\otimes \ket{\eta} = \hat{c}$ therefore any parameter of $\hat{c}$ depends at most quadratically on the parameters of $\ket{\eta}$ and $c$.
\end{proof}

\subsection{Local-unitary orbits}
We define a local orbit of a unitary $U$ to be all unitary matrices that are local-unitary equivalent to $U$. More precisely:

\begin{definition}[Orbit]\label{def:orbit}
	The \emph{orbit} or \emph{local-unitary equivalence class} of $U \in \mathrm{U}(2^{2n})$ is the set:
	\begin{equation}
		\Orb(U) = \{ LUR : L, R \in \{ v_A \otimes v_B : v_A, v_B \in \mathrm{U}(2^n) \} \}
	\end{equation}
\end{definition}

\begin{lemma}\label{lem:orbitHaarNull}
	Each $\Orb(U)$ is Haar-null.
\end{lemma}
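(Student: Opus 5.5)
The plan is to show that $\Orb(U)$ is the image of a smooth map from a manifold whose dimension is strictly smaller than $\dim \mathrm{U}(2^{2n}) = 2^{4n}$. Such an image is Lebesgue-null in any chart, and the Haar measure is given by a smooth positive density in charts, so it would be Haar-null.

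\textbf{The map.} Define
\[
	F : \mathrm{U}(2^n)^4 \to \mathrm{U}(2^{2n}), \qquad F(v_A, v_B, w_A, w_B) = (v_A \otimes v_B)\, U\, (w_A \otimes w_B).
\]
By \cref{def:orbit}, $\Orb(U)$ is exactly the image of $F$. The map $F$ is polynomial in the matrix entries, hence smooth. Its domain is a compact manifold of real dimension $4\cdot 2^{2n} = 2^{2n+2}$.

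\textbf{Sharpening the dimension count.} A global phase can be moved between the two local factors, and $v_A\otimes v_B$ is unchanged under $(v_A,v_B)\mapsto(e^{i\phi}v_A,e^{-i\phi}v_B)$. So $F$ factors through a quotient of dimension $2^{2n+2}-2$. This refinement is not needed: it already suffices that $2^{2n+2} < 2^{4n}$ for all $n \ge 2$.

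\textbf{The case $n=1$.} Here the crude count gives $16 = 16$, so the sharpening is essential. The set $\{v_A\otimes v_B\}$ is the image of $\mathrm{U}(2)\times\mathrm{U}(2)$ modulo the phase redundancy, a manifold of dimension $4+4-1=7$. Then $\Orb(U)$ is the image of a smooth map from the $14$-dimensional manifold $(\mathrm{U}(2)\times\mathrm{U}(2)/\mathrm{U}(1))^2$, and $14<16$. For general $n$, the analogous count is $2(2\cdot 4^n-1) = 4^{n+1}-2 < 4^{2n}$, which holds for every $n\ge 1$.

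To make the quotient rigorous, I will instead take the domain $\mathrm{U}(1)\times \mathrm{SU}(2^n)^4$, with
\[
	(e^{i\phi},v_A,v_B,w_A,w_B)\mapsto e^{i\phi}(v_A\otimes v_B)U(w_A\otimes w_B).
\]
This map is surjective onto $\Orb(U)$, since every unitary equals a phase times a special unitary. Its domain has dimension $1+4(4^n-1) = 4^{n+1}-3 < 4^{2n}$ for every $n\ge1$.

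\textbf{Main step: smooth images of lower-dimensional manifolds are null.} This is the main obstacle, though it is standard via the easy case of Sard's theorem. If $f:M\to N$ is smooth with $\dim M<\dim N$, then every point of $M$ is critical, so $f(M)$ has measure zero in $N$. The measure-zero notion used here is the one defined in charts, which is well defined on manifolds.

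Alternatively, one can argue directly. Cover the compact domain by finitely many charts. A smooth map is locally Lipschitz, and a Lipschitz image of a cube of dimension $m<N$ can be covered by $k^m$ balls of radius $O(1/k)$. Their total $N$-volume is $O(k^{m-N})\to 0$.

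\textbf{Transfer to Haar measure.} Finally, $\mathrm{U}(d)$ is a Lie group, and its Haar measure is a left-invariant Riemannian volume form. In every chart it has a continuous density with respect to Lebesgue measure, so chart-null sets are Haar-null. This gives $\mu(\Orb(U))=0$.
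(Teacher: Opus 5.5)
Your proof is correct and follows the paper's argument: both parametrize $\Orb(U)$ by four local unitaries with the redundant global phase removed, giving $4\cdot 4^n-3<16^n$ real parameters, and conclude that the lower-dimensional image is Haar-null. You also justify that last step (via Sard's theorem or a Lipschitz covering, together with the smooth density of Haar measure in charts), which the paper only asserts, and you correctly note that removing the phase is genuinely needed for $n=1$, where the crude count gives $16=16$.
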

Given unitary $U$, any element in $\Orb(U)$ can be described by four unitaries of $4^n$ parameters each.
We see that we can pull out a global phase to only one of the unitaries, so we are left with $4\cdot4^n - 3$ real dimensions.
We can compare this to the dimensionality of a general unitary on $2n$ qubits, which is $16^n$. $4\cdot4^n - 3 < 16^n$, therefore, the lower dimensional space has volume zero in the space of all unitaries over $2n$ qubits.
This implies that it must have Haar measure zero.\qed

\subsection{Dilations of channels}

\begin{fact}[Stinespring dilation, see {\cite[Section~2.2]{Watrous2018}}]\label{fact:stinespring}
	For every channel $\Phi$ taking states on $H$ to states on $K$ there are a finite-dimensional register $G$ and an isometry $T : H \to K \otimes G$ with $\Phi(\rho) = \Tr_G[T \rho T^\dagger]$ for all $\rho$.
\end{fact}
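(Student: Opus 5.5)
The plan is to go through the Kraus representation obtained from the Choi matrix, and then assemble the Kraus operators into a single isometry. Let $d_H = \dim H$, $d_K = \dim K$, and fix an orthonormal basis $\{\ket{i}\}$ of $H$. First I form the Choi operator
\begin{equation}
J(\Phi) = \sum_{i,j=1}^{d_H} \ket{i}\bra{j} \otimes \Phi(\ket{i}\bra{j}) \in \mathrm{L}(H \otimes K).
\end{equation}
This equals $(\mathrm{id}_H \otimes \Phi)$ applied to the unnormalized maximally entangled projector $\ket{\Omega}\bra{\Omega}$, where $\ket{\Omega} = \sum_i \ket{i}\otimes\ket{i}$. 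Complete positivity of $\Phi$ therefore gives $J(\Phi) \succeq 0$.

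Next I take a spectral decomposition $J(\Phi) = \sum_{k=1}^{r} \ket{w_k}\bra{w_k}$, with $r = \rank J(\Phi) \le d_H d_K$ and the eigenvalues absorbed into the vectors. From each $\ket{w_k}$ I define an operator $K_k : H \to K$ by
\begin{equation}
K_k \ket{i} = (\bra{i} \otimes I_K)\ket{w_k}.
\end{equation}
This is the inverse of vectorization, so $\ket{w_k} = (I_H \otimes K_k)\ket{\Omega}$. Comparing matrix blocks then gives
\begin{equation}
\Phi(\ket{i}\bra{j}) = (\bra{i}\otimes I_K)\, J(\Phi)\, (\ket{j}\otimes I_K) = \sum_k K_k \ket{i}\bra{j} K_k^\dagger .
\end{equation}
By linearity, $\Phi(\rho) = \sum_k K_k \rho K_k^\dagger$ for all $\rho$. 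Trace preservation, $\Tr \Phi(\ket{i}\bra{j}) = \delta_{ij}$ for all $i,j$, translates into $\sum_k K_k^\dagger K_k = I_H$. I expect this step, the Choi-to-Kraus identification and in particular getting the index conventions of the reshaping right, to be the only real obstacle.

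Finally I set $G = \mathbb{C}^r$ with orthonormal basis $\{\ket{k}\}$ and define
\begin{equation}
T\psi = \sum_k K_k\psi \otimes \ket{k}.
\end{equation}
Then $T^\dagger T = \sum_k K_k^\dagger K_k = I_H$, so $T$ is an isometry. Taking the partial trace over $G$ kills the cross terms $k \neq k'$, which gives $\Tr_G[T\rho T^\dagger] = \sum_k K_k\rho K_k^\dagger = \Phi(\rho)$. Since $r \le d_H d_K$, the register $G$ is finite-dimensional, which proves \cref{fact:stinespring}.
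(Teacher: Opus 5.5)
Your proof is correct. The reshaping conventions check out: $(I_H\otimes K_k)\ket{\Omega}=\ket{w_k}$ and $(\bra{i}\otimes I_K)\,J(\Phi)\,(\ket{j}\otimes I_K)=\sum_k K_k\ket{i}\bra{j}K_k^\dagger$. Trace preservation then gives $\sum_k K_k^\dagger K_k=I_H$, so $T$ is an isometry with $\Tr_G[T\rho T^\dagger]=\Phi(\rho)$ and $\dim G=\rank J(\Phi)\le d_Hd_K$.

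The paper gives no proof of its own and only cites Watrous, Section~2.2. There the standard argument is this same chain: Choi operator, then Kraus operators from its spectral decomposition, then the Stinespring isometry obtained by stacking the Kraus operators.
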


\subsection{Semialgebraic sets}

For a fixed architecture, exact NLQC protocols are naturally described by a system of polynomial equations.
Indeed, after the Stinespring dilation (Fact~\ref{fact:stinespring}), a protocol together with the unitary $U$ it implements is a finite list of complex matrices and vectors.
Every constraint on this list is a polynomial equation in the real and imaginary parts of the entries: $U$ is unitary, the local operations are isometries, the states are normalized, and the protocol implements $U$.
The strategies of a fixed architecture $\arch$ therefore form a real algebraic set $\strategies_\arch \subseteq \R^{D}$ (Definition~\ref{def:strategy}).
The proof of Theorem~\ref{thm:badHaarNull} uses two properties of this set.
First, $\strategies_\arch$ has only finitely many connected components (Fact~\ref{fact:components}).
This finiteness drives the counting argument: all strategies in one component implement unitaries from a single orbit $\Orb(U)$, so each architecture reaches only finitely many orbits (\cref{cor:archOnlyFiniteOrbits}).
Second, any two points of a connected component are joined by a path that is continuously differentiable away from finitely many points (Facts~\ref{fact:paths} and~\ref{fact:piecewise}).
We differentiate the implemented unitary along such a path to prove the single-orbit claim in Section~\ref{sec:local_unitary_equivalence}.

Both properties are theorems about the larger class of semialgebraic sets, whose definition also allows polynomial inequalities.
We need this larger class because connected components and paths are in general not described by equations alone.
For example, the hyperbola $\{xy = 1\} \subset \R^2$ is real algebraic, but neither of its two connected components $\{xy = 1,\ x > 0\}$ and $\{xy = 1,\ x < 0\}$ is.
Likewise, the interval $[0,1]$ on which a path is defined is not real algebraic.
We therefore first recall semialgebraic sets and maps, and then state the facts we use.
Besides the facts above, we need the Tarski--Seidenberg theorem (Fact~\ref{fact:TS}), which lets us apply Fact~\ref{fact:piecewise} to each coordinate of a path.

A subset of $\mathbb{R}^N$ is \emph{real algebraic} if it is the common zero set of finitely many real polynomials, and \emph{semialgebraic} if it is a finite union of sets $\{P = 0, Q_1 > 0, \dots, Q_k > 0\}$ with real polynomials $P, Q_i$.
Real algebraic sets are semialgebraic, since $\{P_1 = \dots = P_r = 0\} = \{\sum_i P_i^2 = 0\}$, and semialgebraic sets are closed under finite unions and intersections.
A map $c : S \to \mathbb{R}^M$, $S \subset \mathbb{R}^N$, is \emph{semialgebraic} if its graph is semialgebraic.

\begin{fact}[Tarski--Seidenberg, {\cite[Theorem~2.2.1]{bochnak2013real}}]\label{fact:TS}
	The image of a semialgebraic subset of $\mathbb{R}^N \times \mathbb{R}^M$ under the projection to $\mathbb{R}^N$ is semialgebraic. In particular the coordinate functions $c_i$ of a semialgebraic map $c$ are semialgebraic, the graph of $c_i$ being the image of the graph of $c$ under a coordinate projection (after permuting coordinates).
\end{fact}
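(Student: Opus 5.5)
The second sentence of \cref{fact:TS} follows directly from the first, so most of the work is the projection statement. For the second sentence, suppose $c : S \to \R^M$ is semialgebraic, so $\Gamma_c \subseteq \R^N \times \R^M$ is semialgebraic. Permuting coordinates is a linear bijection, and it maps semialgebraic sets to semialgebraic sets: substituting permuted variables into the defining polynomials gives polynomials again. The graph of $c_i$ is the image of $\Gamma_c$ under the coordinate projection $\R^N\times\R^M \to \R^N \times \R$ that keeps the $x$-coordinates and the $i$-th output coordinate. Once the first sentence is proved, that image is semialgebraic. So I would prove the projection statement and then read off the corollary.

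For the projection statement, I would first reduce to eliminating a single real variable. A projection $\R^{N+M}\to\R^N$ is a composition of $M$ projections that each drop one coordinate, so induction on $M$ reduces to the case $M=1$. Since projection commutes with finite unions, it also suffices to treat one basic set $\{(x,y) : P(x,y)=0,\ Q_1(x,y)>0,\dots,Q_k(x,y)>0\}$ with $y\in\R$. The goal is then to show that $\{x : \exists y,\ P(x,y)=0,\ Q_j(x,y)>0\ \forall j\}$ is semialgebraic in $x$.

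The core step is a parametrized version of the one-variable theory of real roots. Write each polynomial as a polynomial in $y$ whose coefficients are polynomials in $x$. I would first partition $\R^N$ into finitely many semialgebraic pieces according to which leading coefficients vanish, so that on each piece every polynomial has a fixed formal degree in $y$. On each piece I would form the signed subresultant (or Sturm--Habicht) sequences of the family and its derivatives, together with Thom encodings. Their coefficients are again polynomials in $x$. The key claim is that the realizable sign conditions of $(P,Q_1,\dots,Q_k)$ at points $y\in\R$ are determined by the signs of these finitely many polynomials in $x$; this is Thom's lemma together with the Tarski query and sign-determination counting formulas. The existence of a suitable $y$ then becomes a finite Boolean combination of sign conditions on polynomials in $x$, which is semialgebraic by definition. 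A more geometric alternative proves the same uniformity by cylindrical algebraic decomposition, built inductively with the same subresultant machinery.

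I expect the main obstacle to be this uniformity in the parameters $x$. The number and ordering of the real roots in $y$ can jump as $x$ varies, so a counting formula in $y$ only becomes semialgebraic in $x$ after a finite case split. That split must cover both the degenerations of the leading coefficients and the vanishing of the principal subresultant coefficients. The natural first attempt, a Sturm sequence computed by Euclidean division, breaks down because division introduces denominators depending on $x$. I would avoid this by using subresultants, whose entries are polynomials in $x$, and by carrying out the case analysis on their signs. Since the paper cites this result from \cite[Theorem~2.2.1]{bochnak2013real}, I would ultimately defer the detailed verification to that reference.
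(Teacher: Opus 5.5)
Your derivation of the second sentence is exactly the paper's: the graph of $c_i$ is the image of the graph of $c$ under the coordinate projection that keeps $x$ and the $i$-th output coordinate.

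For the projection statement itself, the paper gives no argument beyond citing \cite[Theorem~2.2.1]{bochnak2013real}. Your sketch is a correct outline, but of a different proof from the one in that reference. There the theorem comes from the Cohen--H\"ormander sign-table induction. For $f_1,\dots,f_s$ with $f_s$ of maximal degree, the sign table on the real line is determined by that of $f_1,\dots,f_{s-1},f_s'$ together with the remainders of $f_s$ by $f_s',f_1,\dots,f_{s-1}$. Uniformity in $x$ comes from stratifying by which leading coefficients vanish. On each stratum, each remainder is multiplied by an even power of the divisor's leading coefficient, which is nonzero there; this preserves signs and clears denominators.

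So your claim that a Euclidean-division approach ``breaks down'' because of denominators is too strong. With the case split you already plan to make, those denominators are harmless, and this is how the cited proof proceeds.

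Your subresultant/Thom-encoding/sign-determination route buys a single explicit list of polynomials in $x$ per degree stratum, and hence good degree and complexity bounds for quantifier elimination. The Euclidean route is more elementary, essentially using only Rolle's theorem and the intermediate value property, but has much worse bounds. That does not matter here, since only semialgebraicity is used.

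Both routes apply only integer-coefficient ring operations to the input coefficients. Either one therefore also yields the rational version, \cref{fact:TSQ}, which the paper uses in \cref{sec:explicit}.

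Finally, deferring the detailed verification to \cite{bochnak2013real} would verify the Euclidean argument, not the subresultant one you sketched.
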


\begin{fact}[{\cite[Theorems~2.4.4 and~2.4.5]{bochnak2013real}}]\label{fact:components}
	A semialgebraic set has finitely many connected components, each semialgebraic.
\end{fact}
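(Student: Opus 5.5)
The statement to prove is Fact~\ref{fact:components}: a semialgebraic set $S \subseteq \R^N$ has finitely many connected components, each of which is semialgebraic. The plan is to derive this from the cylindrical algebraic decomposition (CAD) theorem, which rests on Tarski--Seidenberg (Fact~\ref{fact:TS}).

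The first step is to invoke CAD. Let $S$ be described by a finite family of polynomials $\mathcal P = \{P, Q_1,\dots,Q_k,\dots\}$. CAD gives a partition of $\R^N$ into finitely many cells $C_1,\dots,C_m$ with the following properties:
\begin{enumerate}[label=(\roman*)]
\item each cell is semialgebraic;
\item each cell is semialgebraically homeomorphic to an open cube $(0,1)^{d_i}$, and is therefore connected;
\item every polynomial in $\mathcal P$ has constant sign on each cell.
\end{enumerate}
Property (iii) implies that $S$ is the union of the cells it contains.

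The construction is by induction on $N$. One projects the discriminants, resultants and leading coefficients of $\mathcal P$ to $\R^{N-1}$ and decomposes $\R^{N-1}$ adapted to this projected family. Over each cell of $\R^{N-1}$, the real roots of $\mathcal P$ then form finitely many continuous, semialgebraic, non-crossing graphs. Finally, one takes the graphs and the open bands between them as the cells in $\R^N$.

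The second step handles finiteness. Write $S = C_{i_1}\cup\dots\cup C_{i_r}$ with each $C_{i_j}$ connected. Build a graph on these cells, with an edge between $C_a$ and $C_b$ whenever $\overline{C_a}\cap C_b \neq \emptyset$ or $C_a\cap\overline{C_b}\neq\emptyset$. Each connected component of $S$ is then a union of cells, because each cell is connected and so lies in a single component. Two cells belong to the same component of $S$ exactly when they lie in the same graph component:
\begin{itemize}
\item If they are adjacent in the graph, their union is connected.
\item Conversely, suppose a graph component $\mathcal K$ gives a union $X_{\mathcal K}$. For any two distinct graph components $\mathcal K,\mathcal K'$, neither of $X_{\mathcal K}$, $X_{\mathcal K'}$ meets the closure of the other. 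So $S$ splits into finitely many relatively clopen pieces $X_{\mathcal K}$, and each of these is connected.
\end{itemize}
Hence there are at most $r$ components, and each is a finite union of semialgebraic cells, hence semialgebraic.

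For the claim that $\overline{C}$ is semialgebraic, note that the closure is defined by a first-order formula, namely $\forall \varepsilon>0\ \exists y\in C: |x-y|^2<\varepsilon$. It is therefore semialgebraic by Tarski--Seidenberg, applied to eliminate the quantifiers.

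The main obstacle is the CAD construction itself. Specifically, one must show that over a connected cell where the projected polynomials have constant sign, the number of distinct real roots of each $P\in\mathcal P$ and of products $PQ$ is constant. One must also show that these roots vary continuously, which follows from continuity of roots together with constancy of the degrees of $\gcd$s via subresultants. I would not reprove this but cite \cite[Theorems~2.3.1, 2.4.4, 2.4.5]{bochnak2013real}, as the paper does.
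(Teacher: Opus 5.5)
Your proposal is correct and takes the same route as the result the paper relies on: the paper gives no proof of its own and simply cites Bochnak--Coste--Roy, whose argument for Theorems~2.4.4--2.4.5 is what you wrote, namely a cell decomposition of $S$ into finitely many connected cells followed by grouping the cells through the closure-adjacency relation. Your paragraph showing that $\overline{C}$ is semialgebraic is harmless but unnecessary, since the adjacency argument only uses the topological closure.
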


\begin{fact}[{\cite[Proposition~2.5.13 and Theorem~2.4.5]{bochnak2013real}}]\label{fact:paths}
	Any two points of a connected semialgebraic set $S$ are joined by a continuous semialgebraic path $c : [0,1] \to S$.
\end{fact}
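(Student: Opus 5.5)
The plan is to reduce to a combinatorial statement via the semialgebraic triangulation theorem, as in \cite{bochnak2013real}. Let $S \subseteq \R^N$ be connected and semialgebraic. By the triangulation theorem \cite[Theorem~9.2.1]{bochnak2013real}, which applies to a compact set containing $S$ (after compactifying $\R^N$ semialgebraically if needed), there are a finite simplicial complex $K$ and a semialgebraic homeomorphism $\Psi : |K| \to \overline{S}$ such that $S = \Psi(\bigcup_{\sigma \in K'} \mathring\sigma)$ for some subfamily $K' \subseteq K$. Here $\mathring\sigma$ denotes an open simplex. The set $T := \bigcup_{\sigma\in K'}\mathring\sigma$ is then homeomorphic to $S$, so it is connected. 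It therefore suffices to join any two points of $T$ by a piecewise-linear path inside $T$. The pushed-forward path $\Psi\circ c$ is then continuous, and it is semialgebraic because compositions of semialgebraic maps are semialgebraic (Fact~\ref{fact:TS}).

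For the combinatorial step, form a graph on $K'$ by joining $\sigma,\tau\in K'$ by an edge whenever one is a face of the other. I will argue that this graph is connected. Suppose it split into two nonempty classes $\mathcal{A},\mathcal{B}$. Put $T_{\mathcal{A}} = \bigcup_{\sigma\in\mathcal{A}}\mathring\sigma$ and define $T_{\mathcal{B}}$ likewise. The closure of $\mathring\sigma$ inside $T$ is $\overline{\sigma}\cap T$, which is a union of open faces of $\sigma$ lying in $K'$, and all of these are adjacent to $\sigma$. Hence the closure in $T$ of $T_{\mathcal{A}}$ stays inside $T_{\mathcal{A}}$, and the same holds for $T_{\mathcal{B}}$. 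Both sets would then be closed in $T$, disjoint, and nonempty, which contradicts the connectedness of $T$.

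Given this, for adjacent $\tau \prec \sigma$ in $K'$ I use the straight segment from the barycenter of $\sigma$ to the barycenter of $\tau$. Every point of this segment except the final endpoint lies in $\mathring\sigma$, and the endpoint lies in $\mathring\tau$, so the whole segment lies in $T$. Inside a single open simplex, which is convex, any two points are joined by a segment. Concatenating finitely many segments along a path in the graph from the simplex of the first point to the simplex of the second gives a piecewise-linear map $c:[0,1]\to T$ whose graph is a finite union of segments, so $c$ is semialgebraic.

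The main obstacle is the triangulation step itself: producing $\Psi$ compatible with $S$, so that $S$ becomes a union of open simplices. This is the deep input, proved by cylindrical algebraic decomposition, and I would simply invoke \cite{bochnak2013real} for it rather than reprove it. Everything after that step is elementary.
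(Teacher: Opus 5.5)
Your argument is correct, but it takes a different route from the one the paper relies on. The paper gives no proof of its own and defers to \cite[Proposition~2.5.13]{bochnak2013real}. That proposition gets path-connectedness from the finite decomposition of $S$ into cells that are semialgebraically homeomorphic to open cubes, gluing cells whose closures meet by means of the curve selection lemma. \cite[Theorem~2.4.5]{bochnak2013real} is then needed to identify ordinary connectedness with semialgebraic connectedness.

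You instead invoke the triangulation theorem. This is a considerably deeper result, proved later in the book on top of that same cell-decomposition machinery. In exchange, everything after it is explicit and elementary:
\begin{itemize}
\item the face graph on $K'$ replaces the bookkeeping over cells;
\item the barycentric segments replace the curve selection lemma;
\item since $T$ is homeomorphic to $S$, you work with topological connectedness directly, so no passage through semialgebraic connectedness is needed.
\end{itemize}

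Two small points. First, for unbounded $S$ your map $\Psi : |K| \to \overline{S}$ should really land in $\overline{h(S)}$ for a concrete semialgebraic homeomorphism $h$ onto a bounded set, for instance $x \mapsto x/\sqrt{1+\lVert x\rVert^2}$ onto the open unit ball. You then push the path forward by $h^{-1}\circ\Psi$. Second, $\Psi$ is only a semialgebraic homeomorphism, so your path is piecewise linear only in the coordinates of $|K|$. The paper therefore still needs Fact~\ref{fact:piecewise} to get piecewise differentiability in $\R^N$; neither route bypasses that step.
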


\begin{fact}[{\cite[Chapter~7, Theorem~3.2]{Dries_1998}}]\label{fact:piecewise}
	A continuous semialgebraic $f : [0,1] \to \mathbb{R}$ is continuously differentiable on $[0,1] \setminus \Sigma$ for some finite set $\Sigma$.
\end{fact}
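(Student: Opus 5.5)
The approach is to replace the semialgebraic graph of $f$ by a plane algebraic curve, and then use the implicit function theorem away from finitely many bad abscissae.

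\emph{Step 1: the graph lies on a curve.} The graph $\Gamma_f=\{(x,f(x))\}$ is semialgebraic, so it is a finite union of basic sets $\{P_j=0,\,Q_{j,1}>0,\dots,Q_{j,k_j}>0\}$. If some $P_j$ were the zero polynomial, that piece would be open in $\R^2$. A nonempty open piece is impossible, because a graph has empty interior, so such pieces are empty and can be dropped. Let $P=\prod_j P_j$ and pass to its squarefree part. This gives a nonzero polynomial $P(x,y)$ without repeated factors such that $P(x,f(x))=0$ for all $x\in[0,1]$. Write $P=R(x)\,\tilde P(x,y)$, where $R$ collects the irreducible factors not involving $y$. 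The roots of $R$ are finitely many values of $x$, and we put them into $\Sigma$. On the rest of $[0,1]$ we have $\tilde P(x,f(x))=0$, and every irreducible factor of $\tilde P$ has positive degree in $y$.

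\emph{Step 2: the exceptional set.} Regard $\tilde P$ as a polynomial in $y$ with coefficients in $\R[x]$, with leading coefficient $a(x)\neq 0$. Because $\tilde P$ is squarefree and has no factors free of $y$, the discriminant $\mathrm{disc}_y\tilde P(x)$ is a nonzero polynomial in $x$. Let $\Sigma$ consist of $0$, $1$, the real roots of $R$, the real roots of $a$, and the real roots of $\mathrm{disc}_y\tilde P$. This set is finite. For each $x_0\notin\Sigma$, the polynomial $\tilde P(x_0,\cdot)$ has only simple roots, so $\partial_y\tilde P(x_0,y_0)\neq 0$ at every real root $y_0$.

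\emph{Step 3: local analyticity of $f$.} Fix $x_0\in(0,1)\setminus\Sigma$ and set $y_0=f(x_0)$. By the implicit function theorem, there are neighbourhoods $I\ni x_0$ and $J\ni y_0$ and a real-analytic $g:I\to J$ such that $\{\tilde P=0\}\cap(I\times J)$ is exactly the graph of $g$. Since $f$ is continuous, shrinking $I$ gives $f(I)\subseteq J$. Together with $\tilde P(x,f(x))=0$, this forces $f=g$ on $I$. Hence $f$ is real-analytic, and in particular continuously differentiable, on the open set $[0,1]\setminus\Sigma$.

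The main obstacle is Step 1 together with the nonvanishing of the discriminant. One must check that reducing to a squarefree polynomial, and discarding the vertical-line factors, really leaves a polynomial whose discriminant in $y$ is not identically zero. I would argue this through Gauss's lemma: distinct irreducible factors of positive $y$-degree share no common root over $\R(x)$, and a squarefree polynomial in $y$ over the field $\R(x)$ is separable in characteristic $0$. As a fallback, the monotonicity theorem for semialgebraic functions gives the same conclusion via cell decomposition.
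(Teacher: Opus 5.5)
Your proof is correct, but it takes a different route from the paper. The paper does not prove this fact; it imports it from van den Dries's book on o-minimal structures (Chapter~7, Theorem~3.2). There the statement is obtained for definable one-variable functions in an arbitrary o-minimal structure, via the monotonicity theorem and cell decomposition. A definable function on an interval is continuous and differentiable off a finite set, its derivative is again definable, and iterating gives $C^k$. In particular, that route needs no continuity hypothesis as input.

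You instead exploit the algebraic structure directly:
\begin{enumerate}
\item You trap the graph in the zero set of a squarefree polynomial $\tilde P$ all of whose irreducible factors have positive $y$-degree.
\item You discard the finitely many abscissae where the leading coefficient or the $y$-discriminant vanishes. As you say, Gauss's lemma plus separability in characteristic $0$ is what makes the discriminant nonzero.
\item You let the analytic implicit function theorem, together with continuity of $f$, pin $f$ to a single analytic branch near each remaining point.
\end{enumerate}

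This is more elementary and self-contained, and it gives the stronger conclusion that $f$ is real-analytic on $[0,1]\setminus\Sigma$. The price is that it uses continuity to select the branch, and it is tied to the semialgebraic setting, whereas the cited theorem covers any o-minimal class of paths. For the paper's purposes either suffices, since the fact is only applied coordinatewise to continuous semialgebraic paths.

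Two small points you leave implicit are harmless but worth a line each:
\begin{enumerate}
\item $\tilde P$ cannot be a nonzero constant. Otherwise $R(x)$ would have to vanish for every $x\in[0,1]$, which is impossible for a nonzero polynomial.
\item In Step~3 you should shrink $I$ to avoid the roots of $R$, so that $\tilde P(x,f(x))=0$ holds on $I$. Alternatively, note that this identity extends to all of $[0,1]$ by continuity of $f$, since the roots of $R$ are isolated.
\end{enumerate}
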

This fact can be extended to functions mapping to $\mathbb{R}^n$ by applying it coordinate-wise and we will do so.

\subsection{Linear differential equations}

\begin{fact}[{\cite[Chapter~IV, Theorem~1.1]{Hartman2002}}]\label{fact:linear-ode}
	Let $J$ be an interval, $A : J \to \mathbb{C}^{n \times n}$ continuous, $t_0 \in J$, $y_0 \in \mathbb{C}^n$. Then $\dot y = A(t) y$, $y(t_0) = y_0$, has exactly one continuously differentiable solution on $J$.
\end{fact}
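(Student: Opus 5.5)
The last statement before the cut-off is Fact~\ref{fact:linear-ode}, existence and uniqueness for linear ODEs with continuous coefficients. It is a standard textbook fact, but I would prove it along the classical Picard--Lindelöf route rather than simply cite it.

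\textbf{Reformulation.} First, replace the ODE by the equivalent integral equation
\[
y(t) = y_0 + \int_{t_0}^t A(s)\,y(s)\,ds .
\]
A continuous solution of this equation is automatically continuously differentiable, because the integrand is continuous. Conversely, any $C^1$ solution of the ODE satisfies the integral equation.

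\textbf{Existence on a compact subinterval.} Fix a compact subinterval $K \subseteq J$ containing $t_0$, and set $M = \sup_{K} \|A\|$, which is finite by continuity.
\begin{itemize}
\item Define Picard iterates by $y^{(0)} \equiv y_0$ and $y^{(k+1)}(t) = y_0 + \int_{t_0}^t A(s)\,y^{(k)}(s)\,ds$.
\item Induction gives $\|y^{(k+1)}(t) - y^{(k)}(t)\| \le \|y_0\|\, M^{k+1}|t-t_0|^{k+1}/(k+1)!$.
\item The series $\sum_k (y^{(k+1)} - y^{(k)})$ therefore converges uniformly on $K$.
\item Its limit is continuous and, by uniform convergence under the integral, satisfies the integral equation.
\end{itemize}
Because the system is linear, the solution exists on all of $K$, not just near $t_0$: no blow-up can occur.

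\textbf{Uniqueness.} If $y$ and $\tilde y$ both solve the equation, put $w = y - \tilde y$. Then
\[
\|w(t)\| \le M \left|\int_{t_0}^t \|w(s)\|\,ds\right| .
\]
Grönwall's inequality, or iterating this bound to get $\|w\| \le \sup_K\|w\|\, M^k |t-t_0|^k/k!$ for every $k$, forces $w \equiv 0$ on $K$.

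\textbf{Extension to all of $J$.} Exhaust $J$ by an increasing sequence of compact intervals containing $t_0$. Uniqueness makes the solutions on these intervals agree on their overlaps, so they glue into a single $C^1$ solution on $J$, and that solution is unique.

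The only real care needed is the uniform factorial estimate on a compact subinterval when $J$ itself is open or unbounded. That estimate is precisely where linearity is used, since it replaces a Lipschitz condition that would otherwise hold only locally.
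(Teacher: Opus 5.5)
Your proof is correct: Picard iteration with the factorial bound $\|y_0\|M^{k+1}|t-t_0|^{k+1}/(k+1)!$ on compact subintervals, the iterated uniqueness estimate, and gluing over an exhaustion of $J$ together give the standard successive-approximation proof of this result. The paper gives no proof of its own and only cites Hartman, whose argument is this same standard one, so your proposal takes essentially the cited approach and nothing is missing.
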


\section{Protocol Model}\label{sec:model}

In this section, we give an explicit definition of non-local quantum computation (NLQC).
We first describe the model as a composition of channels.
We then use the Stinespring dilation to replace each local channel by an isometry.
Finally, we define the set of strategies: the tuples of isometries and states that make up a protocol.

We start with the task that an NLQC protocol must achieve.
Let $H_A \cong \mathbb{C}^{2^n}$ and $H_B \cong \mathbb{C}^{2^n}$ be the Hilbert spaces of Alice and Bob, respectively.
Each local space consists of $n$ qubits.
The goal of an NLQC protocol is to implement a unitary $U$ on the $2n$ qubits of $H_A \otimes H_B$ in a non-local fashion.
Here, non-local means that Alice and Bob are restricted to the following operations:
\begin{enumerate}
	\item they share an entangled resource state, prepared before they receive their inputs;
	\item they each apply local pre-processing;
	\item they perform one round of simultaneous communication;
	\item they each apply local post-processing.
\end{enumerate}
We follow the definition of \cite{bluhm2026complexity}.
Figure~\ref{fig:protocol} shows the structure of a protocol.

\begin{figure}[htb]
	\centering
	\garbagefalse
	\labsplittrue
	\encdecfalse
	\channelstrue
	\expandafter\ifx\csname ifgarbage\endcsname\relax
	\expandafter\newif\csname ifgarbage\endcsname
\fi
\expandafter\ifx\csname iflabsplit\endcsname\relax
	\expandafter\newif\csname iflabsplit\endcsname
\fi
\expandafter\ifx\csname ifencdec\endcsname\relax
	\expandafter\newif\csname ifencdec\endcsname
\fi
\expandafter\ifx\csname ifchannels\endcsname\relax
	\expandafter\newif\csname ifchannels\endcsname
\fi
\begin{tikzpicture}[
		gate/.style={draw, semithick, rounded corners=2.5pt, fill=white,
				minimum width=13mm, minimum height=10mm},
		state/.style={draw, semithick, rounded corners=4pt, fill=gray!8,
				minimum width=11mm, minimum height=8mm},
		wire/.style={semithick},
		wlabel/.style={font=\footnotesize, inner sep=1.5pt},
		labtop/.style={wlabel, anchor=south},
		labbot/.style={wlabel, anchor=north},
		frame/.style={draw, dashed, gray!60, thin, rounded corners=4pt},
		framename/.style={font=\small\itshape, text=gray!40!black, inner sep=1.5pt},
	]

	\node[gate] (VA) {$\ifchannels\mathcal{V}_A\else\encA\fi$};
	\node[gate, below=\rowsep of VA] (VB) {$\ifchannels\mathcal{V}_B\else\encB\fi$};
	\node[gate, right=\stagesep of VA] (WA) {$\ifchannels\mathcal{W}_A\else\decA\fi$};
	\node[gate, right=\stagesep of VB] (WB) {$\ifchannels\mathcal{W}_B\else\decB\fi$};

	\coordinate (VA-in-top)  at ($(VA.west)+(0, \portoff)$);
	\coordinate (VA-in-bot)  at ($(VA.west)+(0,-\portoff)$);
	\coordinate (VA-out-top) at ($(VA.east)+(0, \portoff)$);
	\coordinate (VA-out-bot) at ($(VA.east)+(0,-\portoff)$);

	\coordinate (VB-in-top)  at ($(VB.west)+(0, \portoff)$);
	\coordinate (VB-in-bot)  at ($(VB.west)+(0,-\portoff)$);
	\coordinate (VB-out-top) at ($(VB.east)+(0, \portoff)$);
	\coordinate (VB-out-bot) at ($(VB.east)+(0,-\portoff)$);

	\coordinate (WA-in-top)  at ($(WA.west)+(0, \portoff)$);
	\coordinate (WA-in-bot)  at ($(WA.west)+(0,-\portoff)$);
	\coordinate (WA-out-top) at ($(WA.east)+(0, \portoff)$);
	\coordinate (WA-out-bot) at ($(WA.east)+(0,-\portoff)$);

	\coordinate (WB-in-top)  at ($(WB.west)+(0, \portoff)$);
	\coordinate (WB-in-bot)  at ($(WB.west)+(0,-\portoff)$);
	\coordinate (WB-out-top) at ($(WB.east)+(0, \portoff)$);
	\coordinate (WB-out-bot) at ($(WB.east)+(0,-\portoff)$);

	\coordinate (Astart) at ($(VA-in-top)+(-\inoutgap,0)$);
	\coordinate (Bstart) at ($(VB-in-bot)+(-\inoutgap,0)$);
	\coordinate (Aend)   at ($(WA-out-top)+(\inoutgap,0)$);
	\coordinate (Bend)   at ($(WB-out-bot)+(\inoutgap,0)$);
	\coordinate (EAstub) at ($(WA-out-bot)+(\discardstub,0)$);
	\coordinate (EBstub) at ($(WB-out-top)+(\discardstub,0)$);

	\node[state] (eta) at ($(VA)!0.5!(VB) + (-\etashift,0)$) {$\ket{\resvec}$};
	\draw[wire] (eta.north) to[out=90, in=180] (VA-in-bot);
	\draw[wire] (eta.south) to[out=-90, in=180] (VB-in-top);
	\node[wlabel] at ($(VA-in-bot)+(-3.5mm,-5.4mm)$) {$\resA$};
	\node[wlabel] at ($(VB-in-top)+(-3.5mm, 5.4mm)$) {$\resB$};

	\draw[wire] (Astart) -- (VA-in-top) node[labtop, pos=0.5] {$\regA$};
	\draw[wire] (Bstart) -- (VB-in-bot) node[labbot, pos=0.5] {$\regB$};

	\draw[wire] (VA-out-top) -- (WA-in-top) node[labtop, pos=0.5] {$\keptA$};
	\draw[wire] (VB-out-bot) -- (WB-in-bot) node[labbot, pos=0.5] {$\keptB$};

	\draw[wire] (WA-out-top) -- (Aend) node[labtop, pos=0.5] {$\outA$};
	\draw[wire] (WB-out-bot) -- (Bend) node[labbot, pos=0.5] {$\outB$};

	\draw[wire] (VA-out-bot) to[out=0, in=180, looseness=0.6]
		node[wlabel, above right, pos=0.15] {$\msgAPr$}
		node[wlabel, below left,  pos=0.85] {$\msgBPo$} (WB-in-top);
	\draw[wire] (VB-out-top) to[out=0, in=180, looseness=0.6]
		node[wlabel, below right, pos=0.15] {$\msgBPr$}
		node[wlabel, above left,  pos=0.85] {$\msgAPo$} (WA-in-bot);

	\ifgarbage
		\node[state] (gab) at ($(WA)!0.5!(WB) + (\etashift,0)$) {$\ket{\gabvec}$};
		\draw[wire] (WA-out-bot) to[out=0, in=90] (gab.north);
		\draw[wire] (WB-out-top) to[out=0, in=-90] (gab.south);
		\node[wlabel] at ($(WA-out-bot)+(3.5mm,-5.4mm)$) {$\envA$};
		\node[wlabel] at ($(WB-out-top)+(3.5mm, 5.4mm)$) {$\envB$};
	\else
		\coordinate (gab) at ($(WA)!0.5!(WB) + (\etashift,0)$);
		\draw[wire] (WA-out-bot) -- (EAstub) node[labbot, pos=0.45] {$\envA$};
		\draw[wire] (WB-out-top) -- (EBstub) node[labtop, pos=0.45] {$\envB$};
		\foreach \p in {EAstub, EBstub} {
				\draw[wire] ($(\p)+(0,-2mm)$) -- ++(0,4mm);
				\draw[wire] ($(\p)+(0.9mm,-1.3mm)$) -- ++(0,2.6mm);
				\draw[wire] ($(\p)+(1.8mm,-0.6mm)$) -- ++(0,1.2mm);
			}
	\fi

	\iflabsplit
		\begin{pgfonlayer}{background}
			\node[frame, fit=(Astart)(VA)(WA)(Aend)(EAstub),
				inner xsep=5mm, inner ysep=5mm] (labA) {};
			\node[frame, fit=(Bstart)(VB)(WB)(Bend)(EBstub),
				inner xsep=5mm, inner ysep=5mm] (labB) {};
		\end{pgfonlayer}
		\node[framename, anchor=south west] at (labA.north west) {Alice};
		\node[framename, anchor=north west] at (labB.south west) {Bob};
	\fi

	\ifencdec
		\begin{pgfonlayer}{background}
			\node[frame, fit=(Astart)(VA)(VB)(Bstart)(eta),
				inner xsep=5mm, inner ysep=6mm] (encoder) {};
			\node[frame, fit=(Aend)(WA)(WB)(Bend)(EAstub)(gab),
				inner xsep=5mm, inner ysep=6mm] (decoder) {};
		\end{pgfonlayer}
		\node[framename, anchor=south west] at (encoder.north west) {Encoder};
		\node[framename, anchor=south west] at (decoder.north west) {Decoder};
	\fi

\end{tikzpicture}
	\caption{Structure of an NLQC protocol.
		Time runs from left to right, and the dashed boxes mark the labs of Alice (top) and Bob (bottom).
		Before receiving the inputs $\regA$ and $\regB$, the parties share the resource state $\ket{\resvec}$ on $\resA\resB$.
		Each party applies a local pre-processing channel, $\mathcal V_A$ or $\mathcal V_B$, keeps the register $\keptA$ or $\keptB$, and sends a message register to the other party.
		In the single round of simultaneous communication, Alice's message $\msgAPr$ arrives at Bob as $\msgBPo$, and Bob's message $\msgBPr$ arrives at Alice as $\msgAPo$.
		Each party then applies a local post-processing channel, $\mathcal W_A$ or $\mathcal W_B$, which outputs $\outA$ or $\outB$ together with an environment register $\envA$ or $\envB$ that is discarded (ground symbol).
		The resource state and the two messages are the only systems shared between the labs.
		The protocol implements a unitary $U$ exactly if the resulting channel from $\regA\regB$ to $\outA\outB$ is $\rho \mapsto U \rho U^\dagger$.}
	\label{fig:protocol}
\end{figure}

\begin{definition}[Non-local quantum computation]
	A non-local quantum computation (NLQC) is a channel $\mathcal P$ from $\regA\regB$ to $\outA\envA\outB\envB$ of the form
	\begin{align*}
		\mathcal P
		 & = (\mathcal W_{\keptA, \msgAPo \rightarrow \outA,\envA} \otimes \mathcal W_{\keptB, \msgBPo \rightarrow \outB,\envB})                       \\
		 & \quad \circ (\mathcal I_{\keptA} \otimes \mathcal X_{\msgAPr \rightarrow \msgBPo, \msgBPr \rightarrow \msgAPo} \otimes \mathcal I_{\keptB}) \\
		 & \quad \circ (\mathcal V_{\regA, \resA \rightarrow \keptA,\msgAPr} \otimes \mathcal V_{\regB, \resB \rightarrow \keptB,\msgBPr})             \\
		 & \quad \circ (\mathcal I_{\regA} \otimes \mathcal{\append{\eta}} \otimes \mathcal I_{\regB}).
	\end{align*}
	The four lines correspond to the four operations above, applied from bottom to top.
	The channel $\mathcal{\append{\eta}}$ prepares the resource state $\ket{\eta}$ on the registers $\resA$ and $\resB$.
	The channels $\mathcal V_{\regA, \resA \rightarrow \keptA,\msgAPr}$ and $\mathcal V_{\regB, \resB \rightarrow \keptB,\msgBPr}$ are the local pre-processing channels of Alice and Bob.
	The channel $\mathcal X_{\msgAPr \rightarrow \msgBPo, \msgBPr \rightarrow \msgAPo}$ exchanges the message registers: it sends $\msgAPr$ to $\msgBPo$ and $\msgBPr$ to $\msgAPo$.
	The channels $\mathcal W_{\keptA, \msgAPo \rightarrow \outA,\envA}$ and $\mathcal W_{\keptB, \msgBPo \rightarrow \outB,\envB}$ are the local post-processing channels of Alice and Bob.
	The registers $\keptA$ and $\keptB$ hold what each party keeps between pre-processing and post-processing.
	The input register $\regA$ and the output register $\outA$ are both copies of $H_A$.
	Similarly, $\regB$ and $\outB$ are both copies of $H_B$.
	The registers $\envA$ and $\envB$ hold the remaining local output, which is discarded.
\end{definition}
From now on, we write $\mathcal V_A$, $\mathcal V_B$, $\mathcal X$, $\mathcal W_A$ and $\mathcal W_B$ for these channels.

We now simplify the definition by applying the Stinespring dilation to each local channel.
For the pre-processing channels, there exist isometries
\begin{equation}
	V_A : \regA\resA \longrightarrow \keptA\msgAPr S_{V_A},
	\qquad
	V_B : \regB\resB \longrightarrow \keptB\msgBPr S_{V_B}
\end{equation}
such that
\begin{equation}
	\mathcal V_A(\rho) = \Tr_{S_{V_A}}\bigl(V_A \rho V_A^\dagger\bigr),
	\qquad
	\mathcal V_B(\rho) = \Tr_{S_{V_B}}\bigl(V_B \rho V_B^\dagger\bigr).
\end{equation}
Here $S_{V_A}$ and $S_{V_B}$ are the local auxiliary registers of the dilation.
Similarly, for the post-processing channels, there exist isometries
\begin{equation}
	\tilde W_A : \keptA\msgAPo \longrightarrow \outA\envA S_{W_A},
	\qquad
	\tilde W_B : \keptB\msgBPo \longrightarrow \outB\envB S_{W_B}
\end{equation}
such that
\begin{equation}
	\mathcal W_A(\rho) = \Tr_{S_{W_A}}\bigl(\tilde W_A \rho \tilde W_A^\dagger\bigr),
	\qquad
	\mathcal W_B(\rho) = \Tr_{S_{W_B}}\bigl(\tilde W_B \rho \tilde W_B^\dagger\bigr).
\end{equation}
The exchange channel $\mathcal X$ only permutes registers, so it is a unitary channel.
We write $X_{\mathrm{ex}}$ for the corresponding exchange unitary, so that $\mathcal X(\rho) = X_{\mathrm{ex}} \rho X_{\mathrm{ex}}^{\dagger}$.

The auxiliary registers $S_{V_A}$ and $S_{V_B}$ stay in the labs of Alice and Bob, respectively, and are not touched after pre-processing.
We therefore extend the post-processing isometries by the identity on them:
\begin{equation}
	W_A = \tilde W_A \otimes I_{S_{V_A}},
	\qquad
	W_B = \tilde W_B \otimes I_{S_{V_B}}.
\end{equation}

Finally, we write $\append{\eta}$ for the isometry $\mathbb{C} \to \resA\resB$ that maps $1$ to $\ket{\eta}$, so that $\mathcal{\append{\eta}}(\rho) = \append{\eta}\, \rho\, \append{\eta}^{\dagger}$.

Combining these, the isometry
\begin{equation}
	F = (W_A \otimes W_B)
	\bigl(I_{\keptA S_{V_A}} \otimes X_{\mathrm{ex}} \otimes I_{\keptB S_{V_B}}\bigr)
	(V_A \otimes V_B)
	\bigl(I_{\regA} \otimes \append{\eta} \otimes I_{\regB}\bigr)
\end{equation}
satisfies
\begin{equation}
	\mathcal P(\rho) = \Tr_{S_{V_A} S_{V_B} S_{W_A} S_{W_B}}\bigl(F \rho F^{\dagger}\bigr).
\end{equation}
To simplify the notation further, we often omit the identities and write
\begin{align}
	F = (W_A \otimes W_B)\, X_{\mathrm{ex}}\, (V_A \otimes V_B)\, \append{\eta}.
	\label{eq:F}
\end{align}

It remains to collect the registers that are not part of the output.
For each party, we combine the discarded output and both auxiliary registers into one local garbage register,
\begin{equation}
	G_A = \envA S_{V_A} S_{W_A},
	\qquad
	G_B = \envB S_{V_B} S_{W_B},
\end{equation}
and we refer to $G = G_A G_B$. The protocol is then the isometry $F : \regA\regB \to \outA G_A \outB G_B$ in \eqref{eq:F}, and it is fully specified by the tuple $(\ket{\resvec}, \encA, \encB, \decA, \decB)$.
\begin{definition}[Strategies]
	We call $\strategies$ the set of all tuples $(\ket{\resvec}, \encA, \encB, \decA, \decB)$ such that
	\begin{enumerate}
		\item all registers are finite-dimensional;
		\item the register dimensions match, so that $F$ in \eqref{eq:F} is well-defined;
		\item $\encA$, $\encB$, $\decA$ and $\decB$ are isometries;
		\item $\ket{\resvec}$ is normalized.
	\end{enumerate}
\end{definition}
We slightly abuse notation and also write $F \in \strategies$ for the isometry in \eqref{eq:F} defined by the elements of the tuple.
For fixed register dimensions, a tuple consists of finitely many complex numbers.
By identifying each complex number with two real numbers, we view the set of such tuples as a subset of $\mathbb{R}^{D}$, where $D$ depends on the dimensions.
We give the details in Section~\ref{sec:paramOfSpace}.

Having defined what we consider a protocol, we can state our main theorem:

\begin{restatable}{theorem}{MainTheorem}
	\label{thm:badHaarNull}
	For every $n \geq 1$, the set
	\begin{equation*}
		\Bad := \{U \in \mathrm{U}(2^{2n}) \mid \exists\,F \in \strategies: \Tr_{\gab}(F\rho F^{\dagger}) = U \rho U^\dagger \text{ for all } \rho\}
	\end{equation*}
	has Haar measure zero.
\end{restatable}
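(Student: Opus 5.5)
The plan is to show that $\Bad$ is contained in a countable union of local-unitary orbits $\Orb(U)$, each of which is Haar-null by \cref{lem:orbitHaarNull}; countable subadditivity then gives the claim. The union is indexed first by architectures $\arch$, i.e.\ tuples of finite register dimensions. There are countably many, since they are tuples of positive integers. For each fixed $\arch$ I will show that the exactly implementable unitaries lie in finitely many orbits.

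\textbf{Step 1 (rigidity of the environment).} Fix an exact strategy $F$ implementing $U$. Since $F$ is an isometry with $\Tr_{\gab}(F\rho F^\dagger)=U\rho U^\dagger$ for all $\rho$, the Stinespring dilations $F$ and $U\otimes \append{\gabvec}$ of the same channel are related by an isometry on $\gab$ alone. Because the channel is unitary, that isometry must send a one-dimensional space to a single vector. Concretely, I will argue directly that $F\ket{\psi}$ has pure marginal $U\ket\psi$ for every $\ket\psi$. Hence $F\ket\psi=U\ket\psi\otimes\ket{\gabvec_\psi}$, and linearity on superpositions forces $\gabvec_\psi=\gabvec$ to be constant. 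Thus $F=\append{\gabvec}U$ with a fixed unit vector $\gabvec\in\gab$. This makes the set
\[
\strategies_\arch^{\mathrm{ex}}=\{(\resvec,\encA,\encB,\decA,\decB,U,\gabvec): F=\append{\gabvec}U,\ U^\dagger U=I\}
\]
a real algebraic subset of some $\R^D$, since every condition is polynomial in the real and imaginary parts of the entries. By \cref{fact:components} it has finitely many connected components.

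\textbf{Step 2 (derivative identity), the main obstacle.} Take a $C^1$ path $t\mapsto(\resvec,\encA,\encB,\decA,\decB,U,\gabvec)(t)$ in $\strategies_\arch^{\mathrm{ex}}$. I want to show $\dot U=-bU+Ua$ with $a,b$ local anti-Hermitian generators. Differentiating $F=\append{\gabvec}U$ gives the relation $\append{\gabvec}^\dagger\dot F=\dot U+\append{\gabvec}^\dagger\append{\dot\gabvec}U$, and the last term is a scalar times $U$ with purely imaginary coefficient, so it is local. Expand $\dot F$ by the product rule into four kinds of terms: the $\dot\resvec$ term, the $\dot\encA$ and $\dot\encB$ terms, and the $\dot\decA$ and $\dot\decB$ terms.

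For the post-processing terms, write $\dot\decA=\decA\alpha+\beta$, where $\alpha=\decA^\dagger\dot\decA$ is anti-Hermitian and $\beta$ maps into the orthogonal complement of the range of $\decA$. Sandwiching with $\append{\gabvec}^\dagger$ and using $F=\append{\gabvec}U$ together with \cref{lem:compression_state}, I would express the contribution as $-b_AU$ with $b_A$ on $\outA$ anti-Hermitian. The technical point is handling $\beta$ and showing that only Alice-local operators survive. I would do this by writing $\decA\,(\cdot)=(\decA\,(\cdot)\,\decA^\dagger)\decA$ and pushing $\decA^\dagger$ back through, using that $\decB$ commutes with Alice's terms. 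The pre-processing and $\dot\resvec$ terms should similarly become $Ua$ by moving the derivatives to the input side via $F^\dagger F=I$. Anti-Hermiticity of the total comes from differentiating $U^\dagger U=I$.

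\textbf{Step 3 (integration and counting).} Given the identity, solve $\dot L=-bL$ and $\dot R=aR$ with $L(0)=R(0)=I$ using \cref{fact:linear-ode}. These have local unitary solutions because $a,b$ are sums of one-sided anti-Hermitian terms. Then $\frac{d}{dt}(L^{-1}UR^{-1})=0$, so $U(t)\in\Orb(U(0))$. For two points in one component, \cref{fact:paths} gives a continuous semialgebraic path. By \cref{fact:TS} and \cref{fact:piecewise} this path is $C^1$ off a finite set, and continuity across those breakpoints together with closedness of orbits (images of compact groups) keeps $U$ in one orbit. Hence each $\arch$ contributes finitely many orbits, and $\Bad$ is a countable union of Haar-null sets.
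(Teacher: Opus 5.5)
Your proposal is essentially the paper's proof: rigidity $F=\append{\gabvec}U$, finitely many components of the semialgebraic strategy set, the identity $\dot U=-bU+Ua$ with $a,b\in\Loc$ via \cref{lem:compression_state}, integration on $C^1$ pieces, closed orbits, and countable subadditivity. Differentiating $F=\append{\gabvec}U$ directly, using $\append{\gabvec}^\dagger=UF^\dagger$, $W^\dagger W=I$ and $\ran\append{\gabvec}\subseteq\ran(\decA\otimes\decB)$, is the same computation as the paper's split $U=\dec^\dagger\enc$. One slip to fix: the right factor must solve $\dot R=Ra$, not $\dot R=aR$, because with your choice $\frac{d}{dt}(L^{-1}UR^{-1})=L^{-1}U(aR^{-1}-R^{-1}a)$, which need not vanish when $a$ depends on $t$.
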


\begin{remark}[Other local dimensions]\label{rem:qudits}
	The proof does not use the shape of the local dimensions, not that they are equal or powers of two: it goes through for inputs $H_A \cong \mathbb{C}^{d_A}$ and $H_B \cong \mathbb{C}^{d_B}$ with $d_A, d_B \geq 2$, where the dimension count in \cref{lem:orbitHaarNull} becomes $2d_A^2 + 2d_B^2 - 3 < d_A^2 d_B^2$. 
	So Haar-almost every unitary on $\mathbb{C}^{d_A} \otimes \mathbb{C}^{d_B}$ has no exact NLQC protocol with finite-dimensional resources.
    To improve the presentation we choose to keep them the same in the following.
\end{remark}

We will prove this theorem in the following section by iteratively changing the definition of $\Bad$ to equivalent ones until we can bound the Haar measure.

\section{Zero Error proof}\label{sec:proof}

In this section we will show that there exist unitary operations for which there does not exist an NLQC protocol that implements them. We will do so by proving Theorem~\ref{thm:badHaarNull}, in other words, showing that the set of unitaries that can be implemented by an NLQC protocol with finite-dimensional resources has Haar measure zero. We will first give a concrete summary of the proof and then prove each step.

\subsection{Technical summary}\label{sec:technicalsum}

We prove Theorem~\ref{thm:badHaarNull} in four steps.

\paragraph{Step 1.} In Section~\ref{sec:perfect_implementation}, we first show that if $F$ implements some unitary $U$ exactly, then there must exist a pure garbage state $\ket{\gamma}$ that is prepared on the garbage output wires $G_A$ and $G_B$.
This garbage state must be independent of the input, because the protocol cannot leak any information about the input.
Therefore, we conclude that $F = \append{\gamma}U$, where $\append{\gamma}$ is the isometry that appends the state $\ket{\gamma}$.

\paragraph{Step 2.} In Section~\ref{sec:paramOfSpace}, we parametrize the protocol.
We assume that the internal dimensions of the protocol, i.e.\ the dimensions of the local isometries and of the state $\ket{\eta}$, are finite.
Therefore, for every protocol there exists a vector $\nu$ of integers that describes the dimensions of the protocol.
We then parametrize the strategies and, for every $\nu$, define the set $\strategies_{\nu} \subseteq \mathbb{R}^{D}$, where every element $z \in \strategies_{\nu}$ gives the parameters for a tuple $(U, \ket{\resvec}, \encA, \encB, \decA, \decB, \ket{\gabvec})$ such that the tuple $(\ket{\resvec}, \encA, \encB, \decA, \decB)$ forms a protocol $F = \append{\gamma} U$.
We show that each set $\strategies_{\nu}$ is semialgebraic.
For simplicity of notation, we write $U(\strategies_{\nu})$ for the set of unitaries described by $\strategies_{\nu}$, and we show that $\Bad = \bigcup_{\nu} U(\strategies_{\nu})$.
The union over all $\nu$ gives all unitaries that are implementable with finite-dimensional resources; it is a countably infinite union.

\paragraph{Step 3.} In Section~\ref{sec:local_unitary_equivalence}, given the parameter space in which the unitaries provided by an NLQC protocol live, we find more structure in these unitaries.
We do this by first showing that, when we take a derivative along some path in $\strategies_{\nu}$ with respect to the implementable unitaries, the generators of the derivative respect local constraints.
Then we integrate the derivative and find that any unitary along a path in $\strategies_{\nu}$ is local-unitary equivalent to any other unitary along that path.
From this we conclude that all unitaries corresponding to one connected component of $\strategies_{\nu}$ lie in a single local-unitary orbit.

\paragraph{Step 4.} Finally, in Section~\ref{sec:finite_union}, we combine all previous results to prove Theorem~\ref{thm:badHaarNull}.
By Step 3, the unitaries of each connected component of $\strategies_{\nu}$ lie in a single local-unitary orbit, and each such orbit has Haar measure zero.
Every semialgebraic set has finitely many connected components, so $U(\strategies_{\nu})$ is a finite union of sets of Haar measure zero.
Hence $U(\strategies_{\nu})$ also has Haar measure zero.
By Step 2, $\Bad$ is a countable union of the sets $U(\strategies_{\nu})$, so $\Bad$ has Haar measure zero as well.

\subsection{Exact implementation implies \texorpdfstring{$F = \append{\gamma} U$}{F = E\_gamma U}}
\label{sec:perfect_implementation}
The first step in the proof is to connect the protocol to a unitary and its action on the other wires.
In this section we show that for a strategy that implements some $U$ exactly, the \emph{garbage} output, on $\envA$ and $\envB$, is a fixed pure state.

\begin{lemma}[Rigidity]\label{lem:rigidity}
	Let $U : H \to H$ be unitary and $F : H \to H \otimes G$ an isometry with $\Tr_G[F \rho F^\dagger] = U \rho U^\dagger$ for all pure states $\rho = \ket{\psi}\bra{\psi}$ with $\ket{\psi} \in H$. Then there exists a pure state $\ket{\gamma} \in G$ such that $F = \append{\gamma} U$.
\end{lemma}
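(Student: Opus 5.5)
The plan is to first reduce to the case $U = I$ by setting $F' := F U^\dagger$, and then show that this $F'$ must append a fixed pure state to every input. The map $F'$ is again an isometry $H \to H \otimes G$. Substituting $\rho = U^\dagger \ket{\phi}\bra{\phi} U$, it satisfies $\Tr_G[F' \ket{\phi}\bra{\phi} F'^\dagger] = \ket{\phi}\bra{\phi}$ for every unit vector $\ket{\phi} \in H$. If we show $F' = \append{\gamma}$ for a unit vector $\ket{\gamma} \in G$, then $F = F' U = \append{\gamma} U$.

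For a fixed unit vector $\ket{\phi}$, write $F'\ket{\phi} \in H \otimes G$. Its reduced state on $H$ is the pure state $\ket{\phi}\bra{\phi}$, so the Schmidt decomposition of $F'\ket{\phi}$ has a single term. Hence $F'\ket{\phi} = \ket{\phi} \otimes \ket{\gamma_\phi}$ for some vector $\ket{\gamma_\phi} \in G$. This vector is a unit vector because $F'$ is an isometry, and it is uniquely determined by $\ket{\phi}$. By linearity, for every vector $\ket{\psi}$ we have $F'\ket{\psi} = \ket{\psi}\otimes\ket{\gamma_{\psi}}$ with $\ket{\gamma_{\psi}} = \ket{\gamma_{\psi/\|\psi\|}}$, and $\ket{\gamma_\psi} := 0$ for $\psi = 0$.

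The main step is to show that $\ket{\gamma_\phi}$ does not depend on $\phi$. Fix an orthonormal basis $\{\ket{i}\}$ of $H$ and write $\ket{\gamma_i}$ for the vector attached to $\ket{i}$. For $i \neq j$, linearity gives
\[
F'(\ket{i} + \ket{j}) = \ket{i}\otimes\ket{\gamma_i} + \ket{j}\otimes\ket{\gamma_j}.
\]
The left side must also equal $(\ket{i}+\ket{j})\otimes\ket{\gamma'}$ for some vector $\ket{\gamma'}$. Since $\ket{i}$ and $\ket{j}$ are linearly independent, comparing the components along $\ket{i}$ and along $\ket{j}$ gives $\ket{\gamma_i} = \ket{\gamma'} = \ket{\gamma_j}$. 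Call this common unit vector $\ket{\gamma}$. Then $F'\ket{i} = \ket{i}\otimes\ket{\gamma}$ for all $i$, and by linearity $F' = \append{\gamma}$.

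No step is a real obstacle. The only point needing care is the purity-implies-product step, namely that a bipartite pure state whose marginal is pure must be a product state; this follows from the Schmidt decomposition. The argument uses the hypothesis only on pure inputs, as stated. It is the superposition of two basis vectors that rules out an input-dependent garbage state.
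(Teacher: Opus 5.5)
Your proof is correct and follows essentially the same route as the paper: the Schmidt decomposition turns each output into a product $\ket{\phi}\otimes\ket{\gamma_\phi}$, and linearity applied to a superposition of basis vectors forces the garbage vector to be the same for every input. The differences are cosmetic: you first reduce to $U=I$ via $F' = FU^\dagger$, and you compare pairwise sums $\ket{i}+\ket{j}$, whereas the paper keeps $U$ throughout and compares all basis vectors at once through the uniform superposition $\ket{\omega}$.
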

\begin{proof}
	The proof has two steps. First, we show that $F$ maps each input to a product state $U\ket{\psi} \otimes \ket{\gamma_\psi}$. Second, we use linearity of $F$ to show that $\ket{\gamma_\psi}$ does not depend on $\ket{\psi}$.

	\medskip
	\noindent\textbf{Step 1: each output is a product state.}
	Fix a unit vector $\ket{\psi} \in H$. By assumption, the pure state $F\ket{\psi} \in H \otimes G$ has the pure reduced state $U\kb{\psi}U^\dagger$ on $H$. By the Schmidt decomposition, if the partial trace of a pure state is a pure state, then the global state is a product state. The reduced state fixes the first factor only up to a phase $e^{i\theta}$. We absorb this phase in the second factor, which gives a unit vector $\ket{\gamma_\psi} \in G$, uniquely determined, with
	\begin{equation}\label{eq:product_form}
		F\ket{\psi} = U\ket{\psi} \otimes \ket{\gamma_\psi}.
	\end{equation}

	\medskip
	\noindent\textbf{Step 2: the garbage state is independent of the input.}
	The vector $\ket{\gamma_\psi}$ might still depend on $\ket{\psi}$. Independence from $\ket{\psi}$ follows from the linearity of $F$. Let $\ket{\omega} = \sum_i \frac{1}{\sqrt{d}}\ket{i}$ be the uniform superposition over some orthonormal basis of $H$, where $d = \dim H$, and write $\ket{\gamma_i} \coloneqq \ket{\gamma_{\ket{i}}}$. Then by linearity and~\eqref{eq:product_form} it follows that
	\begin{equation}
		F\ket{\omega} = F\left(\sum_i \frac{1}{\sqrt{d}}\ket{i}\right)= \left(\sum_i \frac{1}{\sqrt{d}}F\ket{i}\right)= \sum_i \frac{1}{\sqrt{d}} U\ket{i}\otimes \ket{\gamma_i} .
	\end{equation}
	Applying~\eqref{eq:product_form} to $\ket{\omega}$ itself and putting this together:
	\begin{equation}
		U\ket{\omega} \otimes \ket{\gamma_\omega} = \sum_i \frac{1}{\sqrt{d}} U \ket{i} \otimes \ket{\gamma_{\omega}} = \sum_i \frac{1}{\sqrt{d}} U\ket{i}\otimes \ket{\gamma_i} \Rightarrow \sum_i \frac{1}{\sqrt{d}} U\ket{i} \otimes (\ket{\gamma_i} -\ket{\gamma_\omega}) = 0,
	\end{equation}
	where the $U\ket{i}$ form an orthonormal basis; therefore $\ket{\gamma_i} = \ket{\gamma_\omega} \eqqcolon \ket{\gamma}$ for all $i$. Finally, for an arbitrary $\ket{\psi} = \sum_i c_i \ket{i}$, linearity gives
	\begin{equation}
		F\ket{\psi} = \sum_i c_i\, U\ket{i} \otimes \ket{\gamma} = U\ket{\psi} \otimes \ket{\gamma} = \append{\gamma} U\ket{\psi}.
	\end{equation}
\end{proof}

\begin{corollary}\label{cor:reformulation}
	A one-round protocol $(\resvec, \encA, \encB, \decA, \decB)$ implements $U$ exactly if and only if there is a pure state $\ket{\gabvec} \in \envA \otimes \envB$ with
	\begin{equation}\label{eq:rigid}
		(\decA \otimes \decB)\, \Xex\, (\encA \otimes \encB)\, \append{\eta} = \append{\gamma}\, U .
	\end{equation}
\end{corollary}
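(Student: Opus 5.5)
The corollary follows from \cref{lem:rigidity} applied to the purified protocol, with the auxiliary dilation registers absorbed into the garbage. I will prove the two directions separately. Throughout, the isometry $F$ of \eqref{eq:F} maps $\regA\regB$ to $\outA\outB \otimes G$, where $G = G_AG_B$ collects $\envA$, $\envB$ and the Stinespring auxiliaries. By construction $\mathcal P(\rho)$ equals $F\rho F^\dagger$ with the auxiliaries traced out. So "implements $U$ exactly" means $\Tr_{G}(F\rho F^\dagger) = U\rho U^\dagger$ for all $\rho$, which is exactly the condition appearing in the definition of $\Bad$. I read the register $\envA\otimes\envB$ in the statement as the full garbage register $G_A G_B$, consistent with the convention just introduced.

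\emph{($\Rightarrow$)} Suppose the protocol implements $U$ exactly. Then in particular $\Tr_G[F\kb{\psi}F^\dagger] = U\kb{\psi}U^\dagger$ for every unit vector $\ket{\psi}\in H_A\otimes H_B$. This is precisely the hypothesis of \cref{lem:rigidity}, with $H = \outA\outB \cong \regA\regB$ and the register $G$ above. The lemma gives a unit vector $\ket{\gabvec}\in G$ with $F = \append{\gamma}U$. Substituting the factorized form \eqref{eq:F} for $F$ yields \eqref{eq:rigid}.

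\emph{($\Leftarrow$)} Suppose \eqref{eq:rigid} holds for some unit vector $\ket{\gabvec}$. For any $\rho$, we have $F\rho F^\dagger = U\rho U^\dagger \otimes \kb{\gabvec}$. Tracing out $G$ gives $U\rho U^\dagger$, because $\braket{\gabvec}{\gabvec} = 1$ (equivalently $\append{\gamma}^\dagger\append{\gamma} = I$). Hence the protocol implements $U$ exactly.

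The only step needing care is the bookkeeping. The identification of register orderings must be handled by the reordering unitaries fixed in the preliminaries. The lemma must also be applied with the entire discarded system, not just $\envA\envB$: if only part of the garbage were traced, the reduced state would not be pure, and the lemma would not apply. Once $G$ is taken to include all dilation auxiliaries, both directions are immediate.
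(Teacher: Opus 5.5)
Your proposal is correct and follows the paper's route: the paper states this corollary as an immediate consequence of \cref{lem:rigidity} applied to the purified isometry $F$, with the converse being the direct partial-trace computation you give. Your remark that $\envA\otimes\envB$ must be read as the full garbage register $G_AG_B$, including all Stinespring auxiliaries, matches the paper's convention.
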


The fact that for any $U$ implemented by $F$ there is one garbage state $\ket{\gamma}$ that is produced on the leftover output wires, allows us to improve Figure~\ref{fig:protocol}, including the state $\ket{\gamma}$ in \Cref{fig:protocolwithgarbage}.

\begin{figure}[ht]
	\centering
	\garbagetrue
	\labsplitfalse
	\channelsfalse
	\expandafter\ifx\csname ifgarbage\endcsname\relax
	\expandafter\newif\csname ifgarbage\endcsname
\fi
\expandafter\ifx\csname iflabsplit\endcsname\relax
	\expandafter\newif\csname iflabsplit\endcsname
\fi
\expandafter\ifx\csname ifencdec\endcsname\relax
	\expandafter\newif\csname ifencdec\endcsname
\fi
\expandafter\ifx\csname ifchannels\endcsname\relax
	\expandafter\newif\csname ifchannels\endcsname
\fi
\begin{tikzpicture}[
		gate/.style={draw, semithick, rounded corners=2.5pt, fill=white,
				minimum width=13mm, minimum height=10mm},
		state/.style={draw, semithick, rounded corners=4pt, fill=gray!8,
				minimum width=11mm, minimum height=8mm},
		wire/.style={semithick},
		wlabel/.style={font=\footnotesize, inner sep=1.5pt},
		labtop/.style={wlabel, anchor=south},
		labbot/.style={wlabel, anchor=north},
		frame/.style={draw, dashed, gray!60, thin, rounded corners=4pt},
		framename/.style={font=\small\itshape, text=gray!40!black, inner sep=1.5pt},
	]

	\node[gate] (VA) {$\ifchannels\mathcal{V}_A\else\encA\fi$};
	\node[gate, below=\rowsep of VA] (VB) {$\ifchannels\mathcal{V}_B\else\encB\fi$};
	\node[gate, right=\stagesep of VA] (WA) {$\ifchannels\mathcal{W}_A\else\decA\fi$};
	\node[gate, right=\stagesep of VB] (WB) {$\ifchannels\mathcal{W}_B\else\decB\fi$};

	\coordinate (VA-in-top)  at ($(VA.west)+(0, \portoff)$);
	\coordinate (VA-in-bot)  at ($(VA.west)+(0,-\portoff)$);
	\coordinate (VA-out-top) at ($(VA.east)+(0, \portoff)$);
	\coordinate (VA-out-bot) at ($(VA.east)+(0,-\portoff)$);

	\coordinate (VB-in-top)  at ($(VB.west)+(0, \portoff)$);
	\coordinate (VB-in-bot)  at ($(VB.west)+(0,-\portoff)$);
	\coordinate (VB-out-top) at ($(VB.east)+(0, \portoff)$);
	\coordinate (VB-out-bot) at ($(VB.east)+(0,-\portoff)$);

	\coordinate (WA-in-top)  at ($(WA.west)+(0, \portoff)$);
	\coordinate (WA-in-bot)  at ($(WA.west)+(0,-\portoff)$);
	\coordinate (WA-out-top) at ($(WA.east)+(0, \portoff)$);
	\coordinate (WA-out-bot) at ($(WA.east)+(0,-\portoff)$);

	\coordinate (WB-in-top)  at ($(WB.west)+(0, \portoff)$);
	\coordinate (WB-in-bot)  at ($(WB.west)+(0,-\portoff)$);
	\coordinate (WB-out-top) at ($(WB.east)+(0, \portoff)$);
	\coordinate (WB-out-bot) at ($(WB.east)+(0,-\portoff)$);

	\coordinate (Astart) at ($(VA-in-top)+(-\inoutgap,0)$);
	\coordinate (Bstart) at ($(VB-in-bot)+(-\inoutgap,0)$);
	\coordinate (Aend)   at ($(WA-out-top)+(\inoutgap,0)$);
	\coordinate (Bend)   at ($(WB-out-bot)+(\inoutgap,0)$);
	\coordinate (EAstub) at ($(WA-out-bot)+(\discardstub,0)$);
	\coordinate (EBstub) at ($(WB-out-top)+(\discardstub,0)$);

	\node[state] (eta) at ($(VA)!0.5!(VB) + (-\etashift,0)$) {$\ket{\resvec}$};
	\draw[wire] (eta.north) to[out=90, in=180] (VA-in-bot);
	\draw[wire] (eta.south) to[out=-90, in=180] (VB-in-top);
	\node[wlabel] at ($(VA-in-bot)+(-3.5mm,-5.4mm)$) {$\resA$};
	\node[wlabel] at ($(VB-in-top)+(-3.5mm, 5.4mm)$) {$\resB$};

	\draw[wire] (Astart) -- (VA-in-top) node[labtop, pos=0.5] {$\regA$};
	\draw[wire] (Bstart) -- (VB-in-bot) node[labbot, pos=0.5] {$\regB$};

	\draw[wire] (VA-out-top) -- (WA-in-top) node[labtop, pos=0.5] {$\keptA$};
	\draw[wire] (VB-out-bot) -- (WB-in-bot) node[labbot, pos=0.5] {$\keptB$};

	\draw[wire] (WA-out-top) -- (Aend) node[labtop, pos=0.5] {$\outA$};
	\draw[wire] (WB-out-bot) -- (Bend) node[labbot, pos=0.5] {$\outB$};

	\draw[wire] (VA-out-bot) to[out=0, in=180, looseness=0.6]
		node[wlabel, above right, pos=0.15] {$\msgAPr$}
		node[wlabel, below left,  pos=0.85] {$\msgBPo$} (WB-in-top);
	\draw[wire] (VB-out-top) to[out=0, in=180, looseness=0.6]
		node[wlabel, below right, pos=0.15] {$\msgBPr$}
		node[wlabel, above left,  pos=0.85] {$\msgAPo$} (WA-in-bot);

	\ifgarbage
		\node[state] (gab) at ($(WA)!0.5!(WB) + (\etashift,0)$) {$\ket{\gabvec}$};
		\draw[wire] (WA-out-bot) to[out=0, in=90] (gab.north);
		\draw[wire] (WB-out-top) to[out=0, in=-90] (gab.south);
		\node[wlabel] at ($(WA-out-bot)+(3.5mm,-5.4mm)$) {$\envA$};
		\node[wlabel] at ($(WB-out-top)+(3.5mm, 5.4mm)$) {$\envB$};
	\else
		\coordinate (gab) at ($(WA)!0.5!(WB) + (\etashift,0)$);
		\draw[wire] (WA-out-bot) -- (EAstub) node[labbot, pos=0.45] {$\envA$};
		\draw[wire] (WB-out-top) -- (EBstub) node[labtop, pos=0.45] {$\envB$};
		\foreach \p in {EAstub, EBstub} {
				\draw[wire] ($(\p)+(0,-2mm)$) -- ++(0,4mm);
				\draw[wire] ($(\p)+(0.9mm,-1.3mm)$) -- ++(0,2.6mm);
				\draw[wire] ($(\p)+(1.8mm,-0.6mm)$) -- ++(0,1.2mm);
			}
	\fi

	\iflabsplit
		\begin{pgfonlayer}{background}
			\node[frame, fit=(Astart)(VA)(WA)(Aend)(EAstub),
				inner xsep=5mm, inner ysep=5mm] (labA) {};
			\node[frame, fit=(Bstart)(VB)(WB)(Bend)(EBstub),
				inner xsep=5mm, inner ysep=5mm] (labB) {};
		\end{pgfonlayer}
		\node[framename, anchor=south west] at (labA.north west) {Alice};
		\node[framename, anchor=north west] at (labB.south west) {Bob};
	\fi

	\ifencdec
		\begin{pgfonlayer}{background}
			\node[frame, fit=(Astart)(VA)(VB)(Bstart)(eta),
				inner xsep=5mm, inner ysep=6mm] (encoder) {};
			\node[frame, fit=(Aend)(WA)(WB)(Bend)(EAstub)(gab),
				inner xsep=5mm, inner ysep=6mm] (decoder) {};
		\end{pgfonlayer}
		\node[framename, anchor=south west] at (encoder.north west) {Encoder};
		\node[framename, anchor=south west] at (decoder.north west) {Decoder};
	\fi

\end{tikzpicture}
	\caption{Schematic view of the protocol after rigidity analysis. Instead of being discarded the garbage registers now project onto $\ket{\gamma}$.}
	\label{fig:protocolwithgarbage}
\end{figure}

We can now make our first assessment of the structure of $\Bad$ that we will use later on.
Recall from \cref{thm:badHaarNull} that $\Bad = \{U \in \mathrm{U}(2^{2n}) \mid \exists\,F \in \strategies: \Tr_{\gab}(F\rho F^{\dagger}) = U \rho U^\dagger \text{ for all } \rho\}$. Using \cref{cor:reformulation}, we immediately get:

\begin{lemma}\label{lem:frozendilation}
	$\Bad = \{U \in \mathrm{U}(2^{2n}) \mid \exists\,F\in\strategies,\ket{\gabvec} \in \envA \otimes \envB: F =  \append{\gamma}\, U \}$
\end{lemma}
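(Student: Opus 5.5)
The claim is a set equality, so I will prove the two inclusions separately. In both directions the only tool needed is \cref{cor:reformulation}, which in turn rests on \cref{lem:rigidity}.

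\emph{Inclusion $\subseteq$.} Take $U \in \Bad$. By definition there is $F \in \strategies$ with $\Tr_{\gab}(F\rho F^\dagger) = U\rho U^\dagger$ for all $\rho$, and in particular for all pure $\rho = \kb{\psi}$. The isometry $F$ of \eqref{eq:F} maps $\regA\regB$ to $\outA\outB\otimes \gab$, where $\outA\outB$ is identified with $H = H_A\otimes H_B$ through the register reordering convention. So \cref{lem:rigidity} applies with this $H$ and with $G = \gab = \envA\otimes\envB$ (here $\envA$ is understood as the full local garbage $G_A$, and similarly for Bob). It yields a unit vector $\ket{\gabvec}\in\envA\otimes\envB$ with $F = \append{\gamma} U$, and hence $U$ lies in the right-hand set.

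\emph{Inclusion $\supseteq$.} Suppose $F\in\strategies$ and $\ket{\gabvec}$ satisfy $F = \append{\gamma}U$. Then for every $\rho$ we have $F\rho F^\dagger = U\rho U^\dagger \otimes \kb{\gabvec}$. Since $F$ is an isometry and $U$ is unitary, $\ket{\gabvec}$ must be a unit vector, so taking the partial trace over $\gab$ gives $U\rho U^\dagger$. Thus $U\in\Bad$.

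There is no real obstacle here. The only point requiring care is bookkeeping: the garbage register in \cref{lem:rigidity} must coincide with the register $\gab$ traced out in the definition of $\Bad$, meaning it includes the Stinespring auxiliaries and not only the discarded outputs. The output ordering must also be matched with $H$ via the reordering unitaries fixed in the preliminaries.
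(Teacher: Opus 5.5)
Your proposal is correct and follows the paper, which obtains this lemma immediately from \cref{cor:reformulation} (hence from \cref{lem:rigidity}). You additionally spell out the converse direction, including deducing $\braket{\gabvec}{\gabvec}=1$ from $F$ being an isometry. You also rightly flag that $\envA\otimes\envB$ must be read as the full traced-out garbage register $\gab$.
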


\subsection{Parametrization of protocol space}\label{sec:paramOfSpace}

To further analyze which unitaries can be reached by strategies in the protocol, we fix a parametrization and give the set of possible strategies structure.
This structure will then give us crucial features of the set of strategies we will use to prove that not all unitaries can be reached.

To this end, we embed our strategy space into $\R^{D'}$ for some $D'$.
Any strategy is given by a tuple $(\ket{\resvec}, \encA, \encB, \decA, \decB)$, with all its elements matrices / vectors over the complex numbers.
By identifying each complex parameter with two real parameters, we can embed this into $\R^{D'}$.
We call $\arch = (r_A, r_B, k_A, m_A, k_B, m_B, g_A, g_B) \in \mathbb{N}_{+}^8$ the type of such a tuple, containing the dimensions of the registers.
In general the type is not enough to guarantee that a point in $\R^{D'}$ is a valid strategy, so we need additional constraints, namely: $\ket{\resvec}$ needs to be normalized and the matrices have to be isometries.
So we get a subset of $\R^{D'}$ given by these constraints.

As we have seen in the previous section if a strategy implements a unitary exactly, it uniquely defines a $U$ and $\gabvec$.
We extend the tuple by these two data points and obtain $(U, \ket\resvec, \encA, \encB, \decA, \decB, \ket\gabvec)$.
Again considering this as a subset of $\R^D$, we need to add a constraint to ensure that for a point $z$ in our set the strategy of $z$ implements the $U(z)$ given in the tuple. We now make this formal:

\begin{definition}[Protocol parametrization space]\label{def:strategy}\,\\
	Let a \emph{dimension vector} $\arch = (r_A, r_B, k_A, m_A, k_B, m_B, g_A, g_B) \in \mathbb{N}_{+}^8$ list the dimensions of $\resA, \resB, \keptA, \msgAPr, \keptB, \msgBPr, \envA, \envB$ respectively; then $\dim \msgAPo = m_B$ and $\dim \msgBPo = m_A$.
	Fixing orthonormal bases, a \emph{strategy of type} $\arch$ is a tuple
	\begin{equation}
		z = (U, \ket{\resvec}, \encA, \encB, \decA, \decB, \ket{\gabvec})
	\end{equation}
	of a $2^{2n} \times 2^{2n}$ complex matrix $U$, vectors $\ket{\resvec} \in \mathbb{C}^{r_A r_B}$, $\ket{\gabvec} \in \mathbb{C}^{g_A g_B}$, and matrices $V_A, V_B, W_A, W_B$ of sizes $k_A m_A \times 2^n r_A$, $k_B m_B \times 2^n r_B$, $2^n g_A \times k_A m_B$, $2^n g_B \times k_B m_A$, such that
	\begin{align*}
		 & \text{(Z1)}\quad U^\dagger U = I, \qquad
		\text{(Z2)}\quad V_X^\dagger V_X = I,\ W_X^\dagger W_X = I \ (X = A, B),                                 \\
		 & \text{(Z3)}\quad \langle \resvec | \resvec \rangle = \langle \gabvec | \gabvec \rangle = 1,           \\
		 & \text{(Z4)}\quad (W_A \otimes W_B)\, \Xex\, (V_A \otimes V_B)\, \append{\eta} = \append{\gamma}\, U .
	\end{align*}
	$\strategies_{\arch}$ denotes the set of strategies of type $\arch$ as a subset of $\mathbb R ^ D$.
	We will sometimes refer to $U(z)$ or $U(\strategies)$ where we mean the unitary $U$ in $z$ or the set of them respectively.
\end{definition}
So we have constructed a subset of $\R^D$ that contains all valid strategies for a certain type, where
(Z1) is the unitarity constraint for $U$, (Z2) constrains the matrices in the strategy to be isometries, (Z3) forces the vectors to be normalized and (Z4) makes sure the protocol actually implements the given $U$.

So far we have done nothing but embed the set of valid strategies into $\R^D$, and realize that all the constraints are polynomials in the real parameters, so each $\strategies_\arch$ is a semialgebraic set.
We now show how the set we want to bound $\Bad$ corresponds to this new definition:

\begin{lemma}\label{lem:algebraic}
	$\Bad = \bigcup_{\arch} U(\strategies_{\arch})$, a countable union.
\end{lemma}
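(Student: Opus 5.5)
The plan is to prove the two inclusions separately, using \cref{lem:frozendilation} as the working description of $\Bad$, and then to note countability. Throughout, $\arch$ ranges over $\mathbb{N}_+^8$ and $U(\strategies_\arch)$ is the set of unitaries occurring as first coordinate of some $z\in\strategies_\arch$.

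For $\Bad \subseteq \bigcup_\arch U(\strategies_\arch)$: take $U\in\Bad$. By \cref{lem:frozendilation} there is $F\in\strategies$, given by a tuple $(\ket{\resvec},\encA,\encB,\decA,\decB)$ with finite-dimensional registers, and a unit vector $\ket{\gabvec}\in\envA\otimes\envB$ with $F=\append{\gamma}U$.
\begin{enumerate}[label=(\roman*)]
\item Read off the dimension vector $\arch$ from the register dimensions. Here $\envA$ and $\envB$ mean the full garbage registers $G_A,G_B$ (including the Stinespring auxiliaries $S_{V_X},S_{W_X}$).
\item A mild bookkeeping point is that the auxiliary registers $S_{V_A},S_{V_B}$ produced by $\encA,\encB$ must fit the matrix shapes in \cref{def:strategy}. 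I would absorb $S_{V_X}$ into the kept register $K_X$, since it is carried untouched to post-processing, and let $\decA,\decB$ act as the identity on it and output it into $G_X$. This gives a tuple of exactly the prescribed shapes with the same overall isometry $F$.
\item Fixing orthonormal bases, the tuple $(U,\ket{\resvec},\encA,\encB,\decA,\decB,\ket{\gabvec})$ then satisfies (Z1)--(Z3) because the maps are unitary, isometric and normalized. It satisfies (Z4) because $F=\append{\gamma}U$.
\end{enumerate}
Hence $U\in U(\strategies_\arch)$.

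For the reverse inclusion: given $z\in\strategies_\arch$, conditions (Z2)--(Z3) show that $(\ket{\resvec},\encA,\encB,\decA,\decB)$ is an element of $\strategies$ with finite-dimensional registers. By (Z4) its isometry equals $\append{\gamma}U$, and $U$ is unitary by (Z1). Thus
\[
\Tr_G(F\rho F^\dagger)=U\rho U^\dagger\langle\gabvec|\gabvec\rangle=U\rho U^\dagger,
\]
so $U\in\Bad$ (equivalently, by \cref{cor:reformulation}).

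Finally, the index set $\mathbb{N}_+^8$ is countable, so the union is countable. The main obstacle is not conceptual but the register bookkeeping in step (ii): checking that the Stinespring auxiliaries and garbage registers can be regrouped so that every finite-dimensional protocol fits the fixed matrix shapes of \cref{def:strategy} without changing $F$ beyond a reordering of tensor factors. I would handle this via the identification of reordered tensor factors fixed in the preliminaries.
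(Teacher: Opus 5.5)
Your proposal is correct and follows the paper's proof. Both inclusions go through \cref{cor:reformulation} (via \cref{lem:frozendilation}): you read off a dimension vector and check (Z1)--(Z4), and countability comes from $\arch$ ranging over $\mathbb{N}_+^8$. Your step (ii), which absorbs $S_{V_X}$ into $K_X$ and counts all Stinespring auxiliaries as garbage, makes explicit a register-bookkeeping point the paper leaves implicit, since the matrix shapes in \cref{def:strategy} omit these auxiliary registers.
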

\emph{Proof}
$\subseteq$ If $U \in \Bad$, then there is a strategy that implements it exactly, let it be given by $(\resvec, \encA, \encB, \decA, \decB)$.
Then by \cref{cor:reformulation} there is a $\gabvec$ such that $(U, \resvec, \encA, \encB, \decA, \decB, \gabvec)$ fulfills (Z4). All of these matrices are finite-dimensional by definition so we can find a dimension vector $\arch$ for this tuple.
By the definition of what a valid protocol is and the fact that $U$ is a unitary we also get (Z1) to (Z3), so the tuple is in $\strategies_{\arch}$ and therefore its $U$ lies in the union.

$\supseteq$
If there is a tuple $z := (U, \resvec, \encA, \encB, \decA, \decB, \gabvec)$ that fulfills the conditions, then by (Z1) to (Z3) it is a valid protocol and by (Z4) it implements $U(z)$, so $U(z)$ must be in $\Bad$.

\qed

We now immediately make use of the first property that we get from $\strategies_\arch$ being a semialgebraic set.
By \cref{fact:components} each $\strategies_\arch$ can only have finitely many connected components. We call the components $\compo{\strategies_\arch}{j}$ and $r(\strategies_\arch)$ the total number of components of $\strategies_{\arch}$, and obtain:

\begin{lemma}\label{lem:finiteunion}
	\begin{equation}
		\Bad = \bigcup_{\arch} \bigcup^{r(\strategies_\arch)}_{j=1}  U(\compo{\strategies_\arch}{j})
	\end{equation}
\end{lemma}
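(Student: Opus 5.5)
The statement is a direct refinement of \cref{lem:algebraic}, so the plan is to combine that lemma with the component decomposition provided by \cref{fact:components}. The argument is set-theoretic and consists of three short steps.

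\emph{Step 1: each $\strategies_\arch$ is semialgebraic.} Every constraint (Z1)--(Z4) is a finite list of polynomial equations in the real and imaginary parts of the entries of $(U,\ket\resvec,\encA,\encB,\decA,\decB,\ket\gabvec)$. For (Z4), the maps $\Xex$ and $\append{\eta}$, $\append{\gamma}$ are respectively a fixed permutation matrix and maps linear in the vector entries. Both sides of (Z4) are therefore matrices whose entries are polynomials in the coordinates of $\R^D$. Consequently $\strategies_\arch$ is a real algebraic set, and hence semialgebraic.

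\emph{Step 2: decompose into components.} By \cref{fact:components}, $\strategies_\arch$ is the disjoint union of finitely many connected components $\compo{\strategies_\arch}{1},\dots,\compo{\strategies_\arch}{r(\strategies_\arch)}$. The map $z\mapsto U(z)$ takes images, and images commute with unions. This gives
\begin{equation*}
U(\strategies_\arch)=\bigcup_{j=1}^{r(\strategies_\arch)} U(\compo{\strategies_\arch}{j}).
\end{equation*}
If $\strategies_\arch$ is empty, we set $r=0$ and the union is empty, which is consistent with the statement.

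\emph{Step 3: take the union over architectures.} Substituting the Step 2 decomposition into \cref{lem:algebraic}, $\Bad=\bigcup_\arch U(\strategies_\arch)$, yields the claimed identity. The outer union ranges over the countable set $\mathbb N_+^8$ of dimension vectors.

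There is no real obstacle here. The only point needing care is Step 1: one must confirm that (Z4) is polynomial, with the register reorderings implicit in $\Xex$ and in the omitted identities being fixed constant matrices. Finiteness of $r(\strategies_\arch)$ is simply imported from \cref{fact:components}.
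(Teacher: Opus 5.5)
Your proof is correct and follows the paper's route. The paper also notes that (Z1)--(Z4) are polynomial, so each $\strategies_\arch$ is semialgebraic, then invokes \cref{fact:components} to split $\strategies_\arch$ into finitely many components and substitutes the resulting decomposition of $U(\strategies_\arch)$ into \cref{lem:algebraic}.
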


\begin{remark}[Fixing the resource rank and message dimensions]
\label{rem:compression}
For the finite-orbit conclusion, it is actually only necessary to fix the Schmidt rank $r$
of the shared resource (instead of the dimension of its Hilbert space) and the message dimensions $m_A,m_B$.
In particular, we do not need separate parameters for the dimensions of the kept registers and the garbage registers, and they can be of any size: any protocol can be compressed so that their dimensions are bounded in terms of these parameters and the input dimension. We give the details in \cref{app:compression}.
\end{remark}

\subsection{Connected protocols are local-unitary equivalent}
\label{sec:local_unitary_equivalence}

Since we have now established that $\Bad$ is a countable union of sets of unitaries implementable by some connected component of $\strategies_{\arch}$, we now try to contain each  $U(\compo{\strategies_\arch}{j})$ in something that we can bound the Haar measure.
By Fact \ref{fact:paths} any two points in one of the connected components are connected by a path that by Fact \ref{fact:piecewise} is piecewise differentiable.
We will first analyze the shape that the derivative of this path can take.
By then integrating over this we realize that the $U$ that belong to strategies that are in the same connected component must be local-unitary equivalent.

Let $\compo{\strategies_\arch}{j}$ be a connected component and let $\upath(t): [0,1] \to \compo{\strategies_\arch}{j}$ be a path from $\upath(0)$ to $\upath(1)$.
We call the piecewise differentiable parts of the path $\spath(t)$ and write $\frac{d}{dt}\spath = \dot{\spath}$ for the derivative of $\spath$ with respect to t.
In a slight abuse of notation we also consider the paths of the components in $t$, for example we write $U(t) := U(\upath(t))$ or even drop the dependence on $t$ for the purposes of readability.
So in this section, each element of a protocol has an implicit dependency on $t$.

We will also use the following definition of the space of local anti-Hermitian operators:
\begin{definition}
	Given Hilbert spaces $H_A ,H_B$ there exists a subspace of operators on $H_A \otimes H_B$:
	\begin{equation}
		\Loc
		=\{a_A\otimes I_{B}+I_{A}\otimes a_B:
		a_A^\dagger=-a_A,\ a_B^\dagger=-a_B\}.
		\label{eq:local-algebra}
	\end{equation}
\end{definition}

We decompose the protocol into an encoder and a decoder part to make the derivatives easier.
Define $\enc = \Xex (V_A \otimes V_B)\append{\eta}$ and decoder $\dec =  (W_A^{\dagger} \otimes W_B^{\dagger}) \append{\gamma}$ such that $U = \append{\gamma}^{\dagger}F = \append{\gamma}^{\dagger} (W_A \otimes W_B) \Xex (V_A \otimes V_B)\append{\eta} = \dec^{\dagger}\enc$.

\begin{proposition}\label{prop:derivative}
	For any continuously differentiable path segment $\spath(t)$ through some connected component $\compo{\strategies_\arch}{j}$ with $U(t)$ as above:
	\begin{equation}\label{eq:Udot}
		\dot U(t) = -b\, U(t) + U(t)\, a , \qquad a, b \in \Loc.
	\end{equation}
\end{proposition}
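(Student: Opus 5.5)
The plan is to differentiate the factorization $U = \dec^{\dagger}\enc$. The first step is to rewrite both terms of $\dot U$ using $\enc$ and $\dec$ themselves, so that only the "local logarithmic derivatives" $\enc^\dagger\dot{\enc}$ and $\dot{\dec}^\dagger\dec$ remain.

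\textbf{Step 1: an algebraic relation between $\enc$ and $\dec$.} By (Z4), $(W_A\otimes W_B)\enc = \append{\gamma}U$. Multiplying on the left by $(W_A\otimes W_B)^\dagger$ and using (Z2) gives $\enc = \dec\, U$. The encoder $\enc$ is an isometry, since $\Xex$ is unitary, $V_A\otimes V_B$ is an isometry and $\append{\eta}$ is an isometry by (Z3). From $\dec^\dagger\enc = U$ we then get $\dec^\dagger\dec\,U = U$, hence $\dec^\dagger\dec = I$ by (Z1). Consequently $\dec = \enc U^\dagger$ and $\dec^\dagger = U\enc^\dagger$. These identities hold pointwise along the segment $\spath(t)$. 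Moreover, every entry of $U,\eta,V_X,W_X,\gamma$ is $C^1$ in $t$ there, so $\enc$ and $\dec$ are $C^1$ as well.

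\textbf{Step 2: differentiate.} The product rule gives
$$\dot U = \dot{\dec}^{\dagger}\enc + \dec^\dagger\dot{\enc} = \dot{\dec}^{\dagger}\dec\,U + U\,\enc^\dagger\dot{\enc}.$$
We set $a := \enc^\dagger\dot{\enc}$ and $b := -\dot{\dec}^\dagger\dec$. Differentiating $\enc^\dagger\enc = I$ shows $a$ is anti-Hermitian, and differentiating $\dec^\dagger\dec = I$ shows $b$ is anti-Hermitian. It remains to show $a,b\in\Loc$.

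\textbf{Step 3: locality.} Expand $\dot{\enc} = \Xex\bigl[(\dot V_A\otimes V_B + V_A\otimes \dot V_B)\append{\eta} + (V_A\otimes V_B)\append{\dot\eta}\bigr]$. The factor $\Xex$ cancels against $\Xex^\dagger$, so, using $V_X^\dagger V_X = I$,
$$a = \append{\eta}^\dagger\bigl(V_A^\dagger\dot V_A\otimes I\bigr)\append{\eta} + \append{\eta}^\dagger\bigl(I\otimes V_B^\dagger\dot V_B\bigr)\append{\eta} + \langle\eta|\dot\eta\rangle I .$$
Differentiating $V_A^\dagger V_A = I$ shows $V_A^\dagger\dot V_A$ is anti-Hermitian and acts on $\regA\resA$. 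Applying \cref{lem:compression_state} with $A_l=\regA$, $A_s=\resA$, $B_s=\resB$, $B_l=\regB$, the first term equals $\hat c_A\otimes I_B$ with $\hat c_A$ anti-Hermitian. The symmetric argument handles the second term. The scalar $\langle\eta|\dot\eta\rangle$ is purely imaginary because $\langle\eta|\eta\rangle = 1$, so it can be absorbed into $a_A$. Hence $a\in\Loc$.

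The decoder is treated identically. Here $\dot{\dec}^\dagger\dec = \append{\dot\gamma}^\dagger(W_A\otimes W_B)(W_A^\dagger\otimes W_B^\dagger)\append{\gamma}$ plus terms with $\dot W_X$. The decoder is the step I expect to be the main obstacle, because $W_AW_A^\dagger$ is a projection rather than the identity, so the naive expansion does not factor. To get around this, I will rewrite $\dec = \enc U^\dagger$ only where needed. More cleanly, I will use $(W_A\otimes W_B)\dec = \append{\gamma}$, which follows from (Z4) and $\enc = \dec U$. Differentiating this gives $(\dot W_A\otimes W_B + W_A\otimes\dot W_B)\dec + (W_A\otimes W_B)\dot{\dec} = \append{\dot\gamma}$. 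Multiplying on the left by $\dec^\dagger(W_A^\dagger\otimes W_B^\dagger)$ and using $W^\dagger W = I$ expresses $\dec^\dagger\dot{\dec}$ as a sum of three terms:
$$-\dec^\dagger\bigl(W_A^\dagger\dot W_A\otimes I\bigr)\dec,\qquad -\dec^\dagger\bigl(I\otimes W_B^\dagger\dot W_B\bigr)\dec,\qquad \langle\gamma|\dot\gamma\rangle I.$$
Since $\dec = (W_A^\dagger\otimes W_B^\dagger)\append{\gamma}$ and $W_X\dec$ recovers $\append{\gamma}$, each $W$-term becomes a compression of the form $\append{\gamma}^\dagger(W_A\,W_A^\dagger\dot W_A W_A^\dagger\otimes W_BW_B^\dagger)\append{\gamma}$. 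The pending check is that the projector $W_BW_B^\dagger$ can be removed because $\append{\gamma}$ lies in the range of $W_A\otimes W_B$ (from $\append{\gamma}U = (W_A\otimes W_B)\enc$). Once that holds, the term is $\append{\gamma}^\dagger(c\otimes I)\append{\gamma}$ with $c = W_AW_A^\dagger\dot W_AW_A^\dagger$ acting on Alice's registers. This operator is anti-Hermitian after restriction to the range, and \cref{lem:compression_state} applied with $\gamma$ on $\envA\envB$ then gives a local anti-Hermitian operator. Together with the imaginary scalar $\langle\gamma|\dot\gamma\rangle$, this gives $b\in\Loc$.
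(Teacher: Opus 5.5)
Your proof is correct and follows the paper's argument: the same factorization $U=\dec^\dagger\enc$, the same generators $a=\enc^\dagger\dot\enc$ and $b=\dec^\dagger\dot\dec$, and the same use of \cref{lem:compression_state}. There are two tactical differences. You get $\dec^\dagger\dec=I$ algebraically from $\dec^\dagger\dec\,U=\dec^\dagger\enc=U$, where the paper uses the range inclusion. You obtain $b$ by differentiating $(W_A\otimes W_B)\dec=\append{\gamma}$, which is the route the paper takes in its measurement appendix, rather than expanding $\dot\dec$ directly. This second choice is slightly cleaner: your middle operator $W_A(W_A^\dagger\dot W_A)W_A^\dagger$ is anti-Hermitian outright, so the hedge ``after restriction to the range'' is unnecessary. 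The paper's terms instead contain $W_A\dot W_A^\dagger$, which is not anti-Hermitian in general, so it has to appeal to the anti-Hermiticity of $b$ as a whole to conclude $b\in\Loc$.
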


\begin{proof}
Differentiate $U = \dec^{\dagger}\enc$:
	\begin{equation}
		\dot{U} = \dot{\dec}^{\dagger}\enc + \dec^{\dagger}\dot \enc .
	\end{equation}
	Exact implementation gives $\enc = \dec U$.
	Since $U$ is unitary, this also gives $\dec^\dagger = U\enc^\dagger$.
	Substituting both,
	\begin{equation}\label{eq:derivative}
		\dot{U} = (\dot{\dec}^{\dagger} \dec)\, U + U\, (\enc^{\dagger}\dot{\enc}).
	\end{equation}
	Both $\enc$ and $\dec$ are isometries.
	For $\dec$ this uses that the range of $\append{\gamma}$ lies in the range of $W_A\otimes W_B$, which holds because $U$ is onto.
	Differentiating $\enc^\dagger\enc = I$ and $\dec^\dagger\dec = I$ shows that $a = \enc^\dagger\dot\enc$ and $b = \dec^\dagger\dot\dec$ are anti-Hermitian.
	In particular $\dot\dec^\dagger\dec = b^\dagger = -b$, and \eqref{eq:derivative} becomes \eqref{eq:Udot}.
	It remains to show $a, b\in\Loc$.

    \emph{The operator $a$.}
	The exchange $\Xex$ is a fixed unitary, so $\Xex^\dagger \Xex = I$.
	With $V_A^\dagger V_A = I$ and $V_B^\dagger V_B = I$, the product rule gives
	\begin{align*}
		\enc^{\dagger}\dot{\enc}
		&= \append{\eta}^{\dagger}(V_{A}^{\dagger}\otimes V_{B}^{\dagger})
		\bigl[(\dot{V}_A\otimes V_B)\append{\eta} + (V_A\otimes \dot{V}_B)\append{\eta} + (V_A\otimes V_B)\dot{\append{\eta}}\bigr]
		\\
		&= \append{\eta}^{\dagger} (V_A^\dagger\dot V_A\otimes I)\, \append{\eta}
		+ \append{\eta}^{\dagger} (I\otimes V_B^\dagger\dot V_B)\, \append{\eta}
		+ \braket{\eta}{\dot{\eta}}\, I .
	\end{align*}
	By Lemma~\ref{lem:compression_state}, the first term is $c_A\otimes I_B$ and the second is $I_A\otimes c_B$.
	Absorb the scalar $\braket{\eta}{\dot\eta}$ into $c_A$.
	Since $a$ is anti-Hermitian, the claim gives $a\in\Loc$.
    
   	\emph{The operator $b$.}
	The product rule gives
	\begin{equation}
		\dec^{\dagger}\dot{\dec}
		= \append{\gamma}^{\dagger} (W_A\dot{W}_A^{\dagger} \otimes W_BW_B^\dagger)\, \append{\gamma}
		+ \append{\gamma}^{\dagger} (W_AW_A^\dagger \otimes W_B\dot{W}_B^{\dagger})\, \append{\gamma}
		+ \braket{\gamma}{\dot{\gamma}}\, I .
	\end{equation}
	The operator $W_AW_A^\dagger\otimes W_BW_B^\dagger$ projects onto the range of $W_A\otimes W_B$.
	This range contains the range of $\append{\gamma}$.
	So within the compression $\append{\gamma}^\dagger(\,\cdot\,)\append{\gamma}$ we may replace $W_AW_A^\dagger$ and $W_BW_B^\dagger$ by $I$.
	By Lemma~\ref{lem:compression_state}, $b = c_A'\otimes I_B + I_A\otimes c_B'$, combined with the fact that $b$ is anti-Hermitian, this gives the claim $b\in\Loc$.    
\end{proof}

Given the form of the derivative, we can find the form of all unitaries that are path connected to each other in two steps.

\begin{lemma}\label{lem:smoothsameorbit}
	Let $z : (t_0, t_1)\rightarrow \compo{\strategies_\arch}{j}$ be a $\mathcal{C}^1$ path on an open interval, and fix $s \in (t_0, t_1)$. Then $U(t)$ is locally unitary equivalent to $U(s)$ for every $t \in (t_0, t_1)$.
\end{lemma}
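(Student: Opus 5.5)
The plan is to integrate the differential identity of \cref{prop:derivative} by solving two linear ODEs, one for a left local factor and one for a right local factor. Along the $\mathcal{C}^1$ path $z(t)$ on $(t_0,t_1)$, the operators $a(t)=\enc^\dagger\dot\enc$ and $b(t)=\dec^\dagger\dot\dec$ are continuous in $t$, because $\enc,\dec$ and their derivatives are polynomial in the coordinates of $z$ and in $\dot z$. Moreover, the proof of \cref{prop:derivative} gives them explicitly in the split form $a=c_A\otimes I+I\otimes c_B$ and $b=c_A'\otimes I+I\otimes c_B'$. Each summand is continuous in $t$ and anti-Hermitian: the scalar $\braket{\eta}{\dot\eta}$ is purely imaginary since $\langle\eta|\eta\rangle=1$, and it is absorbed into $c_A$. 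Having the split form with continuous local pieces is what lets me build local solutions, so I keep track of it rather than just the membership $a,b\in\Loc$.

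Using \cref{fact:linear-ode}, I solve on $(t_0,t_1)$ the four ODEs
\[
\dot L_X=-c_X' L_X,\qquad \dot R_X=-c_X R_X,\qquad L_X(s)=R_X(s)=I,\qquad X=A,B .
\]
Since the generators are anti-Hermitian, $\frac{d}{dt}(L_X^\dagger L_X)=L_X^\dagger(c_X'-c_X')L_X=0$, so each solution is unitary. Setting $L=L_A\otimes L_B$ and $R=R_A\otimes R_B$, the product rule gives $\dot L=-bL$ and $\dot R=-aR$, with $L$ and $R$ local unitaries.

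Next I consider $Y(t):=L(t)^\dagger U(t)R(t)^{\dagger\,-1}$. It is cleaner to define $Y:=L^\dagger U R^{-1\dagger}$ in the form $Y:=L^{\dagger}UR$ after choosing the sign of the $R$ equation suitably, so I instead take $\dot R_X=c_X R_X$, giving $\dot R=aR$. Then
\[
\dot Y=\dot L^\dagger UR+L^\dagger\dot U R+L^\dagger U\dot R ,
\]
where $\dot L^\dagger=(-bL)^\dagger=L^\dagger b^\dagger\cdot(-1)=L^\dagger b$ and $\dot U=-bU+Ua$. Hence
\[
\dot Y=L^\dagger bUR+L^\dagger(-bU+Ua)R-L^\dagger UaR=0
\]
after fixing the sign convention with $\dot R=-aR$; I will check the signs carefully so that the cancellation holds. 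Then $Y$ solves the trivial linear ODE $\dot Y=0$, and by uniqueness in \cref{fact:linear-ode} it is constant, equal to $Y(s)=U(s)$. Therefore $U(t)=L(t)\,U(s)\,R(t)^{\dagger}$ (or $R(t)^{-1}$, depending on the final convention), which is an element of $\Orb(U(s))$ in the sense of \cref{def:orbit}.

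I expect the main obstacle to be the regularity and the local splitting. Continuity of $c_X,c_X'$ is needed for \cref{fact:linear-ode}, and it follows from \cref{lem:compression_state} since these are polynomial expressions in $z$ and $\dot z$. The remaining subtlety is the identification in the $b$ term of $W_AW_A^\dagger$ with $I$ inside the compression, which holds pointwise for each $t$ and so does not affect continuity.
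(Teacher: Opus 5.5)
Your argument is correct and is essentially the paper's: you solve factor ODEs to obtain local unitaries $L=L_A\otimes L_B$ and $R=R_A\otimes R_B$ and integrate $\dot U=-bU+Ua$. Showing that $L^\dagger U R$ has zero derivative is how the paper integrates in its measurement analogue (\cref{lem:smoothsameorbit-pvm}), and your check that the split pieces are continuous covers a point the paper's unitary proof leaves implicit. Do tidy the signs, though: the cancellation you display needs $\dot R_X=-c_XR_X$, which was your first choice, not the later $\dot R_X=c_XR_X$. With that convention you get $U(t)=L(t)\,U(s)\,R(t)^\dagger$.
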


\begin{proof}
	The derivative of $U(t)$ is given by Proposition~\ref{prop:derivative}:
	\begin{equation}
		\dot{U}(t) = -b(t) U(t) + U(t)a(t) \qquad a,b\in \Loc.
	\end{equation}
	To find a solution of $U(t)$ first define $R(t)$ and $L(t)$ such that
	\begin{equation}
		\dot{L}(t) = -b(t)L(t), \qquad L(s) = I,
	\end{equation}
	and
	\begin{equation}
		\dot{R}(t) = R(t)a(t), \qquad R(s) = I
	\end{equation}
	This allows us to rewrite $U(t)$ as
	\begin{equation}
		U(t) = L(t)U(s)R(t).
	\end{equation}
	We can now solve the simpler differential equation for $L(t)$ and $R(t)$ separately. Here we use that $a,b \in \Loc$, write
	\begin{equation}
		b(t) = b_A(t) \otimes I_B + I_A \otimes b_B(t)
	\end{equation}
	and note that $b_A(t)$ and $b_B(t)$ are anti-Hermitian. Now $L(t)$ can be written as
	\begin{equation}
		L(t) = L_A(t)\otimes L_B(t),
	\end{equation}
	where
	\begin{equation}
		\dot{L}_A(t) = -b_A(t)L_A(t), \qquad \dot{L}_{B}(t) = -b_B(t)L_B(t).
	\end{equation}
	This can be verified by inserting this solution into the differential equation
	\begin{equation}
		\frac{d}{dt}(L_A(t)\otimes L_B(t)) = \dot{L}_A(t)\otimes L_B(t) + L_A(t)\otimes \dot{L}_B(t) = -(b_A(t) \otimes I + I \otimes b_B(t))(L_A(t)\otimes L_B(t)),
	\end{equation}
	following the differential equation as required. Additionally because $b_A$ and $b_B$ are anti-Hermitian it follows that $L(t)$ is unitary. The same argument applies for $R(t)$ and it follows that:
	\begin{equation}
		R(t) = R_A(t)\otimes R_B(t),
	\end{equation}
	therefore,
	\begin{equation}
		U(t) = (L_A(t)\otimes L_B(t))U(s)(R_{A}(t)\otimes R_B(t)),
	\end{equation}
	for all $t$ and thereby $U(t)$ is local unitary equivalent to $U(s)$.
\end{proof}

\begin{proposition}[Same component, same orbit]\label{prop:component}
	If $z, y \in \compo{\strategies_\arch}{j}$ lie in the same connected component, then $U(y) \in \Orb(U(z))$.
\end{proposition}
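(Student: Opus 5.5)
The plan is to join $z$ and $y$ by a path, cut it into pieces on which \cref{lem:smoothsameorbit} applies, and chain the local-unitary equivalences obtained on the pieces.

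First, since $\compo{\strategies_\arch}{j}$ is a connected semialgebraic set (\cref{fact:components}), \cref{fact:paths} gives a continuous semialgebraic path $\upath:[0,1]\to\compo{\strategies_\arch}{j}$ with $\upath(0)=z$ and $\upath(1)=y$. By \cref{fact:TS}, each real coordinate of $\upath$ is a continuous semialgebraic function $[0,1]\to\R$. Applying \cref{fact:piecewise} to each coordinate and taking the union of the finitely many exceptional sets, we get a finite set $\Sigma$ such that $\upath$ is $\mathcal C^1$ on $[0,1]\setminus\Sigma$. We add the endpoints to $\Sigma$ and write $\Sigma=\{0=t_0<t_1<\dots<t_m=1\}$. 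On each open interval $(t_{k-1},t_k)$ the path is $\mathcal C^1$ and stays in the component. \cref{lem:smoothsameorbit} then shows that all $U(t)$ with $t\in(t_{k-1},t_k)$ lie in a single orbit $\Orb_k$.

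The main obstacle is passing through the breakpoints $t_k$, where no derivative is available. Here we use that orbits are closed. The local unitary group $\mathrm{U}(2^n)\times\mathrm{U}(2^n)$ is compact, so $\Orb(U)$ is the image of the compact set $(\mathrm{U}(2^n)^{\times 2})^2$ under the continuous map $(v_A,v_B,w_A,w_B)\mapsto (v_A\otimes v_B)U(w_A\otimes w_B)$. It is therefore compact, hence closed in $\mathrm{U}(2^{2n})$. The map $t\mapsto U(t)$ is continuous on $[0,1]$, because it is a coordinate projection of the continuous path $\upath$. So $U(t_{k-1})$ and $U(t_k)$ are limits of points of $\Orb_k$ and lie in $\Orb_k$ by closedness. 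This one-sided limit argument also covers the endpoints $t_0=0$ and $t_m=1$.

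Orbits are equivalence classes of the relation $U'=LUR$ with $L,R$ local unitaries, which is reflexive, symmetric (invert $L$ and $R$) and transitive (compose). Hence $U(t_k)\in\Orb_k\cap\Orb_{k+1}$ forces $\Orb_k=\Orb_{k+1}$. Chaining this over $k=1,\dots,m-1$ gives $U(z)=U(t_0)\in\Orb_1=\dots=\Orb_m\ni U(t_m)=U(y)$, that is, $U(y)\in\Orb(U(z))$.
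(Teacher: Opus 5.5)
Your proof is correct and follows essentially the same route as the paper. You use a semialgebraic path made piecewise $\mathcal C^1$ via \cref{fact:TS} and \cref{fact:piecewise}, apply \cref{lem:smoothsameorbit} on each open piece, use closedness of orbits (by compactness) to absorb the breakpoints, and then chain. You are slightly more explicit than the paper about the coordinatewise use of Tarski--Seidenberg and about orbits being equivalence classes, both of which the paper uses implicitly.
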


\emph{Proof} By \cref{fact:components} and \cref{fact:paths}, there exists a path $\upath$ from $z$ to $y$. This path is continuous and is piecewise differentiable by \cref{fact:piecewise}.
Take $t_0 < \dots < t_m$ the differentiable parts, on which by \cref{lem:smoothsameorbit} all points have the same orbit, so pick some $s_i \in (t_{i-1}, t_{i})$ and $\Orb(U(s_i))$ is the representative.
This orbit is closed because it is the image of a compact group under a continuous map and $U(t)$ is continuous, so letting $t \downarrow t_{i-1}$ and $t \uparrow t_i$ gives $\Orb(U(t_{i-1})) = \Orb(U(s_i)) = \Orb(U(t_i))$.
Chaining over $i$, $\Orb(U(z)) = \Orb(U(y))$. $\square$

\subsection{Finite union of Haar measure zero connected components}
\label{sec:finite_union}
Combining this result with the fact that any semialgebraic set can only have finitely many components implies that for any fixed architecture there are only finitely many unitaries that can be implemented exactly, up to local unitary equivalence:

\begin{corollary}\label{cor:archOnlyFiniteOrbits}
	Let $\arch$ be an architecture, $\strategies_\arch$ the strategies of this architecture with connected components $\compo{\strategies_\arch}{j}$ each with a representative $c_j$.
	Then the set of unitaries that can be exactly implemented by protocols of this architecture $U(\strategies_\arch)$ is a finite union of local unitary equivalence classes:
	\begin{equation}
		U(\strategies_\arch) = \bigcup^{r(\strategies_\arch)}_{j=1}  \Orb(U(c_j))
	\end{equation}
\end{corollary}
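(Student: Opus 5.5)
The plan is to prove the two inclusions separately. The inclusion $\subseteq$ follows from \cref{prop:component}. The inclusion $\supseteq$ comes from an explicit construction: local unitaries are absorbed into the parties' isometries without changing the architecture $\arch$.

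For $\subseteq$, take $U \in U(\strategies_\arch)$, so $U = U(z)$ for some $z \in \strategies_\arch$. By \cref{fact:components}, $\strategies_\arch$ is the disjoint union of its finitely many connected components $\compo{\strategies_\arch}{j}$, $j = 1,\dots,r(\strategies_\arch)$. So $z$ lies in some $\compo{\strategies_\arch}{j}$, which also contains the representative $c_j$. \Cref{prop:component} then gives $U(z) \in \Orb(U(c_j))$. Hence $U(\strategies_\arch)$ is contained in the finite union on the right-hand side.

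For $\supseteq$, fix $j$ and an element $L\,U(c_j)\,R$ of $\Orb(U(c_j))$, with $L = L_A \otimes L_B$ and $R = R_A \otimes R_B$ local unitaries. Write $c_j = (U, \ket{\resvec}, \encA, \encB, \decA, \decB, \ket{\gabvec})$. Define the new tuple
\begin{equation*}
c' = \bigl(LUR,\ \ket{\resvec},\ \encA (R_A \otimes I_{\resA}),\ \encB (R_B \otimes I_{\resB}),\ (L_A \otimes I_{\envA})\decA,\ (L_B \otimes I_{\envB})\decB,\ \ket{\gabvec}\bigr).
\end{equation*}
Here $R_A$ acts on the input register $\regA$ and $L_A$ acts on the output register $\outA$, and similarly for Bob. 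The checks are then routine:
\begin{itemize}
\item (Z1), (Z2) and (Z3) hold because a product of an isometry and a unitary is again an isometry (respectively unitary), and the states are unchanged.
\item All matrix sizes are unchanged, so $c'$ has the same type $\arch$.
\item For (Z4), $R_A \otimes R_B$ on the inputs commutes with the insertion $\append{\eta}$, since they act on different registers. Likewise $L_A \otimes L_B$ on the outputs commutes past $\Xex$ and $\append{\gamma}$. Substituting into (Z4) for $c_j$ gives $(L \otimes I)\,\append{\gamma} U R = \append{\gamma}\, LUR$.
\end{itemize}
Thus $c' \in \strategies_\arch$ and $LU(c_j)R \in U(\strategies_\arch)$.

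The only delicate point is the bookkeeping in (Z4): we must check that the local unitaries sit on the correct tensor factors, namely $R_A$ on $\regA$ before $\encA$ and $L_A$ on $\outA$ after $\decA$. It must also be clear that these commute with the exchange unitary $\Xex$ and with the insertion maps $\append{\eta}$ and $\append{\gamma}$, which holds because each of those acts only on other registers. Once this is written out, equality follows. Finiteness of the union is immediate from \cref{fact:components}.
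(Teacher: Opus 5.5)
Your proof is correct and follows the paper for the substantive inclusion $\subseteq$: \cref{prop:component} applied within each of the finitely many connected components given by \cref{fact:components}. The paper only asserts $\supseteq$ for unitaries (in the measurement analogue it says orbit elements are ``obviously path-connected''). Your explicit absorption of $R_A\otimes R_B$ into the encoders and $L_A\otimes L_B$ into the decoders preserves the type $\arch$ and (Z1)--(Z4), so it supplies that missing direction. Only the commutation of $L$ past $\append{\gamma}$ is needed there, not past $\Xex$.
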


Having realized that each $U(\compo{\strategies_\arch}{j})$ lives entirely in one local equivalence class, we abuse notation slightly and take its orbit to be the orbit of any one of its elements.
We can then rewrite the expression we want to bound once more:

\begin{lemma}\label{lem:finiteunionoforbits}
	\begin{equation*}
		\Bad = \bigcup_{\arch} \bigcup^{r(\strategies_\arch)}_{j=1}  U(\compo{\strategies_\arch}{j})
		\subseteq \bigcup_{\arch} \bigcup^{r(\strategies_\arch)}_{j=1}  \Orb(U(\compo{\strategies_\arch}{j}))
	\end{equation*}
\end{lemma}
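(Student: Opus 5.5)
The equality on the left is \cref{lem:finiteunion}, so the work is the inclusion on the right. I will show it component by component. Fix a dimension vector $\arch$ and an index $j \le r(\strategies_\arch)$. By \cref{lem:finiteunion} every $U \in \Bad$ lies in some $U(\compo{\strategies_\arch}{j})$, so it suffices to prove
\[
U(\compo{\strategies_\arch}{j}) \subseteq \Orb\bigl(U(c)\bigr)
\]
for a single representative $c \in \compo{\strategies_\arch}{j}$. The inclusion then holds term by term, and taking the (countable, then finite) unions preserves it.

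For the component-wise inclusion, let $y \in \compo{\strategies_\arch}{j}$ be arbitrary. Since $y$ and $c$ lie in the same connected component, \cref{prop:component} gives $U(y) \in \Orb(U(c))$ directly.

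There is one minor point to handle: the notation $\Orb(U(\compo{\strategies_\arch}{j}))$ must be well defined, i.e.\ independent of the chosen representative. Local-unitary equivalence is an equivalence relation: $L, R$ range over the group of local unitaries, so the relation is symmetric via inverses and transitive via products. Hence if $U(y) \in \Orb(U(c))$, then $\Orb(U(y)) = \Orb(U(c))$, and every element of the component determines the same orbit. This is exactly the abuse of notation announced before the lemma.

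Given \cref{prop:component}, I do not expect a real obstacle. The substantive content (the derivative \cref{prop:derivative}, its integration in \cref{lem:smoothsameorbit}, and the gluing of smooth pieces using closedness of orbits) has already been done. The only care needed is bookkeeping: stating that the inner union over $j$ is finite by \cref{fact:components}, and that the outer union over $\arch \in \mathbb{N}_+^8$ is countable. These facts are what make the resulting bound on the Haar measure work in the next step.
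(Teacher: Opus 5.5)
Your proposal is correct and follows the paper's own argument: the equality is \cref{lem:finiteunion}, and the inclusion follows component by component from \cref{prop:component}. Your observation that local-unitary equivalence is an equivalence relation is what justifies the paper's abuse of notation in writing $\Orb(U(\compo{\strategies_\arch}{j}))$ independently of the chosen representative.
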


From this we see that $\Bad$ is a countable union of Haar measure zero sets (\cref{lem:orbitHaarNull}), so by countable subadditivity it is also Haar-null, which proves

\MainTheorem*

\section{An explicit unitary without an exact protocol}
\label{sec:explicit}

\Cref{thm:badHaarNull} shows that almost every unitary has no exact protocol, but it does not present any explicit unitary.
We now give an explicit example.
Let $d = 2^n$, and for $\theta \in \R$ let $C_\theta$ be the controlled phase between the first qubit of Alice and the first qubit of Bob,
\begin{equation}
	C_\theta \ket{a}\ket{b} = e^{i\theta a_1 b_1}\ket{a}\ket{b},
	\qquad a = (a_1, \dots, a_n),\ b = (b_1, \dots, b_n) \in \{0,1\}^n .
	\label{eq:controlled-phase}
\end{equation}
For $n = 1$ this is the gate $\operatorname{diag}(1,1,1,e^{i\theta})$.
Recall that a complex number is \emph{algebraic} if it is a root of a nonzero polynomial with rational coefficients.

\begin{theorem}[An explicit forbidden gate]
	\label{thm:explicit-gate}
	If $C_\theta \in \Bad$, then $e^{i\theta}$ is algebraic.
	In particular, $C_1$ (for $n = 1$, $C_1 = \operatorname{diag}(1,1,1,e^{i})$), and every unitary that is local-unitary equivalent to it, has no exact NLQC protocol with finite-dimensional resources.
\end{theorem}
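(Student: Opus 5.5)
The plan is to exploit that each $\strategies_\arch$ is cut out by polynomial equations with \emph{integer} coefficients: (Z1)--(Z4) only involve entries, complex conjugation (real and imaginary parts), and the fixed $0/1$ matrices $\Xex$ and $\append{\cdot}$. Hence $\strategies_\arch$ is a real algebraic set defined over $\mathbb{Q}$. Every connected component of such a set contains a point with real algebraic coordinates. To see this, apply Tarski--Seidenberg/Fact~\ref{fact:components} over the real closed field $\R_{\mathrm{alg}}$ of real algebraic numbers. The set $\strategies_\arch(\R_{\mathrm{alg}})$ is dense in $\strategies_\arch$, and its components extend to the components of $\strategies_\arch$, which is the standard transfer principle in \cite{bochnak2013real}. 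Alternatively, each semialgebraic component, being definable over $\mathbb{Q}$ (it is a component of a $\mathbb{Q}$-definable set), is nonempty and therefore has a point in $\R_{\mathrm{alg}}^D$ by the Tarski--Seidenberg transfer. By Corollary~\ref{cor:archOnlyFiniteOrbits}, every $U\in\Bad$ is therefore local-unitary equivalent to some $U_0$ whose entries have algebraic real and imaginary parts, so the entries of $U_0$ are algebraic complex numbers.

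Next I need a local-unitary invariant that recovers $e^{i\theta}$ algebraically. A natural choice is a polynomial $P$ with rational coefficients in the entries of $U$ and $\bar U$ that is invariant under $U\mapsto (v_A\otimes v_B)U(w_A\otimes w_B)$. Take the operator Schmidt data: regard $U$ as an element of $\mathrm{End}(H_A)\otimes\mathrm{End}(H_B)$ and form the Gram-type matrix $M = \Tr_{B}$-realignment $R(U)R(U)^\dagger$, where $R$ is the realignment map sending $\sum X_i\otimes Y_i$ to $\sum |X_i\rangle\!\rangle\langle\!\langle \bar Y_i|$. Local unitaries act on $R(U)$ by unitaries on both sides, which are the maps $X\mapsto v_A X w_A$ and $Y\mapsto v_B Y w_B$, and these are unitary in Hilbert--Schmidt inner product. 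So the coefficients $\Tr\bigl((R(U)R(U)^\dagger)^k\bigr)$ are invariant polynomials with rational coefficients. They therefore take algebraic values at $U_0$, hence at $U$.

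It remains to compute this invariant for $C_\theta$. Write $C_\theta = \bigl(|0\rangle\langle0|\otimes I + |1\rangle\langle1|\otimes|0\rangle\langle0|\bigr)\otimes I + e^{i\theta}|1\rangle\langle1|\otimes|1\rangle\langle1|\otimes I$ on the first qubits, tensored with identities on the rest. Its operator Schmidt decomposition has rank $2$, with squared Schmidt coefficients $\lambda_\pm$ depending only on $|1+e^{i\theta}|^2$-type quantities. Concretely, $C_\theta = \tfrac{1+e^{i\theta}}{2}\,I\otimes I$ plus an $I\otimes Z$ and $Z\otimes I$, $Z\otimes Z$ combination, and $\Tr(R R^\dagger)=d^2\cdot$const while $\Tr((RR^\dagger)^2)$ yields $\cos\theta$ up to rational affine transformation. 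I would explicitly check that $\Tr((RR^\dagger)^2) = q_1 + q_2\cos\theta$ with rational $q_1$ and $q_2\neq0$. Then $\cos\theta$ is algebraic, so $e^{i\theta}$ is a root of $x^2-2\cos\theta\,x+1$ and is algebraic. For $\theta=1$, Lindemann--Weierstrass (or Hermite--Lindemann: $e^{\alpha}$ is transcendental for algebraic $\alpha\neq0$, here $\alpha=i$) gives that $e^{i}$ is transcendental. The statement about local-unitary-equivalent unitaries follows because $\Bad$ is a union of orbits by Corollary~\ref{cor:archOnlyFiniteOrbits}.

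The main obstacle is step one: rigorously justifying that every connected component of a $\mathbb{Q}$-defined real algebraic set contains an algebraic point. I would first try the transfer principle (\cite[Prop.~5.1.x, Thm.~2.4.4]{bochnak2013real}), namely that semialgebraic connected components of $S(\R)$ are the extensions of the components of $S(\R_{\mathrm{alg}})$. A fallback is a direct route: the number of components is preserved under extension of real closed fields, so each component of $S(\R)$ meets $\R_{\mathrm{alg}}^D$. The invariant computation I expect to be routine.
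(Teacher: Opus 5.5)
Your argument is correct, but its first half takes a different route from the paper. The paper fixes the invariant first. Because $U(\strategies_\arch)$ is a finite union of orbits (\cref{cor:archOnlyFiniteOrbits}) and $f$ is orbit-invariant, the value set $\{f(U(z)) : z\in\strategies_\arch\}$ is finite. The one-variable projection form of Tarski--Seidenberg over $\mathbb{Q}$ (\cref{fact:TSQ}) then forces each value to be a root of a nonzero rational polynomial. You never look at the values of an invariant. Instead you show that every connected component of $\strategies_\arch$ contains a point with real algebraic coordinates, and then use \cref{prop:component} to conclude that every $U\in\Bad$ is local-unitary equivalent to a unitary with entries in $\overline{\mathbb{Q}}$.

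This needs a somewhat heavier transfer input than \cref{fact:TSQ}. However, your density remark already suffices without the component-extension theorem. Transfer applied to $\strategies_\arch$ intersected with small rational boxes shows that $\strategies_\arch\cap\R_{\mathrm{alg}}^D$ is dense. The finitely many components are each relatively open, so each meets this dense set.

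What you gain is a stronger, invariant-free statement: every exactly implementable orbit, indeed every component, has an algebraic representative, even an algebraic strategy. Hence every rational-coefficient local-unitary invariant is algebraic on $\Bad$ at once, with no separate finiteness-of-values argument. The paper's route is more economical in its real-algebraic-geometry input.

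Your invariant coincides with the paper's: the operator purity of \cref{lem:purity} is exactly $P(U)=\Tr\bigl((R(U)R(U)^\dagger)^2\bigr)/d^4$. There is one slip in your sketch: on the first qubits the identity coefficient of $C_\theta$ is $(3+e^{i\theta})/4$, not $(1+e^{i\theta})/2$. The deferred check still comes out as you predicted. Write $C_\theta=\ket{0}\bra{0}\otimes I+\ket{1}\bra{1}\otimes\mathrm{diag}(1,e^{i\theta})$ on the first qubits. The relevant Gram matrix is $\bigl(\begin{smallmatrix}2 & 1+e^{i\theta}\\ 1+e^{-i\theta} & 2\end{smallmatrix}\bigr)$. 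Including the factor from the remaining qubits, $\Tr\bigl((RR^\dagger)^2\bigr)=(d/2)^4(12+4\cos\theta)$, which matches $P(C_\theta)=(3+\cos\theta)/4$. So $q_2\neq0$ and the rest of your argument goes through.
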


Dimension counting alone cannot give such a statement, because some controlled phases do have exact protocols, for example every $\theta \in \pi\mathbb{Q}$~\cite[Sec.~II\,C, Example~3]{YuGriffithsCohen2012}; see also~\cite[Sec.~II]{Yu2011}.
The extra ingredient is that the equations (Z1)--(Z4) defining $\strategies_\arch$ have integer coefficients.
We use the following version of \cref{fact:TS} and construct a function that is constant on the orbits.
Together these facts give that the values of the function must be algebraic, we can then give an explicit construction of such a function that contradicts this for $C_\theta$ giving us the impossibility.

\begin{fact}[Tarski--Seidenberg over $\mathbb{Q}$, {\cite[Proposition~2.1.8]{bochnak2013real}}]
	\label{fact:TSQ}
	Let $S \subseteq \R^N \times \R$ be defined by finitely many polynomial equations and inequalities with rational coefficients.
	Then $\{ y \in \R : (z, y) \in S \text{ for some } z \in \R^N \}$ is a finite union of sets $\{ y : p(y) = 0,\ q_1(y) > 0, \dots, q_k(y) > 0 \}$, where $p, q_1, \dots, q_k$ are polynomials with rational coefficients.
\end{fact}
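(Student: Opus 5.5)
We would prove this by running quantifier elimination with careful bookkeeping of coefficients, rather than appealing to \cref{fact:TS} as a black box. The key observation is that standard constructive proofs of Tarski--Seidenberg (Hörmander's sign-determination argument, or cylindrical algebraic decomposition) only perform arithmetic on the coefficients of the input polynomials. If the input coefficients lie in $\mathbb{Q}$, every polynomial produced along the way does too. So the plan is to prove a slightly stronger statement by induction on $N$: for $S \subseteq \R^N \times \R^k$ defined by a Boolean combination of sign conditions on polynomials with rational coefficients, the projection to $\R^k$ is again defined by such a Boolean combination, with rational coefficients.

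\textbf{Elimination step.} We eliminate one coordinate $x$ at a time. Write each defining polynomial as an element of $\mathbb{Q}[y_1,\dots,y_k][x]$. Form the finite family obtained by closing under taking derivatives in $x$, leading coefficients, truncations, and signed subresultant (Sturm--Habicht) coefficients. These are all computed by ring operations in $\mathbb{Q}[y]$, so they lie in $\mathbb{Q}[y]$. Hörmander's lemma says that the sign vector of this family at the parameter $y$ determines the realizable sign conditions in $x$. Therefore ``$\exists x$ with the given sign conditions'' is equivalent to a finite disjunction of sign conditions on polynomials in $\mathbb{Q}[y]$. Iterating this $N$ times reduces to $k=1$. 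This step is the main obstacle, since it is where the rationality of the coefficients must be tracked through the whole elimination. If we cite rather than reprove it, we would point to the fact that the elimination algorithm is uniform over the ring generated by the coefficients, which is the content of the cited \cite[Proposition~2.1.8]{bochnak2013real}.

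\textbf{Normal form in one variable.} We are left with a Boolean combination of conditions $p_i(y)\,\sigma\,0$, where $\sigma \in \{<,=,>\}$ and $p_i \in \mathbb{Q}[y]$. We put it into disjunctive normal form and rewrite the atoms as follows:
\begin{itemize}
\item $p<0$ becomes $-p>0$;
\item $p \neq 0$ becomes the union of $p>0$ and $-p>0$;
\item $p \ge 0$ becomes the union of $p=0$ and $p>0$.
\end{itemize}
In each conjunction we combine the equalities into a single equation $\sum_i p_i^2=0$, and we use $p=0$ (the zero polynomial) if there are none. Each resulting set then has exactly the form $\{p=0,\ q_1>0,\dots,q_k>0\}$ with rational coefficients, as claimed.
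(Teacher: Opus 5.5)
Your proposal is correct and essentially matches what the paper relies on: the paper gives no argument of its own and simply cites \cite[Proposition~2.1.8]{bochnak2013real}, and your two steps---quantifier elimination with the bookkeeping that every polynomial produced stays rational, followed by the disjunctive-normal-form and sum-of-squares reduction to sets $\{p=0,\ q_1>0,\dots,q_k>0\}$---are the standard argument behind that citation. The only imprecision is cosmetic: the derivatives and truncations you form lie in $\mathbb{Q}[y][x]$, and only the leading and subresultant coefficients, whose signs you read off, lie in $\mathbb{Q}[y]$.
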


\begin{lemma}[Orbit invariants take algebraic values]
	\label{lem:algebraic-values}
	Let $f : \mathrm{U}(d^2) \to \R$ be a polynomial with rational coefficients in the real and imaginary parts of the matrix entries, and suppose that $f$ is constant on every orbit $\Orb(U)$.
	Then $f(U)$ is algebraic for every $U \in \Bad$.
\end{lemma}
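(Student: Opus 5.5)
Fix $U \in \Bad$. By \cref{lem:algebraic} there is an architecture $\arch$ and a strategy $z \in \strategies_\arch$ with $U(z) = U$; let $\compo{\strategies_\arch}{j}$ be the connected component containing $z$. The plan is to show that $f$ takes only finitely many values on $U(\strategies_\arch)$, and that this finite set of values is the projection of a set defined over $\mathbb{Q}$, so that \cref{fact:TSQ} forces each value to be algebraic.

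\textbf{Step 1: finitely many values.} By \cref{cor:archOnlyFiniteOrbits}, $U(\strategies_\arch) = \bigcup_{j=1}^{r(\strategies_\arch)} \Orb(U(c_j))$. Since $f$ is constant on each orbit, the set
\begin{equation*}
Y_\arch := \{ f(U(z')) : z' \in \strategies_\arch \} \subseteq \R
\end{equation*}
has at most $r(\strategies_\arch)$ elements, and it contains $f(U)$.

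\textbf{Step 2: rational definability.} Consider
\begin{equation*}
S := \{ (z', y) \in \R^D \times \R : z' \text{ satisfies (Z1)--(Z4)},\ y = f(U(z')) \}.
\end{equation*}
The constraints (Z1)--(Z4) are polynomial equations in the real and imaginary parts of the entries. Their coefficients are integers:
\begin{itemize}
\item the exchange $\Xex$ and the insertion maps are permutation and $0/1$ matrices in the fixed bases;
\item complex multiplication contributes only coefficients $\pm 1$ when split into real and imaginary parts.
\end{itemize}
The equation $y = f(U(z'))$ has rational coefficients because $f$ does. So $S$ satisfies the hypothesis of \cref{fact:TSQ}, and its projection $Y_\arch$ is a finite union of sets $\{y : p(y)=0,\ q_i(y)>0\}$ with $p, q_i \in \mathbb{Q}[y]$.

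\textbf{Step 3: finiteness gives algebraicity.} Each piece of this union is contained in the finite set $Y_\arch$. Any piece with $p \equiv 0$ is an open subset of $\R$, since it is cut out by strict polynomial inequalities. A nonempty open subset of $\R$ is infinite, so every such piece must be empty. Every remaining piece lies in the zero set of a nonzero rational polynomial. Hence each element of $Y_\arch$, in particular $f(U)$, is algebraic.

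The main obstacle is the bookkeeping in Step 2. We must verify that the parametrization really has rational coefficients, including the basis choices for $\Xex$ and for $\append{\eta}, \append{\gamma}$, and that $S$ is exactly the graph-over-strategies set used above. Step 3 is then a short argument, but it must correctly exclude the pieces with $p \equiv 0$.
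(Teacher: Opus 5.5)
Your proposal is correct and follows essentially the same route as the paper: for each architecture, finitely many orbits give finitely many values of $f$, and Tarski--Seidenberg over $\mathbb{Q}$ applied to the graph $\{(z, f(U(z)))\}$ then forces every value to be a root of a nonzero rational polynomial. Your remark that any piece with $p \equiv 0$ must be \emph{empty}, rather than merely ``open, hence infinite'', is slightly more careful than the paper's wording, which overlooks that the empty set is open and finite.
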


\begin{proof}
	We will apply \cref{fact:TSQ} to the set $\{(z, f(U(z))) : z \in \strategies_\arch\}$ for each $\arch$ separately.
	The function $f$ is a polynomial by assumption and the construction of $\strategies_\arch$ gives that the entire set is defined by polynomial equations and inequalities with rational coefficients.
	The fact gives us that the set $\{f(z): z \in \strategies_\arch\}$ can be decomposed into a finite union of sets $\{ f(z) : p(f(z)) = 0,\ q_1(f(z)) > 0, \dots, q_k(f(z)) > 0 \}$, where $p, q_1, \dots, q_k$ are polynomials with rational coefficients.
	So for each of these we check: If $p$ is non-zero then $f(z)$ is its root, so it is algebraic.
	This leaves the option for $p$ to be zero, which we will show to be impossible.
	Since $U(\strategies_\arch)$ decomposes into a finite number of orbits and $f$ is constant on each of them the set $\{f(z): z \in \strategies_\arch\}$ is finite.
	If $p$ was now zero, then the only conditions on the sets $\{ y : p(y) = 0,\ q_1(y) > 0, \dots, q_k(y) > 0 \}$ is that the value of the $q_i$ is positive, giving an open set.
	An open set cannot be finite, a contradiction,
	so $p$ is non-zero and $f(z)$ is algebraic.
\end{proof}

Finiteness is essential here: without it, the set of values could be an interval such as $\{ \cos\theta : \theta \in \R \} = [-1, 1]$.
We apply the lemma to the \emph{operator purity}
\begin{equation}
	P(U) = \frac{1}{d^4} \Tr\bigl[ (U \otimes U)\, S_A\, (U \otimes U)^\dagger\, S_A \bigr],
	\label{eq:operator-purity}
\end{equation}
where $S_A$ is the unitary that swaps the two copies of $H_A$ in $(H_A \otimes H_B) \otimes (H_A \otimes H_B)$.

\begin{lemma}[Operator purity]
	\label{lem:purity}
	The function $P$ satisfies the assumptions of \cref{lem:algebraic-values}, and $P(C_\theta) = (3 + \cos\theta)/4$.
\end{lemma}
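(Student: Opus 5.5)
The proof has three parts: show $P$ is a real-valued polynomial with rational coefficients, show it is constant on orbits, and evaluate it on $C_\theta$.

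\emph{Polynomial with rational coefficients.} Expanding $\Tr[(U\otimes U)S_A(U\otimes U)^\dagger S_A]$ in the computational basis gives a sum of products of entries of $U$ and $\overline{U}$. The entries of $S_A$ are $0$ or $1$, and the prefactor $1/d^4$ is rational. Substituting $U_{jk}=x_{jk}+iy_{jk}$ therefore gives a polynomial with coefficients in $\mathbb{Q}(i)$. To see that $P$ is real, I will take the complex conjugate: $\overline{\Tr[M]}=\Tr[M^\dagger]$, and $S_A=S_A^\dagger$, so the conjugate equals $\Tr[S_A(U\otimes U)S_A(U\otimes U)^\dagger]$. By cyclicity this is the original trace. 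Hence $P=\tfrac12(P+\overline P)$ is a polynomial in the $x_{jk},y_{jk}$ with rational coefficients, as \cref{lem:algebraic-values} requires.

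\emph{Orbit invariance.} For a local unitary $L=l_A\otimes l_B$, the operator $L\otimes L=l_A\otimes l_B\otimes l_A\otimes l_B$ commutes with $S_A$, because $S_A$ only exchanges the two identical $l_A$ factors. For right multiplication $U\mapsto UR$, the factor $(R\otimes R)S_A(R\otimes R)^\dagger$ collapses to $S_A$, so the trace is unchanged. For left multiplication $U\mapsto LU$, I will use cyclicity to move $(L\otimes L)^\dagger$ to the front, where it meets the last $S_A$. Commuting it through $S_A$ cancels it against $L\otimes L$. So $P$ is constant on every $\Orb(U)$.

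\emph{Evaluation at $C_\theta$.} Write $C_\theta=\operatorname{diag}(\varphi(a,b))$ with $\varphi(a,b)=e^{i\theta a_1b_1}$. Then $S_A\ket{a,b,a',b'}=\ket{a',b,a,b'}$. Since $U\otimes U$ is diagonal, the operator inside the trace is diagonal, with entries
\begin{equation*}
\varphi(a,b)\,\varphi(a',b')\,\overline{\varphi(a',b)\,\varphi(a,b')}=e^{i\theta(a_1-a_1')(b_1-b_1')}.
\end{equation*}
The remaining $n-1$ bits in each of the four strings contribute a factor $(d/2)^4$.

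Set $x=a_1-a_1'$ and $y=b_1-b_1'$. Each takes the values $-1,0,1$ with multiplicities $1,2,1$. Of the 16 pairs $(x,y)$, counted with multiplicity, $12$ have $xy=0$, two have $xy=1$, and two have $xy=-1$. The sum is therefore $12+4\cos\theta$. Multiplying by $(d/2)^4/d^4=1/16$ gives $P(C_\theta)=(3+\cos\theta)/4$.

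There is no serious obstacle. The points that need care are checking that $P$ is real, so that it is a genuine real polynomial as \cref{lem:algebraic-values} requires, and getting the bookkeeping of $S_A$ right in the diagonal computation.
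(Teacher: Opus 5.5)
Your proposal is correct and follows the paper's proof: realness via cyclicity of the trace, orbit invariance because $L\otimes L$ and $R\otimes R$ commute with $S_A$, and the diagonal evaluation $e^{i\theta(a_1-a_1')(b_1-b_1')}$ with the $12/2/2$ count. Your explicit step of replacing $P$ by $\tfrac12(P+\overline{P})$, so that the coefficients are rational rather than Gaussian rational, spells out something the paper leaves implicit.
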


\begin{proof}
	We first show that $P$ indeed satisfies our assumptions:
	Since $S_A$ is a permutation matrix, $d^4 P(U)$ is a sum of products of entries of $U$ and $\bar U$, each with coefficient $1$, so $P$ is a polynomial.
	Its image is real, because by cyclicity of the trace it equals its complex conjugate.
	For $U' = (L_A \otimes L_B)\, U\, (R_A \otimes R_B) \in \Orb(U)$, we have $U' \otimes U' = \tilde L\, (U \otimes U)\, \tilde R$ with $\tilde L = L_A \otimes L_B \otimes L_A \otimes L_B$ and $\tilde R = R_A \otimes R_B \otimes R_A \otimes R_B$.
	Both $\tilde L$ and $\tilde R$ commute with $S_A$, because $S_A$ exchanges two factors that carry the same operator.
	Cyclicity of the trace then gives $P(U') = P(U)$.
	So $P$ satisfies our assumptions.

	We now compute $P(C_\theta)$:
	The diagonal entry of the operator inside the trace in \eqref{eq:operator-purity} at the basis vector $\ket{a}\ket{b}\ket{a'}\ket{b'}$ is
	\[
		e^{i\theta(a_1 b_1 + a_1' b_1' - a_1' b_1 - a_1 b_1')} = e^{i\theta(a_1 - a_1')(b_1 - b_1')} .
	\]
	The last $n - 1$ bits of each index do not appear, so they contribute a factor $(d/2)^4$.
	There are $16$ choices for the first bits, $12$ give exponent $0$, two give $i\theta$ and two give $-i\theta$.
	So $P(C_\theta) = (12 + 4\cos\theta)/16$.
\end{proof}

\begin{proof}[Proof of \cref{thm:explicit-gate}]
	Suppose $C_\theta$ was implementable, then by \cref{lem:algebraic-values,lem:purity}, $\cos\theta = 4P(C_\theta) - 3$ is algebraic.
	Then $w = e^{i\theta}$ is algebraic too, since it is a root of $w^2 - 2\cos\theta\, w + 1$, whose coefficients are algebraic.
	For $\theta = 1$ this contradicts the Hermite--Lindemann theorem~\cite[Theorem~1.1(iii)]{waldschmidt2007role}, which says that $e^{\beta}$ is transcendental for every nonzero algebraic $\beta$; take $\beta = i$.
	So $C_1$ cannot be implementable.
	The same argument applies to every $U \in \Orb(C_\theta)$, since $P(U) = P(C_\theta)$ by \cref{lem:purity}; in particular, no unitary that is local-unitary equivalent to $C_1$ lies in $\Bad$.
\end{proof}

The same argument excludes every nonzero algebraic $\theta$.
The condition is necessary, but we do not know whether it is sufficient: for an algebraic phase that is not a root of unity, such as $e^{i\theta} = (3 + 4i)/5$, it gives no conclusion.

\section{Rank-one measurement setting}
\label{sec:measurement}

We can use the same proof technique as for \Cref{thm:badHaarNull} for a different setting.
Consider the task of exact two-sided measurements; both parties are given part of a resource state and part of an input state.
They then may perform local unitaries and simultaneously exchange one round of messages.
After the exchange they perform a local unitary and then a rank-one measurement.
The task is to implement a measurement on the entire input state that leaves both parties with the same classical output.
A strategy implements a PVM $\pvm$ consisting of projectors $P_i$ exactly if, for every input state $\rho$:
\begin{equation}
	\Pr[A' = B' = i \mid \rho] = \Tr(P_i\rho), \qquad \Pr[A' \not= B' \mid \rho] = 0 .
	\label{eq:pvm-task}
\end{equation}

Let $d = 2^n$ and the projectors be rank one, $P_i = \kb{\varphi_i}$ for an orthonormal basis $\ket{\varphi_1}, \dots, \ket{\varphi_{d^2}}$ of $H_A \otimes H_B$.
We call $M = \sum_i \ket{\varphi_i}\bra{i} \in \mathrm{U}(d^2)$ a \emph{basis matrix} of $\pvm$.
The phases of the $\ket{\varphi_i}$ are arbitrary, so $\pvm$ only determines $M$ up to right-multiplication by a diagonal unitary $\Delta$.
The registers $A'$ and $B'$ are label registers of dimension $d^2$, read out in the basis $\ket{1}, \dots, \ket{d^2}$.

As we did before, we define an orbit to be an equivalence class of implementable objects that we can transform between without losing exactness.
Intuition tells us that these orbits are different now.
Given a protocol that implements $\pvm$ one can still precompose it with a local unitary, on the other side the story is different.
Were one to allow for arbitrary local unitaries here the exactness property would break.
This is because Alice's decoder never acts on Bob's copy of the outcome, and Bob's decoder never acts on Alice's copy.
So a change at the output end that keeps the protocol exact cannot move amplitude between two outcomes; it can only change the phase of each outcome.
Integrating the derivative along a smooth path of strategies gives $M(t) = (L_A(t) \otimes L_B(t))\, M(t_*)\, \Delta(t)$ with $\Delta(t)$ diagonal unitary, so the path stays in a single orbit.
In analogy with \cref{def:orbit}, we define the \emph{measurement orbit}
\begin{equation}
	\OrbM(M) = \bigl\{ (L_A \otimes L_B)\, M \Delta : L_A, L_B \in \mathrm{U}(d),\ \Delta \in \mathrm{U}(d^2) \text{ diagonal} \bigr\} .
	\label{eq:orbit-pvm}
\end{equation}
This set consists of the basis matrices of all PVMs $\bigl((L_A \otimes L_B) P_i (L_A \otimes L_B)^\dagger\bigr)_i$, that is, of all PVMs obtained from $\pvm$ by one common local change of basis.
By a similar dimension-counting argument as before these still have Haar measure zero.

We are able to show for each fixed architecture the set of implementable PVMs is a union of finitely many such orbits.
There are countably many architectures so the set of implementable PVMs with finite-dimensional resources has Haar measure zero.
A few technical statements change with the new definition of an orbit, but the overall structure of the argument remains, and we obtain the following statement.
A PVM that measures in the columns of a Haar-random unitary has, with probability one, no exact one-round protocol with finite-dimensional resources in which both parties learn the outcome.

\begin{theorem}[Exact measurements]
	\label{thm:measHaarNull}
	Let $n \geq 1$ and $d = 2^n$.
	\begin{enumerate}[label=(\roman*)]
		\item For fixed register dimensions $\arch$, the basis matrices $M \in \mathrm{U}(d^2)$ whose PVM has an exact two-sided protocol with register dimensions $\arch$ form a finite union of orbits $\OrbM(M_1) \cup \dots \cup \OrbM(M_r)$.
		\item The set $\BadM \subseteq \mathrm{U}(d^2)$ of basis matrices whose PVM has an exact two-sided protocol with finite-dimensional registers has Haar measure zero.
	\end{enumerate}
\end{theorem}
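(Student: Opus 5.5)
I will mirror the four-step structure of the proof of \cref{thm:badHaarNull}, with $U$ replaced by the basis matrix $M$ and the local-unitary orbit replaced by $\OrbM$.

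\textbf{Step 1: rigidity.} Purify the protocol: after Alice's and Bob's final local isometries, write the output as $F\ket{\psi}$ on $A'\,G_A\,B'\,G_B$, where $A',B'$ are the label registers. Exactness \eqref{eq:pvm-task} says that measuring $A'B'$ in the computational basis gives $(i,i)$ with probability $|\braket{\varphi_i}{\psi}|^2$ for every $\ket{\psi}$. Hence $(\bra{i}_{A'}\bra{i}_{B'}\otimes I_G)F$ is an operator $H\to G$ with $\|\cdot\ket{\psi}\|^2=|\braket{\varphi_i}{\psi}|^2$, so it equals $\ket{\gamma_i}\bra{\varphi_i}$ for a unit vector $\ket{\gamma_i}\in G$. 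Off-diagonal label pairs have zero weight. Therefore
\begin{equation*}
F=\sum_i \ket{i}_{A'}\ket{i}_{B'}\ket{\gamma_i}_G\bra{\varphi_i}=\mathcal{D}\,M^\dagger,
\end{equation*}
with the isometry $\mathcal{D}=\sum_i \ket{i,i}\ket{\gamma_i}\bra{i}$. The garbage states may now depend on the outcome. Absorbing phases, we may take $\bra{\varphi_i}$ to carry the phase of $\ket{\gamma_i}$; this ambiguity is exactly the diagonal $\Delta$.

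\textbf{Step 2: semialgebraic parametrization.} Define $\strategiesM_\arch$ as the set of tuples $(M,\ket{\resvec},V_A,V_B,W_A,W_B,\ket{\gamma_1},\dots,\ket{\gamma_{d^2}})$ satisfying unitarity, isometry and normalization constraints together with $(W_A\otimes W_B)\Xex(V_A\otimes V_B)\append{\eta}=\sum_i\ket{i,i}\ket{\gamma_i}\bra{i}M^\dagger$. All constraints are polynomial, so this set is semialgebraic. By \cref{fact:components} it has finitely many components, and $\BadM=\bigcup_\arch M(\strategiesM_\arch)$ is a countable union.

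\textbf{Step 3: derivative along paths.} This is the main obstacle. Write $\enc=\Xex(V_A\otimes V_B)\append{\eta}$ and $\dec'=W_A\otimes W_B$, so that $\dec'\enc=\mathcal{D}M^\dagger$. Then $M^\dagger=\mathcal{D}^\dagger\dec'\enc$, using $\mathcal{D}^\dagger\mathcal{D}=I$ and the fact that the range of $\mathcal{D}$ lies in the range of $\dec'$. Differentiating as in \cref{prop:derivative} gives $\dot M^\dagger=c\,M^\dagger+M^\dagger a$, with $a=\enc^\dagger\dot\enc\in\Loc$ exactly as before. The new term is
\begin{equation*}
c=\mathcal{D}^\dagger\dot{\dec'}\dec'^\dagger\mathcal{D}+\dot{\mathcal{D}}^\dagger\mathcal{D}.
\end{equation*}
I must show that $c$ is diagonal in the outcome basis. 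The $(i,j)$ entry of the first term is $\bra{i,i}\bra{\gamma_i}(\dot W_AW_A^\dagger\otimes I+I\otimes\dot W_BW_B^\dagger)\ket{j,j}\ket{\gamma_j}$. Alice's piece acts trivially on $B'$, so it carries the factor $\braket{i}{j}_{B'}$, which vanishes for $i\neq j$; Bob's piece vanishes symmetrically via $A'$. The term $\dot{\mathcal{D}}^\dagger\mathcal{D}=\sum_i\braket{\dot\gamma_i}{\gamma_i}\ket{i}\bra{i}$ is diagonal. Finally, $c$ is anti-Hermitian by differentiating $M^\dagger M=I$ together with $a$ being anti-Hermitian. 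Care is needed to handle the projector $\dec'\dec'^\dagger$ as in \cref{prop:derivative}; I would insert it using the range inclusion. Equivalently, $\dot M=-aM+M\delta$ with $\delta$ diagonal anti-Hermitian.

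\textbf{Step 4: integration and measure.} As in \cref{lem:smoothsameorbit}, solve $\dot L=-aL$ and $\dot\Delta=\Delta\delta$. The solution $L$ factorizes as $L_A\otimes L_B$, and $\Delta$ stays diagonal unitary, so $M(t)=L(t)M(s)\Delta(t)$. The set $\OrbM(M)$ is the continuous image of a compact group, hence closed, so \cref{prop:component} carries over to piecewise smooth semialgebraic paths. This gives (i). For (ii), $\OrbM(M)$ is the image of a manifold of dimension at most $2d^2+d^2<d^4$ for $d\ge2$, so it is Haar-null. Countable subadditivity over all $\arch$ and components then finishes the proof.
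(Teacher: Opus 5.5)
Your proposal is correct and follows essentially the paper's route: outcome-dependent rigidity $F=C_\gamma M^\dagger$, a semialgebraic strategy set, and a derivative with a local left generator and a diagonal right generator (diagonal because of the duplicated labels, after using $\ran C_\gamma\subseteq\ran(W_A\otimes W_B)$), followed by integration, closedness of $\OrbM$ and dimension counting; your $c$ coincides with $-\dec^\dagger\dot\dec$ for the paper's isometry $\dec=(W_A\otimes W_B)^\dagger C_\gamma$. The one step you leave out is the reverse inclusion needed for equality in (i), namely that the set implementable with dimensions $\arch$ is closed under the $\OrbM$-action, which holds because replacing $V_A,V_B$ by $V_A(L_A^\dagger\otimes I),V_B(L_B^\dagger\otimes I)$ implements $(L_A\otimes L_B)M$, and $\Delta$ can be absorbed into the phases of the $\ket{\gamma_i}$.
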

We formally prove \cref{thm:measHaarNull} in \cref{app:measurement}, where we also work out the derivative and its integration in full.

Rigidity is defined differently and now forces the garbage vector to be fixed conditioned on a specific outcome.
This gives us quantum states $\ket{\gamma_i}$, one for each outcome $i$, and we define rigidity as:
\begin{equation}
	F = C_\gamma M^\dagger, \qquad C_\gamma\ket{i} := \ket{i}_{A'}\ket{i}_{B'}\ket{\gamma_i}
\end{equation}
The definition of our semialgebraic set also changes slightly, as we replace $\ket{\gamma}$ with the $\ket{\gamma_i}$ in the tuples.
This set is still semialgebraic since $C_\gamma$ is linear in the $\gamma_i$, and the derivative and integration arguments hold as before.

\Cref{thm:measHaarNull} gives another interesting consequence for localizable measurements.
In an exact \emph{localization} of $\pvm$, Alice and Bob share a finite-dimensional state and do not communicate.
Each party measures its input together with its share of the state, and the outcomes $x \in X$ and $y \in Y$ lie in finite sets.
For a fixed function $f : X \times Y \to \{1, \dots, d^2\}$ they must then satisfy $\Pr[f(x,y) = i \mid \rho] = \Tr(P_i \rho)$ for every input state $\rho$.

\begin{corollary}[Localizable measurements]
	\label{cor:localizable}
	Let $n \geq 1$ and $d = 2^n$.
	The set of $M \in \mathrm{U}(d^2)$ whose PVM has an exact localization with a finite-dimensional shared state and finite outcome sets has Haar measure zero.
\end{corollary}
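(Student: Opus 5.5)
The plan is to reduce \cref{cor:localizable} to \cref{thm:measHaarNull}(ii). I will show that every exact localization of $\pvm$ with a finite-dimensional shared state and finite outcome sets yields an exact two-sided one-round protocol with finite-dimensional registers. That would place the set of localizable basis matrices inside $\BadM$, which is Haar-null, and a subset of a Haar-null set is Haar-null.

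\textbf{Construction of the protocol.} Fix a localization: a state $\ket{\resvec}$ on $\resA\resB$, local POVMs $\{E_x\}_{x\in X}$ on $\regA\resA$ and $\{F_y\}_{y\in Y}$ on $\regB\resB$, and $f : X\times Y\to\{1,\dots,d^2\}$. (If the shared state is mixed, purify it on $\resA\resB$ by enlarging $\resA$ and leaving the purifying part untouched.)
\begin{enumerate}
	\item In pre-processing, Alice applies a Naimark/Stinespring isometry $V_A : \regA\resA \to \keptA\msgAPr S_{V_A}$ that writes the outcome $x$ coherently into two registers, $\keptA$ and $\msgAPr$, each of dimension $|X|$. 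Concretely, $V_A = \sum_x \ket{x}_{\keptA}\ket{x}_{\msgAPr}\otimes \sqrt{E_x}$, where the image of $\sqrt{E_x}$ is placed in $S_{V_A}$, a copy of $\regA\resA$. Bob does the same with $y$.
	\item In post-processing, Alice holds $\ket{x}$ and $\ket{y}$ in the computational basis. She applies the isometry $\ket{x}\ket{y}\mapsto \ket{f(x,y)}_{\outA}\ket{x}\ket{y}_{\envA}$, and Bob does the analogous thing.
\end{enumerate}
All registers are finite-dimensional, since $|X|,|Y|<\infty$ and the shared state is finite-dimensional.

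\textbf{Verifying exactness in the sense of \eqref{eq:pvm-task}.} Both parties compute $f$ on the same classical pair $(x,y)$, so $\Pr[A'\neq B']=0$. Reading out $A'$ gives outcome $i$ with probability $\sum_{f(x,y)=i}\Tr[(E_x\otimes F_y)(\rho\otimes\kb{\resvec})]$, and by the localization assumption this equals $\Tr(P_i\rho)$. This matches the operational definition used in \cref{thm:measHaarNull}.

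\textbf{Main obstacle.} The delicate point is that the protocol model of \cref{sec:measurement} may be stated with specific structure, for example a local unitary followed by a rank-one measurement in the label basis, with the label registers of dimension exactly $d^2$. I must check that my construction fits that model. The post-processing can be extended to a unitary on a padded space, and its final readout of $\outA$ in the basis $\ket{1},\dots,\ket{d^2}$ is the required measurement. Any rank-one refinement on the discarded registers does not affect the outcome statistics. Output values of $f$ outside the support are harmless.

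The rigidity step (outcome-dependent $\ket{\gamma_i}$) holds automatically, because \cref{thm:measHaarNull} applies to any exact two-sided protocol. So once the model is matched, each localizable $M$ lies in $\BadM$ and the corollary follows.
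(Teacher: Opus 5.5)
Your proposal is correct and follows the paper's own reduction: purify the shared state on Alice's side, have each party perform its POVM, keep its outcome and send a copy as the message, let both output $f(x,y)$, and conclude from \cref{thm:measHaarNull}(ii) that the localizable basis matrices lie in the Haar-null set $\BadM$. Your explicit isometries $V_A = \sum_x \ket{x}\ket{x}\otimes\sqrt{E_x}$ and $\ket{x}\ket{y}\mapsto\ket{f(x,y)}\ket{x}\ket{y}$ only spell out the Stinespring step that the paper invokes in one line; the auxiliary register $S_{V_A}$ can simply be merged into the kept register.
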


\begin{proof}
	We will show this by reducing to the case of an exact two-sided protocol with one round of communication.
	Assume we have an exact localization of the PVM with basis matrix $M$, given by a shared state $\sigma$, local POVMs $\{E_x\}_{x \in X}$ and $\{F_y\}_{y \in Y}$, and a function $f$.
	Then we have an exact two-sided protocol with this strategy:
	Alice and Bob share a purification $\ket{\eta}$ of $\sigma$, where the purifying register is given to Alice and never used.
	Each party measures its POVM, keeps its outcome, and sends a copy of it as the message; Alice sends $x$ to Bob while Bob sends $y$ to Alice, and then both output $f(x,y)$.
	Both outputs agree since this is exactly the condition we get from \eqref{eq:pvm-task}.
	All registers of this protocol are finite-dimensional, since the shared state is finite-dimensional and the outcome sets $X$ and $Y$ are finite.
	The local operations are channels, which become isometries after the Stinespring dilation (\cref{fact:stinespring}), as in \cref{sec:model}.
	Hence, the basis matrix lies in $\BadM$, which is Haar-null by \cref{thm:measHaarNull}.
\end{proof}

\section{Discussion and open questions}\label{sec:discussion}
We have shown that Haar-almost every unitary target has no exact NLQC protocol with finite-dimensional resources.
Within the study of exact strategies, this still leaves open which of the unitaries are exactly implementable, with which resources (and also as a function of whether classical or quantum communication and memory are allowed).
For example, for the controlled-phase gates $C_\theta$, is the necessary condition that $e^{i\theta}$ is algebraic also sufficient for an exact protocol with finite-dimensional resources?
The structural results could then form a motivation to build an `atlas' that enumerates the exactly implementable orbits, for fixed-dimensional architectures, since for each of these architectures we know that the number of exactly implementable unitaries up to local-unitary equivalence is finite.

In the current work, we rule out exact implementations in which both the resource state and the messages are finite-dimensional. This leaves open whether the excluded targets can be implemented using pure resource states of infinite Schmidt rank but finite entanglement entropy, or whether allowing infinite-dimensional quantum communication changes the picture.
We have also shown generic impossibility when the target channels are either unitary operations, or rank-one projective measurements where both parties need to learn the outcome.
NLQC can be naturally defined for other families of channels as well, and we leave open whether our proofs apply to these; the proofs require us to analyze the forms that the discarded environment register can take, and for channels where this register can depend more strongly on the input, this step could be harder.
We also do not study the tripartite or $k$-partite NLQC model, but we expect that the proof techniques can be extended to obtain similar structural results, where the targets of exact protocols fall into orbits defined by $k$-partite local-unitary equivalence.

Our impossibility results for exact NLQC cannot immediately serve as QPV security proofs, since, in a cryptographic protocol, correctness has to be operationally tested by a verifier; an approximate NLQC protocol working with high probability on an input coming from a known distribution is already enough for an attacker to fake the actions of an honest prover.
Therefore, for cryptographic use this result can only serve as a stepping stone towards proving stronger, cryptographically relevant bounds.

One very natural follow-up question to ask is: can these techniques, showing impossibility of finite-dimensional exact implementations, also be extended to prove \emph{quantitative resource bounds} for \emph{approximate} NLQC protocols?
This would involve, instead of proving that the orbits of exact implementations have Haar measure zero as in the current work, showing that for a fixed architecture the targets admitting $\varepsilon$-approximate protocols have a small volume, by studying the derivative of the implemented unitary in a similar way as we did in the exact case.

We have performed AI-assisted explorations that indicate this approach can yield explicit lower bounds on the required resource dimensions as functions of the error $\varepsilon$.
The resulting arguments need more involved tools from real algebraic geometry, and we have not yet checked and rewritten them to the standard of the present paper, so we do not present them here at this time.
The interested reader can however find a (largely AI-written) exposition of such an extension, together with a partial Lean formalization that assumes several standard results from real algebraic geometry, at \url{https://github.com/fspeelman/NLQC}.

\section*{Acknowledgements}
This work was supported by the Dutch Ministry of Economic Affairs and Climate Policy (EZK), as part of the Quantum Delta NL program, and by the project Divide and Quantum `D\&Q' NWA.1389.20.241 of the program `NWA-ORC', which is partly funded by the Dutch Research Council (NWO).
GM is part of the Quantum-Safe Internet (QSI) ITN which received funding from the European Union's Horizon-Europe programme as Marie Sk\l{}odowska-Curie Action (PROJECT 101072637 - HORIZON-MSCA-2021-DN-01).
DG is supported by NWO grant NGF.1623.23.025 (“Qudits in theory and experiment”). MF acknowledges funding from the Quantum Software Consortium.

\paragraph{AI statement.}
OpenAI GPT 5.6 Sol and GPT 6 Astra, used in August and September 2026, helped explore proof strategies and draft and revise proofs. The exploration was human-guided towards various approaches and simplified versions of the problem, and the final successful proofs were contributed by the AI agents. Anthropic's Claude, mainly Fable 5 and 5.1, used in the same period, reviewed the proofs and assisted in checking attributions and prior work. The authors formulated the research questions, selected the results and connected the results to the literature. They fully reorganized and redrafted the proof; in this process the authors checked, simplified, and rewrote all proofs for the manuscript. Various AI tools were used to assist in editorial checks and rewrites.

\bibliography{bibliography}

\appendix
\crefalias{section}{appendix}
\crefalias{subsection}{subappendix}

\section{Fixing the resource rank and message dimensions}
\label{app:compression}
In this appendix, we show the claim of \cref{rem:compression} by a standard compression argument: for the finite-orbit conclusion of \cref{cor:archOnlyFiniteOrbits}, it suffices to fix the Schmidt rank $r$ of the shared resource and the message dimensions $m_A, m_B$, because any exact protocol can be compressed so that the dimensions of its kept and garbage registers are bounded in terms of $r$, $m_A$, $m_B$ and the input dimension.
Write $d=2^n$ and first restrict the resource registers to
the supports of their reduced states. Both resource registers then have
dimension $r$.

We next compress Alice's kept register. Apply $\encA$ to a basis of its
$dr$-dimensional input space and expand each resulting vector in a basis
of the outgoing message register. This produces at most $drm_A$ vectors
in $\keptA$. Their span, denoted $\widetilde K_A$, contains every
kept-register component that the encoder can produce. We can therefore
replace $\keptA$ by $\widetilde K_A$ and restrict the decoder accordingly.
The same argument applies to Bob, giving
\[
    \dim\widetilde K_A \le drm_A,
    \qquad
    \dim\widetilde K_B \le drm_B.
\]

Finally, we compress the garbage registers. Alice's restricted decoder
acts on $\widetilde K_A$ together with the incoming message, so its input
dimension is at most $drm_Am_B$. Expanding the images of an input basis
in a basis of the $d$-dimensional output register gives at most
$d^2rm_Am_B$ vectors in $\envA$. We may replace $\envA$ by their span
$\widetilde G_A$, and similarly for Bob. Thus
\[
    \dim\widetilde G_A,\ \dim\widetilde G_B
    \le d^2rm_Am_B.
\]

These restrictions preserve the local isometries and the implemented
unitary. For fixed $d,r,m_A,m_B$, the compressed protocols therefore have
only finitely many possible dimension vectors. Applying
\cref{cor:archOnlyFiniteOrbits} to each of these architectures gives the
claimed finiteness.

\section{Proof of the measurement theorem}
\label{app:measurement}

We prove \cref{thm:measHaarNull} by following the same proof structure as for the unitary proof in \cref{sec:technicalsum}.
Steps~1 and~3 need new arguments.
In Step~1, we replace the single garbage vector $\ket{\gamma}$ with one garbage vector per outcome (\cref{app:measurement-rigidity}).
In Step~3, the output side of the derivative becomes diagonal instead of local (\cref{app:measurement-derivative,app:measurement-integration}).
Steps~2 and~4 only change the strategy set and the dimension count, so the changes are purely bookkeeping (\cref{app:measurement-rigidity,app:measurement-counting}).

Let $d = 2^n$, so the PVM has $d^2$ outcomes.
The output registers $\outA$ and $\outB$ are label registers of dimension $d^2$, which the parties read out in the basis $\ket{1}, \dots, \ket{d^2}$.
All other registers are as in the unitary case, and $\envA$ and $\envB$ again hold everything that is discarded on either side.

\subsection{Rigidity and the strategy set}
\label{app:measurement-rigidity}

Instead of the map $\append{\gamma}$, we need to use a different map that creates a different state depending on the input state.
Intuitively, this captures that the garbage can depend on the outcome, but on nothing else.
So, for unit vectors $\ket{\gamma_1}, \dots, \ket{\gamma_{d^2}} \in \envA \otimes \envB$, collected in the tuple $\gamma$, we define the \emph{duplicated-label map}
\begin{equation}
	C_\gamma : \mathbb{C}^{d^2} \to \outA\envA\outB\envB,
	\qquad
	C_\gamma \ket{i} = \ket{i}_{\outA}\ket{i}_{\outB}\ket{\gamma_i} .
	\label{eq:Cgamma}
\end{equation}
This map takes in a label, copies it to both sides and prepares the corresponding garbage state.
The vectors $\ket{\gamma_i}$ need not be orthogonal, and each may be entangled across $\envA$ and $\envB$.
Nevertheless, we will show that $C_\gamma$ is an isometry in \cref{lem:labels}.
We can then morally replace \cref{lem:rigidity} with the following lemma.

\begin{lemma}[Rigidity for measurements]
	\label{lem:rigidity-pvm}
	Let $\pvm$ be a rank-one PVM with basis matrix $M$, and let $F : \regA\regB \to \outA\envA\outB\envB$ be an isometry.
	Then $F$ implements $\pvm$ exactly, in the sense of \eqref{eq:pvm-task}, if and only if there are unit vectors $\ket{\gamma_1}, \dots, \ket{\gamma_{d^2}} \in \envA \otimes \envB$ with
	\begin{equation}
		F = C_\gamma M^\dagger .
		\label{eq:rigid-pvm}
	\end{equation}
\end{lemma}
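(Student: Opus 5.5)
We prove the two directions separately, mirroring \cref{lem:rigidity}. The input basis is the measurement basis: write $\ket{\varphi_i} = M\ket{i}$, so $M^\dagger\ket{\varphi_i} = \ket{i}$.

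\emph{Reverse direction.} Assume $F = C_\gamma M^\dagger$. For an input $\ket{\psi} = \sum_i c_i\ket{\varphi_i}$ we get
\begin{equation*}
F\ket{\psi} = \sum_i c_i \ket{i}_{\outA}\ket{i}_{\outB}\ket{\gamma_i}.
\end{equation*}
The label registers carry orthogonal basis states, so measuring $\outA$ and $\outB$ in the label basis gives outcome $(i,i)$ with probability $|c_i|^2 = \Tr(P_i\kb{\psi})$, and gives unequal labels with probability zero. By linearity in $\rho$ this extends from pure states to all inputs, so \eqref{eq:pvm-task} holds. The same computation shows $C_\gamma$ is an isometry, whatever the overlaps of the $\ket{\gamma_i}$, because the duplicated labels are already orthonormal.

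\emph{Forward direction, fixing each basis vector.} Assume $F$ is exact and fix $i$. For the input $\ket{\varphi_i}$, exactness gives $\Pr[\outA = \outB = i] = 1$. So the pure state $F\ket{\varphi_i}$ is supported entirely on the subspace $\ket{i}_{\outA}\ket{i}_{\outB}\otimes \envA\envB$. Projecting onto this subspace leaves the state unchanged, so
\begin{equation*}
F\ket{\varphi_i} = \ket{i}\ket{i}\ket{\gamma_i}
\end{equation*}
for some unit vector $\ket{\gamma_i}$, with its phase absorbed into $\ket{\gamma_i}$.

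\emph{Forward direction, extending by linearity.} Linearity of $F$ then gives $F = C_\gamma M^\dagger$ on all of $\regA\regB$. Unlike Step~2 of \cref{lem:rigidity}, no compatibility argument between different inputs is needed. This is because $C_\gamma$ is allowed an arbitrary garbage vector per outcome; it is exactly the extra freedom the measurement setting permits.

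The only point needing care is the projection step. We must confirm that exactness on the single input $\kb{\varphi_i}$ really forces zero weight on every label pair $(j,k) \neq (i,i)$, including pairs with $j \neq k$. This follows immediately by summing \eqref{eq:pvm-task} over the outcomes: the weights on $(j,j)$ for $j \neq i$ vanish because $\Tr(P_j\kb{\varphi_i}) = 0$, and the weights on $j \neq k$ vanish by the second condition. So no real obstacle remains.
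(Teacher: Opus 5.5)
Your proof is correct and is essentially the paper's argument. The paper expands $F = \sum_{a,b}\ket{a}\ket{b}\otimes F_{ab}$ and matches $F_{ab}^\dagger F_{ab} = \delta_{ab}P_a$ to get $F_{ii} = \ket{\gamma_i}\bra{\varphi_i}$, while you read off the same block structure by evaluating $F$ on the basis inputs $\ket{\varphi_i}$; this also makes explicit that exactness on those $d^2$ inputs alone forces $F = C_\gamma M^\dagger$, and hence exactness on all inputs.
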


\begin{proof}
	Expand $F = \sum_{a,b} \ket{a}_{\outA}\ket{b}_{\outB} \otimes F_{ab}$ with operators $F_{ab} : \regA\regB \to \envA\envB$.
    First assume that $F$ does implement $\pvm$ exactly:
	On input $\rho$, the parties obtain the labels $(a,b)$ with probability $\Tr(F_{ab}\,\rho\,F_{ab}^\dagger) = \Tr(F_{ab}^\dagger F_{ab}\,\rho)$.
	Two Hermitian operators with the same expectation value in every state are equal, so we can reduce to that question.
	Hence, \eqref{eq:pvm-task} holds if and only if $F_{ab}^\dagger F_{ab} = 0$ for all $a \neq b$ and $F_{ii}^\dagger F_{ii} = P_i$ for all $i$.
	The first condition says that $F_{ab} = 0$ for $a \neq b$.
	The second condition says that $\lVert F_{ii}\ket{v} \rVert^2 = \lvert\braket{\varphi_i}{v}\rvert^2$ for every vector $\ket{v}$.
	So $F_{ii}$ vanishes on the orthogonal complement of $\ket{\varphi_i}$, and $\ket{\gamma_i} \coloneqq F_{ii}\ket{\varphi_i}$ is a unit vector.
	Therefore $F_{ii} = \ket{\gamma_i}\bra{\varphi_i}$, and we get
	\begin{equation*}
		F = \sum_i \ket{i}_{\outA}\ket{i}_{\outB}\ket{\gamma_i}\bra{\varphi_i} = C_\gamma M^\dagger .
	\end{equation*}
	Conversely, if $F = C_\gamma M^\dagger$ with unit vectors $\ket{\gamma_i}$, then $F_{ab} = \delta_{ab}\ket{\gamma_a}\bra{\varphi_a}$, which satisfies both conditions, so $\pvm$ is implemented exactly.
\end{proof}

Having proved this we can interpret exact implementation as running $M^\dagger$, copying the outcome to both sides and then preparing the garbage state corresponding to the outcome.
The two copies of the label constrain every local operator that acts after $C_\gamma$.
Intuitively, Alice can rotate her own label register, but she cannot change Bob's copy of the label, but in exact implementations both labels must agree, so transitions between two different outcomes vanish.
This is where the two-sided outcome requirement enters the proof.

\begin{lemma}[Duplicated labels]
	\label{lem:labels}
	Let $C_\gamma$ be as before, let $Z_A$ be an operator on $\outA\envA$, and let $Z_B$ be an operator on $\outB\envB$.
	Then the matrices $C_\gamma^\dagger (Z_A \otimes I_{\outB\envB}) C_\gamma$ and $C_\gamma^\dagger (I_{\outA\envA} \otimes Z_B) C_\gamma$ are diagonal in the basis $\ket{1}, \dots, \ket{d^2}$.
	Moreover, $C_\gamma$ is an isometry.
\end{lemma}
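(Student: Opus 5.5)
The plan is a direct matrix-entry computation in the basis $\ket{1},\dots,\ket{d^2}$, using only the definition \eqref{eq:Cgamma} of $C_\gamma$. For labels $i,j$ we have
\[
	\bra{i} C_\gamma^\dagger (Z_A \otimes I_{\outB\envB}) C_\gamma \ket{j}
	= \bigl(\bra{i}_{\outA}\bra{i}_{\outB}\bra{\gamma_i}\bigr)\,(Z_A \otimes I_{\outB\envB})\,\bigl(\ket{j}_{\outA}\ket{j}_{\outB}\ket{\gamma_j}\bigr).
\]
The step I expect to need the most care is the bookkeeping: $\ket{\gamma_i}$ may be entangled across $\envA\envB$, so the expression does not factor as a product of an $\outA\envA$ part and an $\outB\envB$ part. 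I would handle this by writing $\ket{\gamma_j} = \sum_k \ket{\alpha_{jk}}_{\envA}\ket{\beta_{jk}}_{\envB}$, or more cleanly by using that $Z_A \otimes I$ acts as the identity on $\outB$.

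Doing so, the $\outB$ register contributes the factor $\braket{i}{j}_{\outB} = \delta_{ij}$, since the operator is the identity on $\outB$. Concretely, I would contract the $\outB$ register first: $(I_{\outA\envA\envB}\otimes\bra{i}_{\outB})(Z_A\otimes I)(\cdots\ket{j}_{\outB}) = \delta_{ij}\,(Z_A\otimes I_{\envB})(\ket{j}_{\outA}\ket{\gamma_j})$. This holds because $Z_A \otimes I_{\outB\envB}$ commutes with the partial inner product on $\outB$, which can be justified by writing it as $Z_A\otimes I_{\envB}\otimes I_{\outB}$ after reordering registers, as allowed by the conventions in \cref{sec:prelims}. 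Hence every off-diagonal entry vanishes, and the diagonal entries equal $(\bra{i}_{\outA}\bra{\gamma_i})(Z_A\otimes I_{\envB})(\ket{i}_{\outA}\ket{\gamma_i})$.

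The statement for $I_{\outA\envA}\otimes Z_B$ follows symmetrically, now contracting on $\outA$ to obtain $\delta_{ij}$.

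Finally, the isometry claim is the special case $Z_A = I$. By the computation above, $C_\gamma^\dagger C_\gamma$ is diagonal with entries $\braket{i}{i}_{\outA}\braket{i}{i}_{\outB}\braket{\gamma_i}{\gamma_i} = 1$ by normalization of the $\ket{\gamma_i}$, so $C_\gamma^\dagger C_\gamma = I$. No orthogonality of the $\ket{\gamma_i}$ is needed, because the duplicated labels already enforce orthogonality of the images $C_\gamma\ket{i}$.
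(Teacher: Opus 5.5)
Your proposal is correct and follows the paper's proof essentially verbatim. In both, you contract the label register on which the operator acts as the identity ($\outB$ for $Z_A$, $\outA$ for $Z_B$) to get the factor $\braket{i}{j} = \delta_{ij}$. The isometry claim then follows by taking $Z_A = I$ and using the normalization of the $\ket{\gamma_i}$.
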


\begin{proof}
	The operator $Z_A \otimes I_{\outB\envB}$ acts as the identity on $\outB$, so each $\ket j$ is projected onto $\ket i$ in $\outB$:
	\begin{equation*}
		\bra{i} C_\gamma^\dagger (Z_A \otimes I_{\outB\envB}) C_\gamma \ket{j}
		= \braket{i}{j}_{\outB} \cdot \bigl(\bra{i}_{\outA}\bra{\gamma_i}\bigr)\, (Z_A \otimes I_{\envB})\, \bigl(\ket{j}_{\outA}\ket{\gamma_j}\bigr) .
	\end{equation*}
	Therefore, the entry vanishes for $i \neq j$.
    Taking $Z_A = I$ shows that $C_\gamma^\dagger C_\gamma$ is diagonal with entries $\braket{\gamma_i}{\gamma_i} = 1$.
	For $Z_B$, the same argument applies with the overlap $\braket{i}{j}_{\outA}$, so $C_\gamma$ is an isometry.
\end{proof}

Having defined the measurement version of rigidity, we now adapt \cref{def:strategy} to the measurement setting.

\begin{definition}[Measurement strategies]
	\label{def:strategy-pvm}
	Let $\arch$ be a dimension vector as in \cref{def:strategy}.
	A \emph{measurement strategy of type} $\arch$ is a tuple
	\begin{equation*}
		z = (M, \ket{\resvec}, \encA, \encB, \decA, \decB, \ket{\gamma_1}, \dots, \ket{\gamma_{d^2}}),
	\end{equation*}
	where $\ket{\resvec}$, $\encA$ and $\encB$ are as in \cref{def:strategy}, $M$ is a complex $d^2 \times d^2$ matrix, $\decA : \keptA\msgAPo \to \outA\envA$ and $\decB : \keptB\msgBPo \to \outB\envB$ are matrices with $\dim\outA = \dim\outB = d^2$, and $\ket{\gamma_i} \in \envA \otimes \envB$, such that
	\begin{align*}
		 & \text{(M1)}\quad M^\dagger M = I, \qquad \text{(M2)}\quad V_X^\dagger V_X = I,\ W_X^\dagger W_X = I \ (X = A, B), \\
		 & \text{(M3)}\quad \braket{\resvec}{\resvec} = 1, \quad \braket{\gamma_i}{\gamma_i} = 1 \ (i = 1, \dots, d^2),         \\
		 & \text{(M4)}\quad (\decA \otimes \decB)\, \Xex\, (\encA \otimes \encB)\, \append{\eta} = C_\gamma M^\dagger .
	\end{align*}
	We write $\strategiesM_\arch$ for the set of measurement strategies of type $\arch$, and $M(z)$ for the matrix $M$ in $z$.
\end{definition}

The constraints (M1)--(M4) mirror those of \cref{def:strategy} closely, we replace $U$ by $M^\dagger$ and $\append{\gamma}$ by $C_\gamma$.

\begin{lemma}[Parametrization for measurements]
	\label{lem:algebraic-pvm}
	\leavevmode
	\begin{enumerate}
		\item For each $\arch$, $\strategiesM_\arch$ is a semialgebraic set.
		\item For each $\arch$, $M(\strategiesM_\arch)$ is the set of basis matrices whose PVM has an exact two-sided protocol of type $\arch$.
		      Concretely $\BadM = \bigcup_\arch M(\strategiesM_\arch)$, a countable union.
	\end{enumerate}
\end{lemma}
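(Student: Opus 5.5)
The plan mirrors the proof of \cref{lem:algebraic}, with \cref{lem:rigidity-pvm} playing the role of \cref{cor:reformulation}.

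\emph{Part 1.} Fix $\arch$ and identify a tuple $z$ with a point of $\R^D$ via the real and imaginary parts of all entries of $M$, $\ket{\resvec}$, $\encA$, $\encB$, $\decA$, $\decB$ and $\ket{\gamma_1},\dots,\ket{\gamma_{d^2}}$. I then check that each constraint is a finite family of real polynomial equations.
\begin{itemize}
	\item (M1)--(M3) are entrywise quadratic in these coordinates, exactly as (Z1)--(Z3).
	\item In (M4), the left-hand side is a product of the matrices $\decA\otimes\decB$, the fixed permutation matrix $\Xex$, $\encA\otimes\encB$ and $\append{\eta}$, so its entries are polynomials.
	\item On the right-hand side, $C_\gamma$ has entries $\bra{a}_{\outA}\bra{b}_{\outB}\bra{g}\, C_\gamma\ket{i} = \delta_{ai}\delta_{bi}\braket{g}{\gamma_i}$. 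These are linear in the $\gamma_i$, so $C_\gamma M^\dagger$ is bilinear in $(\gamma, \bar M)$.
\end{itemize}
Splitting each complex equation into real and imaginary parts, $\strategiesM_\arch$ is a real algebraic set, hence semialgebraic.

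\emph{Part 2, inclusion $\supseteq$.} Given $z\in\strategiesM_\arch$:
\begin{itemize}
	\item (M2) and (M3) say that $(\ket{\resvec},\encA,\encB,\decA,\decB)$ is a valid finite-dimensional strategy with the output label registers of dimension $d^2$.
	\item (M1) makes $M$ unitary, so its columns $\ket{\varphi_i}$ form an orthonormal basis and $M$ is a basis matrix of the PVM $\{\kb{\varphi_i}\}$.
	\item (M4) together with the ``if'' direction of \cref{lem:rigidity-pvm} shows that the protocol implements this PVM exactly in the sense of \eqref{eq:pvm-task}.
\end{itemize}

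\emph{Part 2, inclusion $\subseteq$.} Let $M$ be a basis matrix whose PVM has an exact protocol with finite registers.
\begin{itemize}
	\item Dilate the local channels with \cref{fact:stinespring} and collect all discarded systems into $\envA,\envB$ as in \cref{sec:model}. This gives an isometry $F$ of the form \eqref{eq:F} with some finite type $\arch$.
	\item The ``only if'' direction of \cref{lem:rigidity-pvm} yields unit vectors $\ket{\gamma_i}$ with $F=C_\gamma M'^\dagger$.
\end{itemize}

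The step needing care is that $M'$ is the basis matrix built from the vectors $\ket{\varphi_i}$ chosen in that proof. It agrees with the given $M$ only up to phases, $M' = M\Delta$ with $\Delta$ diagonal unitary. I will fix this by absorbing the phases into the garbage vectors: replacing $\ket{\gamma_i}$ by $\Delta_{ii}^{*}\ket{\gamma_i}$, or the reverse depending on the sign convention, gives $C_{\gamma'}M^\dagger = C_\gamma M'^\dagger$. The phase-adjusted vectors are still unit vectors, so the tuple with the given $M$ satisfies (M1)--(M4). This shows that $M(\strategiesM_\arch)$ is closed under the phase ambiguity and equals the set of basis matrices in question.

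Finally, the types $\arch$ range over $\mathbb N_+^8$, which is countable, so $\BadM=\bigcup_\arch M(\strategiesM_\arch)$ is a countable union.
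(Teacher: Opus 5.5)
Your proposal is correct and follows the paper's proof essentially step for step. Part 1 shows that (M1)--(M4) are polynomial equations because $C_\gamma$ is linear in the $\ket{\gamma_i}$, and Part 2 obtains the two inclusions from the two directions of \cref{lem:rigidity-pvm}. Your phase-absorption step is valid, with $\ket{\gamma_i}\mapsto\overline{\Delta_{ii}}\ket{\gamma_i}$, but it is not needed: \cref{lem:rigidity-pvm} is stated for an arbitrary basis matrix $M$ of $\pvm$, its vectors $\ket{\varphi_i}$ are by definition the columns of that given $M$, and this is how the paper applies it.
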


\begin{proof}
	(i) After splitting all entries into real and imaginary parts, (M1)--(M4) are polynomial equations, because $C_\gamma$ is linear in the $\ket{\gamma_i}$.
	So $\strategiesM_\arch$ is real algebraic, and therefore semialgebraic.

	(ii) Let $M$ be a basis matrix whose PVM has an exact protocol $(\ket{\resvec}, \encA, \encB, \decA, \decB)$ of type $\arch$. 
    (M1)--(M3) hold by definition an \cref{lem:rigidity-pvm}, applied with this basis matrix $M$, gives that there are unit vectors $\ket{\gamma_i}$ such that (M4) holds.
	Conversely, for $z \in \strategiesM_\arch$, (M4) and \cref{lem:rigidity-pvm} show that the protocol in $z$ implements the PVM of $M(z)$ exactly.

\end{proof}

\subsection{The derivative}
\label{app:measurement-derivative}

We now redo \cref{sec:local_unitary_equivalence} for measurement strategies. 
We show that, in this setting, the perturbations not leaving the space of valid strategies take the form of local unitaries on the left and diagonal unitaries on the right.
To this end, let $\mathfrak{u}_{\mathrm{diag}}$ be the real vector space of diagonal anti-Hermitian $d^2 \times d^2$ matrices, that is, of the matrices $\operatorname{diag}(i\theta_1, \dots, i\theta_{d^2})$ with $\theta \in \R^{d^2}$.

\begin{proposition}[Derivative for measurements]
	\label{prop:derivative-pvm}
	Let $\spath : J \to \strategiesM_\arch$ be a continuously differentiable path on an open interval $J$, and write $M(t) = M(\spath(t))$.
	Then there are continuous maps $a : J \to \Loc$ and $b : J \to \mathfrak{u}_{\mathrm{diag}}$ with
	\begin{equation}
		\dot M(t) = -a(t)\, M(t) + M(t)\, b(t)
		\qquad \text{for all } t \in J .
		\label{eq:Mdot}
	\end{equation}
\end{proposition}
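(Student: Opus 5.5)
The proof mirrors \cref{prop:derivative}, with the roles of the sides reorganized because (M4) reads $\enc = \dec' M^\dagger$. We set $\enc = \Xex(\encA\otimes\encB)\append{\eta}$ as before and $\dec = (\decA^\dagger\otimes\decB^\dagger)C_\gamma$. Then (M4) becomes $(\decA\otimes\decB)\enc = C_\gamma M^\dagger$. The range of $C_\gamma$ lies in the range of $\decA\otimes\decB$, since $M^\dagger$ is onto. Hence $\dec^\dagger\enc = C_\gamma^\dagger(\decA\otimes\decB)\enc = C_\gamma^\dagger C_\gamma M^\dagger = M^\dagger$, using that $C_\gamma$ is an isometry (\cref{lem:labels}). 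It also gives $\enc = \dec M^\dagger$, and $\dec$ is an isometry. Differentiating $M^\dagger = \dec^\dagger\enc$ and substituting $\enc = \dec M^\dagger$ and $\dec^\dagger = M^\dagger\enc^\dagger$ gives
\[
\dot M^\dagger = (\dot\dec^\dagger\dec)M^\dagger + M^\dagger(\enc^\dagger\dot\enc).
\]
Taking adjoints, and writing $\alpha = \enc^\dagger\dot\enc$ and $\beta = \dec^\dagger\dot\dec$ (both anti-Hermitian because $\enc,\dec$ are isometries along the path), we obtain
\[
\dot M = \alpha^\dagger M + M\beta = -\alpha M + M\beta .
\]
So we set $a := \alpha$ and $b := \beta$. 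Continuity in $t$ holds because both are polynomial expressions in the entries of $\spath$ and $\dot\spath$, and $\spath$ is $\mathcal C^1$.

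The claim $a\in\Loc$ is word-for-word the computation in \cref{prop:derivative}. The product rule splits $\enc^\dagger\dot\enc$ into $\append{\eta}^\dagger(\encA^\dagger\dot\encA\otimes I)\append{\eta}$, a Bob term, and the scalar $\braket{\eta}{\dot\eta}I$. \Cref{lem:compression_state} turns the first two into $c_A\otimes I$ and $I\otimes c_B$. Anti-Hermiticity of the sum then puts it in $\Loc$: the scalar is purely imaginary, so absorbing it into $c_A$ keeps $c_A$ anti-Hermitian.

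The new step, and the main point, is $b\in\mathfrak u_{\mathrm{diag}}$. Expanding with the product rule,
\[
\dec^\dagger\dot\dec = C_\gamma^\dagger(\decA\dot\decA^\dagger\otimes \decB\decB^\dagger)C_\gamma + C_\gamma^\dagger(\decA\decA^\dagger\otimes\decB\dot\decB^\dagger)C_\gamma + C_\gamma^\dagger\dot C_\gamma .
\]
Since $C_\gamma$ has range inside $\ran(\decA\otimes\decB)$, the projectors $\decA\decA^\dagger$ and $\decB\decB^\dagger$ can be replaced by identities inside the compression. The first two terms then have the form $C_\gamma^\dagger(Z_A\otimes I)C_\gamma$ and $C_\gamma^\dagger(I\otimes Z_B)C_\gamma$, which are diagonal by \cref{lem:labels}. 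For the last term, $\dot C_\gamma\ket j = \ket j\ket j\ket{\dot\gamma_j}$, so $\bra i C_\gamma^\dagger\dot C_\gamma\ket j = \delta_{ij}\braket{\gamma_i}{\dot\gamma_i}$, again diagonal.

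Hence $b$ is diagonal and anti-Hermitian, so $b\in\mathfrak u_{\mathrm{diag}}$. The only delicate points are this range-inclusion justification, which follows because $(\decA\otimes\decB)\enc = C_\gamma M^\dagger$ with $M^\dagger$ onto, and keeping track of the adjoint and sign convention when passing from $M^\dagger$ to $M$.
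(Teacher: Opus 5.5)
Your proposal is correct and follows essentially the same route as the paper. It uses the same encoder/decoder split with $\dec = (\decA^\dagger\otimes\decB^\dagger)C_\gamma$, the same range-inclusion argument, \cref{lem:compression_state} for $a\in\Loc$, and \cref{lem:labels} for the diagonality of $b$. The only difference is that you expand $\dec^\dagger\dot\dec$ directly by the product rule, as in \cref{prop:derivative}, silently using $C_\gamma^\dagger(\decA\decA^\dagger\otimes\decB\decB^\dagger) = C_\gamma^\dagger$ for the third term; the paper instead differentiates $C_\gamma = W\dec$ with $W = \decA\otimes\decB$ and isolates $b$, and the two expressions agree because $\decA^\dagger\dot\decA = -\dot\decA^\dagger\decA$.
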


\begin{proof}
	As in \cref{sec:local_unitary_equivalence}, every element of the protocol tuple depends on $t$, and we drop this dependence from the notation.
	We use the encoder $\enc = \Xex (\encA \otimes \encB)\append{\eta}$ from \cref{sec:local_unitary_equivalence}, and change the decoder in the obvious way $W = \decA \otimes \decB$ and $\dec = W^\dagger C_\gamma$.

	\medskip
	\noindent\textbf{Step 1: encoder / decoder split.}
	By exactness of the implementation, $W\enc = C_\gamma M^\dagger$.
	Since $M^\dagger$ is onto, $\ran C_\gamma = \ran (W\enc) \subseteq \ran W$, so $WW^\dagger C_\gamma = C_\gamma$.
	Hence $\dec^\dagger\dec = C_\gamma^\dagger C_\gamma = I$ by \cref{lem:labels}, so $\dec$ is an isometry, and $W\dec = C_\gamma$.
	Applying $W^\dagger$ to (M4) and using $W^\dagger W = I$ gives $\enc = \dec M^\dagger$.
	Consequently, $M^\dagger = \dec^\dagger \enc$ and $\dec^\dagger = M^\dagger \enc^\dagger$.
	We define
	\begin{equation*}
		a = \enc^\dagger \dot\enc, \qquad b = \dec^\dagger \dot\dec .
	\end{equation*}
	Both are anti-Hermitian, because $\enc$ and $\dec$ are isometries for every $t$.
	They are also continuous in $t$, because $\enc$ and $\dec$ are polynomials in the coordinates of $\spath(t)$.
	Differentiating $M^\dagger = \dec^\dagger\enc$, and using the identities identified above, gives
	\begin{equation*}
		\dd M^\dagger = \dot\dec^\dagger \enc + \dec^\dagger \dot\enc
		= \dot\dec^\dagger \dec\, M^\dagger + M^\dagger \enc^\dagger \dot\enc
		= -b\, M^\dagger + M^\dagger a ,
	\end{equation*}
	where we used $\dot\dec^\dagger\dec = b^\dagger = -b$.
	Taking the adjoint and using $a^\dagger = -a$ and $b^\dagger = -b$ gives \eqref{eq:Mdot}.
	What remains is to show that $a \in \Loc$ and $b \in \mathfrak{u}_{\mathrm{diag}}$.

	\medskip
	\noindent\textbf{Step 2: the encoder derivative is local.}
	The encoder is the same as in the unitary case, so the computation in the proof of \cref{prop:derivative} can just be used and gives
	\begin{equation*}
		a = \append{\eta}^\dagger (\encA^\dagger \dot\encA \otimes I)\append{\eta} + \append{\eta}^\dagger (I \otimes \encB^\dagger \dot\encB)\append{\eta} + \braket{\eta}{\dot\eta}\, I .
	\end{equation*}
	By \cref{lem:compression_state}, the first two terms are local and anti-Hermitian.
	The number $\braket{\eta}{\dot\eta}$ is imaginary, because $\braket{\eta}{\eta} = 1$, so $a \in \Loc$.

	\medskip
	\noindent\textbf{Step 3: the decoder derivative is diagonal.}
	Differentiating $C_\gamma = W\dec$ gives $\dot C_\gamma = \dot W \dec + W\dot\dec$.
	We multiply from the left by $\dec^\dagger W^\dagger = C_\gamma^\dagger$, rearrange and obtain
	\begin{equation}
		b = \dec^\dagger \dot\dec = C_\gamma^\dagger \dot C_\gamma - \dec^\dagger (W^\dagger \dot W) \dec .
		\label{eq:b-pvm}
	\end{equation}
	The first term is diagonal because since $\dot C_\gamma\ket{j} = \ket{j}_{\outA}\ket{j}_{\outB}\ket{\dot\gamma_j}$, we have
	\begin{equation*}
		C_\gamma^\dagger \dot C_\gamma = \operatorname{diag}\bigl(\braket{\gamma_1}{\dot\gamma_1}, \dots, \braket{\gamma_{d^2}}{\dot\gamma_{d^2}}\bigr).
	\end{equation*}
	For the second term, the product rule and $\decA^\dagger\decA = \decB^\dagger\decB = I$ give
	\begin{equation*}
		W^\dagger \dot W = X_A \otimes I + I \otimes X_B,
		\qquad
		X_A = \decA^\dagger \dot\decA, \quad X_B = \decB^\dagger \dot\decB .
	\end{equation*}
	Let $\Pi_A = \decA\decA^\dagger$ and $\Pi_B = \decB\decB^\dagger$ be the projectors onto the ranges of the two decoders.
    Since $\ran C_\gamma \subseteq \ran W$, we have $(\Pi_A \otimes \Pi_B) C_\gamma = C_\gamma$, and $(I \otimes \Pi_B) C_\gamma = C_\gamma$.
	We get:
	\begin{equation*}
	   \dec^\dagger (X_A \otimes I) \dec
		= C_\gamma^\dagger \bigl(\decA X_A \decA^\dagger \otimes \Pi_B\bigr) C_\gamma
		= C_\gamma^\dagger \bigl(\decA X_A \decA^\dagger \otimes I\bigr) C_\gamma .
	\end{equation*}
	The operator $\decA X_A \decA^\dagger$ acts only on $\outA\envA$, so this matrix is diagonal by \cref{lem:labels}.
	The same argument, with $(\Pi_A \otimes I) C_\gamma = C_\gamma$, shows that $\dec^\dagger (I \otimes X_B) \dec$ is diagonal.
	So both terms in \eqref{eq:b-pvm} are diagonal, making $b$ diagonal.
	It is also anti-Hermitian, so $b \in \mathfrak{u}_{\mathrm{diag}}$.
\end{proof}

\subsection{Integration}
\label{app:measurement-integration}

We now integrate over the derivative expression.
In contrast to the unitary case, we split this into two parts. 
The diagonal factor can be integrated component-wise, but the local unitary factor needs a similar treatment as before.

\begin{lemma}[Continuously differentiable paths stay in one orbit]
	\label{lem:smoothsameorbit-pvm}
	In the setting of \cref{prop:derivative-pvm}, fix some $t_* \in J$.
	Then there are continuous maps $L_A, L_B : J \to \mathrm{U}(d)$ and a continuous map $\Delta$ from $J$ to the diagonal unitary $d^2 \times d^2$ matrices with
	\begin{equation}
		M(t) = \bigl(L_A(t) \otimes L_B(t)\bigr)\, M(t_*)\, \Delta(t)
		\qquad \text{for all } t \in J .
		\label{eq:M-integrated}
	\end{equation}
	In particular, $M(t) \in \OrbM(M(t_*))$.
\end{lemma}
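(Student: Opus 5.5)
We follow the proof of \cref{lem:smoothsameorbit}, with the right-hand factor now diagonal. By \cref{prop:derivative-pvm} there are continuous $a : J \to \Loc$ and $b : J \to \mathfrak{u}_{\mathrm{diag}}$ with $\dot M = -aM + Mb$. Write $a(t) = a_A(t)\otimes I + I \otimes a_B(t)$ with $a_A, a_B$ anti-Hermitian. We want $a_A, a_B$ to be continuous. The decomposition is unique only up to moving an imaginary scalar between the two parts, so we fix it by choosing the traceless normalization of $a_B$. Concretely, set $a_B = \frac{1}{d}\Tr_A a - \frac{1}{d^2}\Tr(a)\, I$ and then $a_A = \frac{1}{d}\Tr_B a$. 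These expressions are linear in $a$, hence continuous, and they reconstruct $a$ exactly.

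\textbf{Step 1: construct the local factors.} By \cref{fact:linear-ode}, let $L_A, L_B$ be the unique $\mathcal C^1$ solutions of $\dot L_A = -a_A L_A$ and $\dot L_B = -a_B L_B$ with $L_A(t_*) = L_B(t_*) = I$. Anti-Hermiticity gives $\frac{d}{dt}(L_X^\dagger L_X) = L_X^\dagger(a_X - a_X)L_X = 0$, so both are unitary. As in \cref{lem:smoothsameorbit}, $L = L_A\otimes L_B$ then satisfies $\dot L = -aL$ and $L(t_*) = I$.

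\textbf{Step 2: construct the diagonal factor.} Write $b(t) = \operatorname{diag}(i\theta_1(t), \dots, i\theta_{d^2}(t))$ with each $\theta_k$ continuous. Integrate entrywise:
\[
\Delta(t) = \operatorname{diag}\Bigl(e^{i\int_{t_*}^t \theta_1}, \dots, e^{i\int_{t_*}^t \theta_{d^2}}\Bigr).
\]
This is a continuous diagonal unitary with $\dot\Delta = \Delta b = b\Delta$ (diagonal matrices commute) and $\Delta(t_*) = I$.

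\textbf{Step 3: uniqueness.} Set $N(t) = L(t)^\dagger M(t) \Delta(t)^\dagger$. Using $\frac{d}{dt}L^\dagger = L^\dagger a$ and $\frac{d}{dt}\Delta^\dagger = -b\Delta^\dagger$, the product rule gives
\[
\dot N = L^\dagger a M\Delta^\dagger + L^\dagger(-aM + Mb)\Delta^\dagger - L^\dagger M b \Delta^\dagger = 0.
\]
So $N$ is constant and equal to $N(t_*) = M(t_*)$. Rearranging gives \eqref{eq:M-integrated}, and by definition \eqref{eq:orbit-pvm} this places $M(t)$ in $\OrbM(M(t_*))$. Continuity of $L_A, L_B, \Delta$ holds by construction.

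The only point needing care is the continuous choice of $a_A, a_B$, which the explicit partial-trace formula above handles. The rest follows the unitary case directly.
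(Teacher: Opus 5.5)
Your proposal is correct and follows essentially the same route as the paper's proof: the same traceless splitting $a_A = \Tr_B(a)/d$, $a_B = \Tr_A(a)/d - \Tr(a)\, I/d^2$, the same linear ODEs for $L_A, L_B$ via \cref{fact:linear-ode}, entrywise integration of the diagonal generator to obtain $\Delta$, and the constancy of $L^\dagger M \Delta^\dagger$. Your partial-trace formulas and derivative computations all check out.
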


\begin{proof}
	Let $a$ and $b$ be as in \cref{prop:derivative-pvm}.

	\medskip
	\noindent\textbf{The diagonal factor.}
	Write $b(t) = \operatorname{diag}(i\theta_1(t), \dots, i\theta_{d^2}(t))$ with continuous real functions $\theta_k$, and set
	\begin{equation*}
		\Delta(t) = \operatorname{diag}\Bigl(e^{i\int_{t_*}^{t}\theta_1(s)\,ds}, \dots, e^{i\int_{t_*}^{t}\theta_{d^2}(s)\,ds}\Bigr).
	\end{equation*}
	Then $\Delta(t)$ is a diagonal unitary, $\Delta(t_*) = I$, and $\dot\Delta = \Delta b$.

	\medskip
	\noindent\textbf{The local factor.}
	We split $a = a_A \otimes I + I \otimes a_B$ with $\Tr a_B = 0$.
	This splitting is unique and given by $a_A = \Tr_B(a)/d$ and $a_B = \Tr_A(a)/d - \Tr(a)\, I/d^2$.
	So $a_A$ and $a_B$ are anti-Hermitian and continuous in $t$.
	By \cref{fact:linear-ode}, $\dot L_A = -a_A L_A$ with $L_A(t_*) = I$ has a unique continuously differentiable solution, and so does $\dot L_B = -a_B L_B$ with $L_B(t_*) = I$.
	Both solutions are unitary, since for instance $\dd(L_A^\dagger L_A) = -L_A^\dagger (a_A^\dagger + a_A) L_A = 0$.
	By the product rule, $L = L_A \otimes L_B$ satisfies $\dot L = -aL$.

	\medskip
	\noindent\textbf{Integration.}
	Using $\dd L^\dagger = L^\dagger a$, $\dd \Delta^\dagger = -b\Delta^\dagger$ and \eqref{eq:Mdot}, we find
	\begin{equation*}
		\dd\bigl(L^\dagger M \Delta^\dagger\bigr)
		= L^\dagger a M \Delta^\dagger + L^\dagger (-aM + Mb) \Delta^\dagger - L^\dagger M b \Delta^\dagger
		= 0 .
	\end{equation*}
	Therefore $L(t)^\dagger M(t) \Delta(t)^\dagger = M(t_*)$ for all $t \in J$, which is \eqref{eq:M-integrated}.
\end{proof}

As in the unitary case, we pass from smooth pieces to connected components.

\begin{proposition}[Same component, same orbit]
	\label{prop:component-pvm}
	If $z, y \in \strategiesM_\arch$ lie in the same connected component, then $M(y) \in \OrbM(M(z))$.
\end{proposition}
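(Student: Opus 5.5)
The plan mirrors the proof of \cref{prop:component}, with \cref{lem:smoothsameorbit-pvm} replacing \cref{lem:smoothsameorbit}. By \cref{lem:algebraic-pvm}, $\strategiesM_\arch$ is semialgebraic, so by \cref{fact:components} its connected components are semialgebraic. By \cref{fact:paths}, $z$ and $y$ are therefore joined by a continuous semialgebraic path $\upath : [0,1] \to \strategiesM_\arch$ inside their common component. Applying \cref{fact:piecewise} coordinatewise, we obtain finitely many points $0 = t_0 < t_1 < \dots < t_m = 1$ such that $\upath$ is continuously differentiable on each open interval $(t_{i-1}, t_i)$.

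On each such interval, \cref{prop:derivative-pvm} and \cref{lem:smoothsameorbit-pvm} apply with $J = (t_{i-1}, t_i)$. For a chosen $s_i \in J$, this gives $M(t) \in \OrbM(M(s_i))$ for all $t \in J$. Since $\OrbM$ is defined by a group action, namely $M \mapsto (L_A \otimes L_B) M \Delta$ of the group $\mathrm{U}(d)\times\mathrm{U}(d)\times \mathrm{U}(1)^{d^2}$, orbits partition $\mathrm{U}(d^2)$. Hence membership in an orbit is an equivalence relation and can be chained.

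The only step needing care is the endpoints $t_{i-1}$ and $t_i$, where differentiability may fail. Here I would use that $\OrbM(M(s_i))$ is the continuous image of the compact group $\mathrm{U}(d)\times\mathrm{U}(d)\times\mathrm{U}(1)^{d^2}$, hence compact and closed. Since $t \mapsto M(\upath(t))$ is continuous (a coordinate projection of the continuous path $\upath$), letting $t \downarrow t_{i-1}$ and $t \uparrow t_i$ yields $M(t_{i-1}), M(t_i) \in \OrbM(M(s_i))$. Therefore $\OrbM(M(t_{i-1})) = \OrbM(M(s_i)) = \OrbM(M(t_i))$.

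Chaining these equalities over $i = 1, \dots, m$ gives $\OrbM(M(z)) = \OrbM(M(y))$, so $M(y) \in \OrbM(M(z))$. I expect no real obstacle. The substantive work, the diagonal form of the output generator, is already contained in \cref{prop:derivative-pvm} and \cref{lem:smoothsameorbit-pvm}; the one point to verify is closedness of the measurement orbit, which follows from compactness as above.
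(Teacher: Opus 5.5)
Your proposal is correct and follows the paper's proof essentially step for step. You use a semialgebraic path, coordinatewise piecewise $\mathcal{C}^1$ regularity, \cref{lem:smoothsameorbit-pvm} on each open piece, closedness of the orbit via compactness to handle the endpoints, and chaining via the partition of $\mathrm{U}(d^2)$ into group orbits. The only detail the paper adds is citing \cref{fact:TS} to justify that each coordinate of the path is semialgebraic before applying \cref{fact:piecewise}.
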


\begin{proof}
	The argument is exactly that of \cref{prop:component} since we have exactly the same setup now.
	By \cref{fact:components,fact:paths}, a continuous semialgebraic path $c : [0,1] \to \strategiesM_\arch$ joins $z$ to $y$.
	By \cref{fact:piecewise}, applied to each coordinate of $c$ (which is semialgebraic by \cref{fact:TS}), there are $0 = t_0 < t_1 < \dots < t_m = 1$ such that $c$ is continuously differentiable on each open interval $(t_{k-1}, t_k)$.
	Fix $k$ and some $s_k \in (t_{k-1}, t_k)$.
	By \cref{lem:smoothsameorbit-pvm}, $M(c(t)) \in \OrbM(M(c(s_k)))$ for all $t \in (t_{k-1}, t_k)$.
	The orbit is closed, because it is the image of the compact group $\mathrm{U}(d) \times \mathrm{U}(d) \times \mathrm{U}(1)^{d^2}$ under a continuous map.
	Since $t \mapsto M(c(t))$ is continuous, the endpoints $M(c(t_{k-1}))$ and $M(c(t_k))$ lie in this orbit too.
	The sets $\OrbM(M)$ are the orbits of the group action $(L_A, L_B, \Delta) \cdot M = (L_A \otimes L_B)\, M \Delta^\dagger$, so two of them are either equal or disjoint.
	Hence $\OrbM(M(c(t_{k-1}))) = \OrbM(M(c(t_k)))$ for every $k$, and chaining over $k$ gives $\OrbM(M(z)) = \OrbM(M(y))$.
\end{proof}

\subsection{Counting orbits}
\label{app:measurement-counting}

The last ingredient to our proof is the analogue of \cref{lem:orbitHaarNull}.
Compared with the unitary case, the right factor is only a diagonal unitary, so the orbit is even smaller.

\begin{lemma}[Measurement orbits are Haar-null]
	\label{lem:orbitHaarNull-pvm}
	Let $d \geq 2$ and $M \in \mathrm{U}(d^2)$.
	Then $\OrbM(M)$ has Haar measure zero in $\mathrm{U}(d^2)$.
\end{lemma}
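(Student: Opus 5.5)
The plan is to realize $\OrbM(M)$ as the image of a smooth map from a compact manifold of dimension strictly smaller than $\dim \mathrm{U}(d^2) = d^4$, and then argue that such an image is Haar-null. This mirrors \cref{lem:orbitHaarNull}, but we make the measure-zero step explicit rather than relying only on an informal dimension count.

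Consider the map
\[
\Psi_M : \mathrm{U}(d) \times \mathrm{U}(d) \times \mathrm{U}(1)^{d^2} \to \mathrm{U}(d^2), \qquad (L_A, L_B, \Delta) \mapsto (L_A \otimes L_B)\, M \Delta .
\]
Its image is exactly $\OrbM(M)$ by \eqref{eq:orbit-pvm}. The map is polynomial in the matrix entries, hence smooth.

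The domain has real dimension $2d^2 + d^2 = 3d^2$. One could subtract one for the redundant global phase shared between $L_A$, $L_B$ and $\Delta$, but we do not need this. For $d \geq 2$ we have $3d^2 < d^4$, since $d^2 \geq 4 > 3$. For $d = 2$ the comparison is $12 < 16$.

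The measure-theoretic step is the only point requiring care. We use the standard fact that a $C^1$ map from a manifold of dimension $k$ into a manifold of dimension $N > k$ has image of Lebesgue measure zero in every chart; this is the easy case of Sard's theorem, or a direct consequence of local Lipschitz estimates. The Haar measure on $\mathrm{U}(d^2)$ is given by a smooth positive density in local coordinates, i.e.\ it is the normalized Riemannian volume of a bi-invariant metric. Hence sets of Lebesgue measure zero in every chart are Haar-null.

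To apply this cleanly, we cover the compact domain by finitely many charts. On each chart, $\Psi_M$ is Lipschitz, so its image of a cube of side $\varepsilon$ fits in a ball of radius $O(\varepsilon)$. Covering each chart by $O(\varepsilon^{-3d^2})$ such cubes gives total volume $O(\varepsilon^{d^4 - 3d^2}) \to 0$.

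We expect no real obstacle. The only mildly delicate point is justifying that ``lower-dimensional smooth image'' implies ``Haar measure zero'', which the covering argument above settles. Alternatively, one can note that $\OrbM(M)$ is contained in the unitary orbit $\Orb(M)$ from \cref{def:orbit}, since diagonal unitaries are not local in general. This containment does not hold directly, so we use the self-contained dimension count instead.
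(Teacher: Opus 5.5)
Your proof is correct and follows the paper's route: realize $\OrbM(M)$ as the image of a smooth map from a compact manifold of dimension less than $d^4$, then conclude it is Haar-null. The paper first absorbs the phases to work on $\mathrm{SU}(d)^2 \times \mathrm{U}(1)^{d^2}$ of dimension $3d^2-2$, but your bound $3d^2 < d^4$ works equally well, and your covering argument makes explicit the measure-zero step that the paper only cites. The closing aside about $\Orb(M)$ contradicts itself and should be deleted: that containment is false in general (a diagonal unitary need not be local), and your proof does not use it.
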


\begin{proof}
	Write $L_X = e^{i\alpha_X} L_X'$ with $L_X' \in \mathrm{SU}(d)$ for $X = A, B$.
	The scalar $e^{i(\alpha_A + \alpha_B)}$ can be absorbed into $\Delta$.
	Therefore $\OrbM(M)$ is the image of the polynomial map
	\begin{equation*}
		\mathrm{SU}(d)^2 \times \mathrm{U}(1)^{d^2} \to \mathrm{U}(d^2),
		\qquad
		(L_A', L_B', \Delta) \mapsto (L_A' \otimes L_B')\, M \Delta .
	\end{equation*}
	Its domain has dimension $2(d^2 - 1) + d^2 = 3d^2 - 2$, and we compare to the dimension of the full space
	\begin{equation*}
		\dim \mathrm{U}(d^2) - (3d^2 - 2) = d^4 - 3d^2 + 2 = (d^2 - 1)(d^2 - 2) > 0
		\qquad \text{for } d \geq 2 .
	\end{equation*}
	As in the proof of \cref{lem:orbitHaarNull}, the image of a smooth map from a manifold of smaller dimension is Haar-null.
\end{proof}

\begin{proof}[Proof of \cref{thm:measHaarNull}]
	(i) Fix $\arch$.
	By \cref{lem:algebraic-pvm} and \cref{fact:components}, $\strategiesM_\arch$ has finitely many connected components $\compo{\strategiesM_\arch}{1}, \dots, \compo{\strategiesM_\arch}{r}$.
	Pick $c_j \in \compo{\strategiesM_\arch}{j}$.
	By \cref{prop:component-pvm}, $M(\compo{\strategiesM_\arch}{j}) \subseteq \OrbM(M(c_j))$.
	Conversely, elements in the same orbit are obviously path-connected, so $\OrbM(M(c_j)) \subseteq M(\strategiesM_\arch)$, so we get
	\begin{equation*}
		M(\strategiesM_\arch) = \bigcup_{j=1}^{r} \OrbM(M(c_j)) ,
	\end{equation*}
	
	(ii) \cref{lem:algebraic-pvm}(ii) writes $\BadM$ as the union of these finite unions over the countably many $\arch$.
	Each orbit is Haar-null by \cref{lem:orbitHaarNull-pvm}, so $\BadM$ is Haar-null by countable subadditivity.
\end{proof}

\end{document}